\newcommand{\eps}{\epsilon}
\newcommand{\ord}{\begin{cal}O\end{cal}}
\newcommand{\rd}{\mathrm{d}}
\def\beq{\begin{equation}}
\def\eeq{\end{equation}}
\def\bsp#1\esp{\begin{split}#1\end{split}}
\newenvironment{sloppyequation}[0]{\sloppy\begin{flushleft}\hspace*{0.75cm}\(}{\)\end{flushleft}\fussy}
\newcommand{\beqsloppy}{\begin{sloppyequation}}
\newcommand{\eeqsloppy}{\end{sloppyequation}}
\newcommand{\cD}{\begin{cal}D\end{cal}}
\newcommand{\cH}{\begin{cal}H\end{cal}}
\newcommand{\cK}{\begin{cal}K\end{cal}}
\newcommand{\cM}{\begin{cal}M\end{cal}}
\newcommand{\cT}{\begin{cal}T\end{cal}}
\newcommand{\HH}{\mathbb{H}}
\newcommand{\PP}{\mathbb{P}}
\newcommand{\ZZ}{\mathbb{Z}}
\newcommand{\QQ}{\mathbb{Q}}
\newcommand{\RR}{\mathbb{R}}
\newcommand{\CC}{\mathbb{C}}
\theoremstyle{definition}
\newtheorem{corollary}{Corollary}
\newtheorem{lemma}{Lemma}
\newcommand{\uz}{{z}}
\newcommand{\uPi}{{\Pi}}
\newcommand{\ut}{{t}}
\DeclareMathOperator{\Hom}{Hom}
\DeclareMathOperator{\Imag}{Im}
\DeclareMathOperator{\GL}{GL}
\DeclareMathOperator{\SO}{SO}
\DeclareMathOperator{\OO}{O}
\DeclareMathOperator{\SU}{SU}
\DeclareMathOperator{\NS}{NS}
\DeclareMathOperator{\T}{T}
\DeclareMathOperator{\N}{N}
\DeclareMathOperator{\Sp}{Sp}
\DeclareMathOperator{\SL}{SL}
\DeclareMathOperator{\Ker}{Ker}
\DeclareMathOperator{\so}{\mathfrak{so}}
\DeclareMathOperator{\sll}{\mathfrak{sl}}
\DeclareMathOperator{\spl}{\mathfrak{sp}}
\DeclareMathOperator{\diag}{diag}
\DeclareMathOperator{\im}{im}
\DeclareMathOperator{\him}{hm}
\DeclareMathOperator{\pf}{\mathfrak{p}}
\DeclareRobustCommand*{\mfaktor}[3][]
{
   { \mathpalette{\mfaktor@impl@}{{#1}{#2}{#3}} }
}
\newcommand*{\mfaktor@impl@}[2]{\mfaktor@impl#1#2}
\newcommand*{\mfaktor@impl}[4]{
   \settoheight{\faktor@zaehlerhoehe}{\ensuremath{#1#2{#3}}}%
   \settoheight{\faktor@nennerhoehe}{\ensuremath{#1#2{#4}}}%
      \raisebox{-0.5\faktor@zaehlerhoehe}{\ensuremath{#1#2{#3}}}%
      \mkern-4mu\diagdown\mkern-5mu%
      \raisebox{0.5\faktor@nennerhoehe}{\ensuremath{#1#2{#4}}}%
}
\newenvironment{mymatrix}[0]{\left(\begin{smallmatrix}}{\end{smallmatrix}\right)}
\newcommand{\ph}{\phantom{-}}
\title{Modular forms for three-loop banana integrals}
\author[a]{Claude Duhr}
\emailAdd{cduhr@uni-bonn.de}
\affiliation[a]{Bethe Center for Theoretical Physics, Universität Bonn, D-53115, Germany
}
\abstract{We study periods of multi-parameter families of K3 surfaces, which are relevant to compute the maximal cuts of certain classes of Feynman integrals. We focus on their automorphic properties, and we show that generically the periods define orthogonal modular forms. Using exceptional isomorphisms between Lie groups of small rank, we show how one can use the intersection product on the periods to identify K3 surfaces whose periods can be expressed in terms of other classes of modular forms that have been studied in the mathematics literature. We  apply our results to maximal cuts of three-loop banana integrals, and we show that depending on the mass configuration, the maximal cuts define ordinary modular forms or Hilbert, Siegel or hermitian modular forms. 
}
\begin{document}

\preprint{BONN-TH-2025-04}

\maketitle


\section{Introduction}
\label{sec:intro}

Feynman integrals are the building blocks for many computations in perturbative Quantum Field Theory (QFT). Therefore, the study of their analytic properties and the development of efficient techniques for their computation is a very active area of research in modern theoretical high-energy physics. It is well known that the functions that arise from Feynman integrals are closely related to quantities of interest also in modern mathematics, specifically in number theory and algebraic geometry. The simplest class of transcendental functions one encounters are multiple polylogarithms~\cite{Mpls1,Mpls2,Remiddi:1999ew}. This class of special functions is well understood, and many Feynman integrals can expressed in terms of them.

It has been known since the early days of QFT that not all Feynman integrals can be expressed in terms of polylogarithms, and it was observed that the higher loop corrections to the electron propagator involve integrals of elliptic type~\cite{Sabry}. In particular, the maximal cuts of the two-loop sunrise integral compute the periods of a family of elliptic curves~\cite{Caffo:1998du,Laporta:2004rb,Laporta:2008sx,Muller-Stach:2011qkg,Muller-Stach:2012tgj}. However, it took until 2015 for the relevant class of transcendental functions to compute the sunrise integral to be identified as elliptic polylogarithms~\cite{ell2,LevinRacinet,MR1265553,BrownLevin,Broedel:2014vla,ell15,EnriquezZerbini}. Soon after that, it was realised that the two-loop sunrise integral with equal masses can also be expressed in terms of iterated integrals of modular forms~\cite{ManinModular,Brown2014MultipleMV,Adams:2017ejb,ell14}, and the period of the family of elliptic curves defines a modular form of weight one for a certain congruence subgroup. By now, Feynman integrals that evaluate to functions of elliptic type are starting to be relatively well understood, with numerous results published over the last few years, see for example ref.~\cite{Bourjaily:2022bwx}.

The special functions that arise from Feynman integrals are not restricted to polylogarithms and their generalisations to elliptic curves. There are two natural ways to go beyond the geometry of elliptic curves for Feynman integrals.\footnote{Recently, also a geometry of special Fano type was identified in the context of Feynman integrals, which goes beyond the cases discussed here~\cite{Schimmrigk:2024xid}.} On the one hand, elliptic curves are Riemann surfaces of genus one, and indeed there are Feynman integrals whose associated geometry are Riemann surfaces of higher genus~\cite{Huang:2013kh,Hauenstein:2014mda,Doran:2023yzu,Abreu:2024jde}. The corresponding maximal cuts then typically evaluate to the periods of the Riemann surface, which can be expressed in terms of Siegel modular forms. The full Feynman integral is expected to expressible in terms of higher-genus generalisations of polylogarithms~\cite{Enriquez_hyperelliptic,DHoker:2023vax,DHoker:2024ozn,Baune:2024biq,Baune:2024ber,DHoker:2025szl,Ichikawa} or iterated integrals that involve Siegel modular forms~\cite{Duhr:2024uid}. On the other hand, elliptic curves are one-dimensional Calabi-Yau (CY) varieties, and there are also many known examples of Feynman integrals related to higher-dimensional CY varieties, cf.~\cite{Brown:2010bw,Bourjaily:2018ycu,Bourjaily:2018yfy,Bourjaily:2019hmc}. Most notably, the $L$-loop banana integral with massive propagators is associated to a CY $(L-1)$-fold~\cite{Bloch:2014qca,MR3780269,Klemm:2019dbm,Bonisch:2020qmm,Bonisch:2021yfw}, and functions related to CY varieties also enter computations for gravitational wave physics~\cite{Driesse:2024feo,Klemm:2024wtd,Frellesvig:2023bbf,Frellesvig:2024rea,Frellesvig:2024zph,Dlapa:2024cje}, the photon self-energy in QED~\cite{Forner:2024ojj} and certain correlations functions in two-dimensional conformal field theories~\cite{Duhr:2022pch,Duhr:2023eld,Duhr:2024hjf}. However, the understanding of the classes of functions that arise from CY varieties is not at the same level as for elliptic curves, though for one-parameter families of CY varieties it is understood how to find appropriate differential equations for the Feynman integrals~\cite{Pogel:2022ken,Pogel:2022vat,Pogel:2022yat,Gorges:2023zgv}. A notable exception are one-parameter families of K3 surfaces. It was observed in ref.~\cite{Primo:2017ipr} that the maximal cuts of the equal-mass three-loop banana integral can be expressed as a product of elliptic integrals, or equivalently, as a square of a modular form. Based on this observation, it  was possible to obtain complete analytic results for the three-loop equal-mass banana integral in dimensional regularisation in terms of iterated integrals of meromorphic and magnetic modular forms~\cite{Broedel:2019kmn,Broedel:2021zij,Pogel:2022yat,magnetic1,magnetic3,Bonisch:2024nru}. 

The main goal of this paper is to take first steps towards understanding Feynman integrals associated to K3 surfaces depending on more than one parameter. Experience from the past shows that, in order to identify and understand the relevant classes of functions, it is extremely useful to start by focusing on the maximal cuts. Indeed, the maximal cuts are expected to compute the (quasi-)periods of the underlying geometry, and non-maximal cuts are typically expressed in terms of iterated integrals~\cite{ChenSymbol} over kernels involving these periods. In almost all modern techniques to derive $\eps$-factorised differential equations for Feynman integral~\cite{Henn:2013pwa} beyond the polylogarithmic case, the maximal cuts and the periods play a distinguished role~\cite{Pogel:2022ken,Pogel:2022vat,Pogel:2022yat,Gorges:2023zgv,Dlapa:2022wdu}. For this reason, we focus in this paper on understanding the classes of functions that arise as periods of families of K3 surfaces. Our main observation is that most concepts known from elliptic curves find a natural generalisation to the K3 case. In particular, the periods can always be identified with a class of special functions known as orthogonal modular forms in the literature~\cite{bruinierbook,orthogonal_PhD,WANG2020107332,Wang_2021,Schaps2022FourierCO,Schaps2023,Assaf:2022aa}. 

Recently, it has been observed that in some cases there is not a single geometry one can attach to some Feynman integral, but there may be more than one class of geometries that can be used to describe the periods or maximal cuts. For example, there can be relations between maximal cuts that compute periods of Riemann surfaces of different genera and/or CY varieties~\cite{Marzucca:2023gto,Duhr:2024hjf,Jockers:2024uan}. 
After having identified the periods of K3 surfaces as orthogonal modular forms, the natural question arises if these periods, or the maximal cuts they compute, can be expressed in terms of other classes of functions. We have already mentioned that for one-parameter families of K3 surfaces, the relevant maximal cuts can always be written as the square of a modular form. The second main result of this paper is a systematic study when K3 periods can be expressed in terms of other classes of modular forms. We show that this question has a natural answer via the well-known exceptional isomorphisms between classical Lie groups of small rank. The relevant Lie groups arise for K3 surfaces as the orthogonal groups that preserve the intersection pairing on the middle cohomology, and we identify cases where the K3 periods can be expressed in terms or ordinary modular forms or Hilbert, Siegel or hermitian modular forms. 
Finally in the last part of this paper, we apply these results to the three-loop banana integral. Remarkably, we find that the maximal cuts of the banana integral always correspond to those cases where the K3 periods can be expressed in terms of other classes of modular forms, and we identify the relevant classes for all possible mass configurations.

This paper is organised as follows: In section~\ref{sec:CYs} we provide a short review of CY varieties and their periods, and we also review the automorphic properties in the case of families of elliptic curves. In section~\ref{sec:K3_review} we focus on families of K3 surfaces, and we explain why the periods (and the mirror map) are orthogonal modular forms. In section~\ref{sec:isom_K3} we show that for families of K3 surfaces depending on a small number of parameters, the periods can be expressed in terms of other classes of modular forms, and we identify the relevant functions space in several cases. In section~\ref{sec:bananas} we apply these results to classify the classes of modular forms that arise from the maximal cuts of three-loop banana integrals. Finally in section~\ref{sec:conclusions} we draw our conclusions. We include several appendices where we provide mathematical proofs omitted throughout the main text.


\section{Calabi-Yau varieties and their periods}
\label{sec:CYs}

\subsection{Brief review of Calabi-Yau varieties}

A Calabi-Yau (CY) $n$-fold is an $n$-dimensional complex K\"ahler manifold $X$ with vanishing first Chern class. The main focus of this paper are CY twofolds, also known as K3 surfaces. We keep the discussion in this first section general, as all concepts equally apply to all values of $n$.

The cohomology groups of $X$ carry a canonical Hodge structure. We are interested in the middle cohomology, which admits the decomposition
\beq
H^n(X,\mathbb{C}) = \bigoplus_{p+q=n}H^{p,q}(X)\,, \qquad h^{p,q} = \dim H^{p,q}(X)\,,
\eeq
where $H^{p,q}(X)$ is the space of all cohomology classes of $(p,q)$-forms on $X$, i.e., the space of all differential forms involving exactly $p$ holomorphic and $q$ antiholomophic differentials. 

The vanishing of the first Chern class implies the existence of a unique holomorphic $(n,0)$-form $\Omega$, which defines a non-trivial class in $H^{n,0}(X)$. A \emph{period} of $X$ is defined by integrating $\Omega$ over an $n$-dimensional cycle in $X$. If we fix a basis $\Gamma_1,\ldots,\Gamma_{b_n}$ of $H_n(X,\mathbb{Z})$ (with $b_k=\dim H_k(X,\mathbb{Z})$ the Betti numbers), we can define a vector of periods
\beq\label{eq:uPi_def}
\uPi(\uz) = \Big(\int_{\Gamma_1}\!\!\Omega,\ldots,\int_{\Gamma_{b}}\!\!\!\Omega\Big)^T\,,
\eeq
where $b\le b_n$ is the number of non-vanishing period integrals. For $n\neq2$, we have $b=b_n$, while for $n=2$, $b$ is the number of transcendental cycles (we will come back to this point in more detail in section~\ref{sec:K3}). Periods do not only capture important information about the geometry of CY varieties, but their knowledge is also essential to compute Feynman integrals attached to CY varieties, like the banana integrals~\cite{MR3780269,Bloch:2014qca,Primo:2017ipr,Broedel:2019kmn,Broedel:2021zij,Klemm:2019dbm,Bonisch:2021yfw,Bonisch:2020qmm,Bezuglov:2021jou,Pogel:2022yat,Pogel:2022vat,Pogel:2022ken,Kreimer:2022fxm}, the icecone integrals~\cite{Duhr:2022dxb}, the traintrack integrals and their generalisations~\cite{Bourjaily:2018ycu,Bourjaily:2018yfy,Bourjaily:2019hmc,Vergu:2020uur,McLeod:2023doa}, or certain integrals that contribute to gravitational wave scattering~\cite{Dlapa:2024cje,Frellesvig:2023bbf,Frellesvig:2024zph,Frellesvig:2024rea,Klemm:2024wtd,Driesse:2024feo}. In particular, the maximal cuts of those integrals evaluate to periods of the underlying CY variety. In some instances, the periods even furnish the complete answer to the Feynman integral~\cite{Duhr:2022pch,Duhr:2023eld,Duhr:2024hjf}. 

Feynman integrals usually depend on a number of external scales.
Correspondingly, we will be interested in families of CY $n$-folds characterised by $m$ independent complex structure moduli $\uz$. The periods will then be functions of $\uz$.
The moduli space $\cM$ of complex structure deformations for families of CY varieties is always unobstructed, and one can show that for $n>2$, we always have $m=h^{n-1,1}$~\cite{MR915841,MR1027500}. The extension of this statement for $n=2$ will be discussed in section~\ref{sec:K3_review}.
Finding an explicit basis of $n$-cycles and performing the integrals in eq.~\eqref{eq:uPi_def} is usually a monumental task. It is typically easier to obtain the periods as solutions to the so-called \emph{Picard-Fuchs differential ideal}, which expresses the flatness of the Gauss-Manin connection on $X$. For one-parameter families, $m=1$, the Picard-Fuchs ideal is generated by a single ordinary differential operator, called the \emph{Picard-Fuchs operator} of $X$.

The cohomology group $H^n(X,\mathbb{Z})$ comes with an additional structure. The intersection pairing between $n$-cycles in $X$ induces a bilinear pairing 
\beq\label{eq:Q_def}
Q: H^n(X,\mathbb{Z})\times H^n(X,\mathbb{Z}) \to \mathbb{Z}\,.
\eeq
$H^n(X,\mathbb{Z})$ is a \emph{lattice}, i.e., a $\mathbb{Z}$-module together with an integer-valued bilinear pairing.
The bilinear pairing has the properties that for $\omega^{p,q}\in H^{p,q}(X)$ and $\omega^{r,s}\in H^{r,s}(X)$, with $p+q=r+s=n$, we have
\beq\bsp
Q(\omega^{p,q},\omega^{r,s})  &\,= (-1)^n\,Q(\omega^{r,s},\omega^{p,q})  \,,\\
Q(\omega^{p,q},\omega^{r,s})  &\,= 0\,, \qquad \textrm{unless } p=s \textrm{ and } q=r  \,,\\
i^{p-q}\,Q(\omega^{p,q},\overline{\omega^{p,q}})  &\,>0 \,.
\esp\eeq
The bilinear pairing $Q$ defines a canonical \emph{polarisation} on the Hodge structure carried by the cohomology group $H^n(X,\ZZ)$. For our choice of basis in eq.~\eqref{eq:uPi_def}, the existence of the polarisation implies the \emph{Hodge-Riemann bilinear relations}, which generalise the well-known Riemann bilinear relations for Riemann surfaces:
\begin{align}\label{eq:Hodge-Riemann-1}
\uPi(z)^T\Sigma\uPi(z) = 0\,,\\
\label{eq:Hodge-Riemann-2}
i^{n^2}\uPi(z)^\dagger\Sigma\uPi(z) > 0\,.
\end{align}
The matrix $\Sigma$ can be related to the Gram matrix of the intersection pairing $\cdot\cap\cdot : H_n(X,\mathbb{Z})\times H_n(X,\mathbb{Z})\to \mathbb{Z}$ between the basis cycles,
\beq\label{eq:sigma_def}
\big(\Sigma^{-1}\big)_{ij} = \Gamma_j\cap\Gamma_i\,.
\eeq
Poincar\'e duality implies that the entries of $\Sigma$ are integers. 
Note that we have $\Sigma^T = (-1)^n\Sigma$, and as a consequence eq.~\eqref{eq:Hodge-Riemann-1} is trivially satisfied for $n$ odd. 

The periods are typically not single-valued functions of the complex structure moduli $\uz$, but they develop a non-trivial monodromy as $\uz$ is varied along a closed loop $\gamma$ encircling one of the singular divisors in the moduli space. We obtain in this way an action of $\pi_1(\cM)$ on the vector of periods. In the basis in eq.~\eqref{eq:uPi_def}, this defines a representation $\rho: \pi_1(\cM) \to \GL(b,\ZZ)$
where $\GL({b},R)$ denotes the multiplicative group of invertible $b\times b$ matrices with entries in a ring $R$.  The monodromy group is then defined as the image of $\pi_1(\cM)$ in $\GL(b,\ZZ)$:
\beq
G_{\!M} := \rho\big(\pi_1(\cM)\big)\subseteq \GL(b,\ZZ)\,.
\eeq

The intersection pairing is monodromy-invariant, which puts strong constraints on the possible form of the monodromy group $G_{\!M}$. In particular, $G_{\!M}$ must be a subgroup of the orthogonal group $\OO(\Sigma,\mathbb{Z})$, where for a ring $R$ we define
\beq\label{eq:O_Sigma_R_def}
\OO(\Sigma,R) := \big\{M\in \GL(b,R): M^T\Sigma M=\Sigma\big\}\,.
\eeq
Note that all elements of $\OO(\Sigma,R)$ have determinant $\pm1$.
We define $\SO(\Sigma,R)$ as the subgroup of $\OO(\Sigma,R)$ with unit detemrinant,
\beq
\SO(\Sigma,R) := \big\{M\in \OO(\Sigma,R): \det M=1\big\}\,.
\eeq
For $n$ odd we can always find a basis in which $\Sigma = J_b$ is the Gram matrix of the standard symplectic pairing, 
\beq
J_b :=\left(\begin{smallmatrix} 0 & -\mathds{1} \\ \mathds{1} & 0\end{smallmatrix}\right)\,.
\eeq
Hence, for $n$ odd the group $\SO(\Sigma,R)$ is isomorphic the symplectic group,
\beq\label{eq:Sp_def}
\SO(\Sigma,R) \simeq \Sp(b,R) = \Big\{M\in\GL(b,R): M^TJ_bM=J_b\Big\}\,.
\eeq

The analytic structure of the periods is most easily described in the case when the moduli space of complex structure deformations has a point of \emph{maximal unipotent monodromy} (MUM). In that case there is a basis where precisely $h^{n-q,q}$ periods diverge as the $q^{\textrm{th}}$ power of a logarithm as we approach the MUM-point (which, without loss of generality, we assume to be at $\uz=0$). In particular, there is a unique distinguished solution that is holomorphic at the MUM-point, and $m$ different solutions that diverge like a single power of a logarithm. We normalise them according to
\beq
\label{eq:MUM-basis}
\Pi_0(\uz) = (2\pi i)^2\big(1+\ord(z_i)\big)\,,\qquad \Pi_k(\uz) = \Pi_0(\uz)\,\frac{\log z_k}{2\pi i} + \ord(z_l)\,,\qquad k=1,\ldots,m\,.
\eeq
Later on it will be useful to consider the vector $\uPi^{(1)}(z) := \big(\Pi_{m}(z),\ldots,\Pi_1(z)\big)^T$.
Using these distinguished periods, we can define canonical coordinates $q_k = \exp(2\pi i t_k) = z_k+\ord(z_l^2)$ on the moduli space in a neighborhood of the MUM-point by
\beq\label{eq:t_def}
t_k(\uz) = \frac{\Pi_k(\uz)}{\Pi_0(\uz)} = \frac{\log z_k}{2\pi i} + \ord(z_l)\,.
\eeq
Their inverse is the \emph{mirror map} $z_k(\ut) = q_k + \ord(q_l^2)$. Inserting the mirror map into the holomorphic period $\Pi_0(\uz)$, we can write $\Pi_0$ as a holomorphic function of the $q_k$:
\beq\label{eq:Pi0_t}
\Pi_0(\ut) := \Pi_0(\uz(\ut)) = 1 +\ord(q_k)\,.
\eeq
Note that we use the same notation to refer to the holomorphic period as a function of $z$ or $t$, because typically no confusion arises.
More generally, if $f(z)$ is a holomorphic function in a neighborhood of the MUM-point $z=0$, then we use the notation $f(t) := f(z(t))$.

\subsection{Automorphic properties of periods: a motivational example}
\label{sec:automorphic}

A main point of this paper is that we can use the structure the group $\OO(\Sigma,\ZZ)$, which contains in particular the monodromy group, to constrain the functional form for the periods.
Let us illustrate this on the simplest possible example of a CY one-fold, i.e., a family of elliptic curves described by a single modulus $z\in\PP^1(\CC)\setminus S = \cM$, with $S$ a finite set of points where the elliptic curve is singular. We can always parametrise the periods and the mirror map by modular forms. This result is not new, 
but it serves as a motivation and an illustrative example of what we want to achieve for families of K3 surfaces in the remainder of this paper.

For a family of elliptic curves the pairing $\Sigma$ is symplectic, and we can find a basis of periods such that $\Sigma=J_2$. The Hodge-Riemann bilinear relations reduce to the well-known Riemann bilinear relations for elliptic curves. For $n=1$ eq.~\eqref{eq:Hodge-Riemann-1} is trivially satisfied, while eq.~\eqref{eq:Hodge-Riemann-2} reduces to 
\beq
i\,\uPi(z)^\dagger\Sigma\uPi(z) = 2\,\big|\Pi_0(z)\big|^2\,\Imag \tau>0\,,
\eeq
We recover the well-known fact from the theory of elliptic curves that the modular parameter\footnote{For elliptic curves, it is conventional to call the modular parameter $\tau$ rather than $t_1$, cf.~eq.~\eqref{eq:t_def}.} $\tau=\tfrac{\Pi_1(z)}{\Pi_0(z)}$ lies in the complex upper half plane $\HH:=\{\tau\in\CC:\Imag\tau>0\}$.
Since the intersection pairing is symplectic, it is invariant under the symplectic group $\Sp(2,\mathbb{Z})= \SL(2,\ZZ)$, which acts via M\"obius transformations on the modular parameter:
\beq\label{eq:Moebius}
\tau = \frac{\Pi_1(z)}{\Pi_0(z)} \to \frac{a\, \Pi_1(z)+b\,\Pi_0(z)}{c \,\Pi_1(z)+d\,\Pi_0(z)} = \frac{a\, \tau+b}{c \,\tau+d} =:\gamma\cdot \tau\,.
\eeq
Note that $\SL(2,\ZZ)$ acts linearly on the periods, but it acts non-linearly on $\tau$. The quotient of $\HH$ by  $\SL(2,\ZZ)$ defines the moduli space of elliptic curves
\beq\label{eq:ell_moduli}
\cM_{\textrm{ell}}  = \mfaktor{\SL(2,\ZZ)}{\HH}\,.
\eeq
Said differently, every point in $\cM_{\textrm{ell}}$ represents a distinct elliptic curve.

The moduli space $\cM_{\textrm{ell}}$ is typically too small to coincide with the moduli space $\cM$ of our family.
The orientation-preserving part $G_{\!M}^+$ of the monodromy group $G_{\!M}$ is always a finite-index subgroup of the symplectic group $\SL(2,\mathbb{Z})$.\footnote{We recall that a subgroup $H\subseteq G$ has finite index if the number of cosets $[G:H] := \big|\faktor{G}{H}\big|$ is finite. See appendix~\ref{app:groups} for a review.} Typically, the monodromy group is a \emph{congruence subgroup} of level $N$ of $\SL(2,\ZZ)$, i.e., a subgroup of finite index which contains the principal congruence subgroup of level $N$,
\beq
\Gamma(N)=\left\{M\in\SL(2,\ZZ) : M = \mathds{1}\!\!\!\!\mod N\right\}\,.
\eeq
When expressed through the modular parameter as in eq.~\eqref{eq:Pi0_t}, the holomorphic period must transform as:
\beq
\Pi_0(\tau) \to \Pi_0\!\left(\tfrac{a\,\tau+b}{c\,\tau+d}\right) =  c\,\Pi_1(z) +  d\,\Pi_0(z) = j_{\textrm{ell}}(\gamma,\tau)\,\Pi_0(\tau)\,,
\eeq
where we defined 
\beq\label{eq:j_ell_def}
j_{\textrm{ell}}(\gamma,\tau) = (c\,\tau+d)\,.
\eeq
This last equation identifies $\Pi_0(\tau)$ as a modular form of weight one for $G_{\!M}^+$. Similarly, since $z$ is clearly monodromy-invariant, the mirror map $z(\tau)$ must be $G_{\!M}^+$-invariant, and so it defines a modular function for $G_{\!M}^+$ (in fact, a Hauptmodul if $\cM$ is a punctured Riemann sphere). Different values of $\tau\in\mathbb{H}$ may give rise to the same point $z(\tau)\in\cM$, and we have the identification
\beq\label{eq:cM_quotient_elliptic}
\cM = \PP^1(\CC)\setminus S \simeq \mfaktor{G_{\!M}^+}{\HH}\,.
\eeq
Let us comment on the relationship between the moduli space of our family and the moduli space of elliptic curves $\cM_{\textrm{ell}}$.
Since $G_{\!M}^+$ has finite index in $\SL(2,\ZZ)$, $\cM$ is a finite cover of $\cM_{\textrm{ell}}$, and the degree of the covering is precisely the index of $G_{\!M}^+$ in $\SL(2,\ZZ)$. 


Let us summarise the main points:
\begin{enumerate}
\item $\tau$ defines a canonical coordinate on the moduli space $\cM$ of the family. The Hodge-Riemann bilinear relations restrict the domain for $\tau$ to be the complex upper half-plane $\HH$. 
\item The orientation preserving part $G_{\!M}^+$ of the monodromy group is a subgroup of $\SO(\Sigma,\ZZ)\simeq \Sp(2,\ZZ)\simeq\SL(2,\ZZ)$ and acts on $\HH$ via M\"obius transformations.
\item The moduli space of all distinct elliptic curves is the quotient $\cM_{\textrm{ell}}$ in eq.~\eqref{eq:ell_moduli}.
\item Our moduli space $\cM$ of complex structure deformations can be recovered from $\HH$ via the quotient in eq.~\eqref{eq:cM_quotient_elliptic}, and it is a finite cover of $\cM_{\textrm{ell}}$.
\item The holomorphic period $\Pi_0(\tau)$ and the mirror map $z(\tau)$ are respectively a modular form of weight one and a modular function for $G_{\!M}^+$.
\end{enumerate}

This story is well known, and also plays an important role when studying Feynman integrals, in particular differential equations satisfied by Feynman integrals whose maximal cuts are attached to families of elliptic curves, cf.,~e.g.,~refs.~\cite{Adams:2017ejb,Adams:2018yfj,Broedel:2018rwm,Gorges:2023zgv}. For higher-dimensional CY varieties, the story is typically more complicated. A notable exception are families of K3 surfaces. In particular, for one-parameter families of K3 surfaces it is always possible to express the periods as products of periods of a family of elliptic curves~\cite{doran,BognerThesis,BognerCY}, and this result has played a crucial role in determining several instances of Feynman integrals attached to one-parameter families of K3 surfaces~\cite{MR3780269,Bloch:2014qca,Broedel:2019kmn,Broedel:2021zij,Pogel:2022yat,Klemm:2024wtd,Forner:2024ojj,Driesse:2024feo}. The main goal of this paper is to take first steps in laying out the roadmap of how this story extends to multi-parameter families of K3 surfaces.


\section{K3 surfaces and their periods}
\label{sec:K3_review}

From the example at the end of the previous section, it should be clear that in order to extend the discussion to K3 surfaces, we need to understand the monodromy group, and more specifically the structure of the orthogonal group $\OO(\Sigma,\ZZ)$. Since for K3 surfaces $n=2$ is even, $\Sigma=\Sigma^T$ is symmetric. We therefore start by providing a review of lattices with symmetric bilinear forms and their orthogonal groups, before we return to discussing K3 surfaces and their periods.
For a review of lattices, in particular in the context of K3 surfaces, we refer for example to ref.~\cite{Huybrechts_2016}.

\subsection{Lattices and their orthogonal groups}

\subsubsection{Lattices}
\label{sec:lattices}

A \emph{lattice} is a finitely generated free $\mathbb{Z}$-module $\Lambda$ together with an integer-valued bilinear pairing 
\beq
b: \Lambda\times \Lambda \to \mathbb{Z}\,.
\eeq
It is called \emph{even} if $b(x,x)$ is even for all $x\in\Lambda$. Otherwise it is called \emph{odd}.
We can fix a basis $e:= (e_1,\ldots,e_r)$ and we have
\beq
\Lambda = \bigoplus_{i=1}^r\mathbb{Z}\,e_i\,.
\eeq
The number $r$ of basis elements is called the \emph{rank} of the lattice. The Gram matrix of $\Lambda$ with respect to this basis is
\beq
\Sigma_{\Lambda,e} := \Big(b(e_i,e_j)\Big)_{1\le i,j\le r} = \Sigma_{\Lambda,e}^T \,.
\eeq
Any two bases $e$ and $e'$ are connected by a $\GL(r,\ZZ)$ transformation. Note that if $M\in\GL(r,\ZZ)$, then $\det M = \pm 1$ and the Gram matrices in two basis are connected by $\Sigma_{\Lambda,e'} = M^T\Sigma_{\Lambda,e}M$. We will typically drop the dependence of the Gram matrix on the choice of basis. The \emph{discriminant} of $\Lambda$ is the determinant $\det\Sigma_\Lambda$ of the Gram matrix (it does not depend on the choice of basis). The lattice is called \emph{unimodular} if its discriminant is $\pm1$.

The bilinear form extends to a bilinear form on $\Lambda\otimes \mathbb{R}\simeq \RR^r$, and every bilinear form over the real numbers is characterised by its \emph{signature}, i.e., the number of positive, negative and zero eigenvalues of $\Sigma_{\Lambda}$ (the eigenvalues do not depend on the choice of basis, and they are all real because $\Sigma_{\Lambda}$ is symmetric). In the following we only consider non-degenerate bilinear forms, i.e., without zero eigenvalues. If $\Sigma_{\Lambda}$ has $p$ positive and $q$ negative eigenvalues, we say that $\Lambda$ has signature $(p,q)$.

The dual lattice $\Lambda\!^\vee$ is defined as the free module of all integer-valued linear forms on $\Lambda$,
\beq
\Lambda\!^\vee = \Hom(\Lambda,\ZZ) = \big\{x\in \Lambda\otimes \QQ: b(x,y) \in\ZZ,\textrm{~~for all~~} y\in \Lambda\big\}\,.
\eeq
The dual lattice inherits from $\Lambda$ the ($\QQ$-valued) scalar product. Moreover, since $\Lambda$ and $\Lambda\!^\vee$ are abelian groups, we can form their quotient $A_{\Lambda}:=\faktor{\Lambda\!^\vee\!\!}{\Lambda}$, which is again an abelian group called the \emph{discriminant group}. The rank of the discriminant group is equal to the absolute value of the discriminant of $\Lambda$,
\beq\label{eq:number_A_L}
\left|A_\Lambda\right| = \left|\det\Sigma_{\Lambda}\right|\,.
\eeq
The discriminant group also inherits the scalar product. In particular, if $\Lambda$ is even, we obtain a quadratic form $q_{\Lambda}: A_\Lambda \to \faktor{\QQ}{2\ZZ}$, called the \emph{discriminant form}.

There is a set of natural operations on lattices. The direct sum of two lattices $\Lambda_1$ and $\Lambda_2$ is the lattice whose module is just the direct sum of the two modules, and the Gram matrix is block-diagonal:
\beq
\Sigma_{\Lambda_1\oplus \Lambda_2} = \left(\begin{smallmatrix} \Sigma_{\Lambda_1} &0\\0& \Sigma_{\Lambda_2}\end{smallmatrix}\right)\,.
\eeq
We will use the notation $\Lambda^{\oplus n} = \underbrace{\Lambda\oplus\cdots\oplus\Lambda}_{n}$. For a non-zero integer $n$, we define the scaled lattice $\Lambda(n)$ to be the lattice with the same underlying module $\Lambda$, but the scalar product has been scaled by $n$,
\beq
\Sigma_{\Lambda(n)} = n\,\Sigma_{\Lambda}\,.
\eeq

Let us conclude by giving some examples of lattices that we will encounter throughout this paper. 
The lattice $\langle1\rangle$ is the lattice whose module is just $\ZZ$, and the bilinear form is $b(1,1)=1$. For some non-zero integer $n$, we also define $\langle n\rangle := \langle1\rangle(n)$. the \emph{hyperbolic lattice} $H$ is the rank two lattice with Gram matrix
\beq\label{eq:Sigma_H}
\Sigma_H = \left(\begin{smallmatrix} 0&1\\1&0\end{smallmatrix}\right)\,.
\eeq
Note that $H=H(-1)$ (up to a basis change).
Finally, to every Dynkin diagram $D$ we can associate a lattice. Its underlying module is the free module generated by the simple roots of $D$, and in the basis of simple roots the Gram matrix is given by the Cartan matrix of $D$.

\subsubsection{Orthogonal groups}
\label{sec:orthogonal_groups}
An isometry is a linear map that preserves the bilinear form. The group of all isometries from a lattice $\Lambda$ to itself is the orthogonal group $\OO(\Lambda)$. If $\Sigma$ is the Gram matrix with respect to a basis, then $\OO(\Lambda)= \OO(\Sigma,\ZZ)$, with $\OO(\Sigma,\ZZ)$ defined in eq.~\eqref{eq:O_Sigma_R_def}. We denote by $\SO(\Lambda)$ the subgroup of $\OO(\Lambda)$ of matrices with unit determinant, and $\SO_0(\Lambda) = \SO_0(\Sigma,\ZZ) := \SO(\Sigma,\ZZ)\cap \SO_0(\Sigma,\RR)$ is the subgroup of elements that lie in the connected component of the Lie group $\SO_0(\Sigma,\RR)$ that contains the identity. 

Let $g\in \OO(\Lambda)$. Then $g$ also acts on the dual lattice $\Lambda\!^\vee$. Indeed, consider a dual form $\varphi\in\Lambda\!^\vee$. Then the action on $\varphi$ is given by $(g^T\varphi)(x) = \varphi(g x)$, $x\in\Lambda$. This action also preserves the scalar product on $\Lambda\!^\vee$ (because the latter was induced by the scalar product on $\Lambda$). In other words, we have an inclusion $\OO(\Lambda)\subseteq \OO(\Lambda\!^\vee)$. Moreover, every $g\in \OO(\Lambda\!^\vee)$ determines an element in the orthogonal group $\OO(A_\Lambda)$ that preserves the discriminant form $q_{\Lambda}$. Putting these two inclusions together, we obtain a group homomorphism from $\OO(\Lambda)$ to $\OO(A_{\Lambda})$. Its kernel is called the \emph{discriminant kernel} of $\Lambda$,
\beq\label{eq:disc_kern}
\widetilde{\OO}(\Lambda) := \Ker\!\big(\OO(\Lambda)\to\OO(A_{\Lambda})\big)\,.
\eeq
The discriminant kernel will play an important role in the following, so we summarise some of its properties. First, since $\widetilde{\OO}(\Lambda)$ is the kernel of a group homomorphism, it is a normal subgroup of $\OO(\Lambda)$. Moreover, it has finite index in $\OO(\Lambda)$. Indeed, from eq.~\eqref{eq:number_A_L} we know that $A_{\Lambda}$ is a finite group, and so all elements of $\OO(A_{\Lambda})$ are permutations of the elements of $A_{\Lambda}$. But there are $|\det \Sigma|!$ such permutations, and we have
\beq
[\OO(\Lambda):\widetilde{\OO}(\Lambda)] = \left|\faktor{\OO(\Lambda)}{\widetilde{\OO}(\Lambda)}\right|\le |\OO(A_{\Lambda})| \le |\det\Sigma|!\,,
\eeq
and so $[\OO(\Lambda):\widetilde{\OO}(\Lambda)]$ is finite. Finally, if we restrict to the elements that lie in the connected component of the identity, then there is a very explicit description of the discriminant kernel,
\beq\label{eq:cD_def}
\cD(\Lambda) := \widetilde{\OO}(\Lambda)\cap\SO_0(\Lambda)= \big\{\mathds{1}+M\Sigma\in\SO_0(\Lambda):M\in\ZZ^{n\times n}\big\}\,.
\eeq
In appendix~\ref{app:groups_1} we show that $\cD(\Lambda)$ always has finite index in both $\widetilde{\OO}(\Lambda)$ and $\SO_0(\Lambda)$.

\subsection{The cohomology and periods of K3 surfaces}
\label{sec:K3}

\subsubsection{The middle cohomology and the K3 lattice}
Consider a K3 surface $X$.
The structure of the middle cohomology of $X$ is very constrained. In particular, its dimension is
\beq
b_2 = \dim H^2(X,\mathbb{Z}) = 22\,.
\eeq
We have already seen that the intersection pairing gives $H^2(X,\mathbb{Z})$ a lattice structure. The lattice is even, and Poinar\'e duality implies that the lattice is unimodular. There is a unique even unimodular lattice of rank 22 (called the K3 lattice), and we have
\beq\label{eq:K3lattice}
H^2(X,\mathbb{Z}) \simeq E_8(-1)^{\oplus2}\oplus H^{\oplus3}\,,
\eeq
where $E_8(-1)$ is the (negative of the) lattice spanned by the simple roots of the exceptional Lie algebra $E_8$ and $H$ is the hyperbolic lattice. Said differently, there is a basis of $H^2(X,\mathbb{Z})$ such that the Gram matrix of the intersection pairing takes the block-diagonal form
\beq
\Sigma_{\textrm{K3}}=\left(\begin{smallmatrix} -\Sigma_{E_8} &0&\phantom{-}0&\phantom{-}0&\phantom{-}0\\0&-\Sigma_{E_8} &\phantom{-}0&\phantom{-}0&\phantom{-}0\\0&0&\phantom{-}\Sigma_H&\phantom{-}0&\phantom{-}0\\0&0&\phantom{-}0&\phantom{-}\Sigma_H&\phantom{-}0\\0&0&\phantom{-}0&\phantom{-}0&\phantom{-}\Sigma_H\end{smallmatrix}\right)\,,
\eeq
where $\Sigma_H$ is defined in eq.~\eqref{eq:Sigma_H} and $\Sigma_{E_8}$ is the Cartan matrix of $E_8$:
\beq
\Sigma_{E_8} = \left(\begin{smallmatrix}
\phantom{-}2&-1&\phantom{-}0&\phantom{-}0&\phantom{-}0&\phantom{-}0&\phantom{-}0&\phantom{-}0\\
-1&\phantom{-}2&-1&\phantom{-}0&\phantom{-}0&\phantom{-}0&\phantom{-}0&\phantom{-}0\\
\phantom{-}0&-1&\phantom{-}2&-1&\phantom{-}0&\phantom{-}0&\phantom{-}0&\phantom{-}0\\
\phantom{-}0&\phantom{-}0&-1&\phantom{-}2&-1&\phantom{-}0&\phantom{-}0&\phantom{-}0\\
\phantom{-}0&\phantom{-}0&\phantom{-}0&-1&\phantom{-}2&-1&\phantom{-}0&-1\\
\phantom{-}0&\phantom{-}0&\phantom{-}0&\phantom{-}0&-1&\phantom{-}2&-1&\phantom{-}0\\
\phantom{-}0&\phantom{-}0&\phantom{-}0&\phantom{-}0&\phantom{-}0&-1&\phantom{-}2&\phantom{-}0\\
\phantom{-}0&\phantom{-}0&\phantom{-}0&\phantom{-}0&-1&\phantom{-}0&\phantom{-}0&\phantom{-}2
\end{smallmatrix}\right)\,.
\eeq
It is easy to check that the K3 lattice has signature $(3,19)$.

The \emph{N\'eron-Severi lattice} of $X$ is defined to be the sublattice of those integer cohomology classes that lie in $H^{1,1}(X)$:
\beq
\NS(X) := H^2(X,\mathbb{Z}) \cap H^{1,1}(X)\,.
\eeq
The rank $\rho$ of $\NS(X)$ is called the \emph{Picard rank} of $X$. Note that we necessarily have $0\le \rho\le 20$, and one can show that the signature of the N\'eron-Severi lattice is $(1,\rho-1)$. The orthogonal complement of $\NS(X)$ in $H^2(X,\mathbb{Z})$ is the \emph{transcendental lattice} $\T(X)$, and we have a direct sum decomposition
\beq\label{eq:H2_NS_T}
H^2(X,\mathbb{Z}) = \NS(X)\oplus \T(X)\,.
\eeq
The N\'eron-Severi lattice has another characterisation. Consider the integration map
\beq
\pi: H_2(X,\mathbb{Z})\to \mathbb{C}; \qquad\Gamma\mapsto \int_{\Gamma}\Omega\,.
\eeq
Then $\NS(X)$ is the (dual of the) kernel of $\pi$,
\beq
\NS(X)^* = \Ker \pi\,.
\eeq
The transcendental lattice then corresponds to (the dual of) those cycles from $H_2(X,\mathbb{Z})$ that lead to non-zero periods after integration. The Gram matrix of the transcendental lattice can be identified with the matrix $\Sigma$ in eq.~\eqref{eq:sigma_def} (where we restrict the cycles to the transcendental cycles).

In the following we will be interested in families of K3 surfaces with a given transcendental lattice.
This leads to the notion of lattice-polarised K3 surfaces. More precisely, consider a lattice $\N$. An $\N$-polarised K3 surface is a K3 surface $X$ together with a primitive embedding of $\N$ into the N\'eron-Severi lattice of $X$. In applications, we will typically have access to the transcendental lattice rather than the N\'eron-Severi lattice (because $\T(X)$ corresponds to the intersection pairing $\Sigma$ on the periods, and so we can determine it from the computation of the periods). For K3 surfaces of Picard rank $\rho\ge 12$, there is always a unique\footnote{The embedding is unique up to isomorphism.} primitive embedding of $\T(X)$ into the K3 lattice, and this embedding in turn determines the N\'eron-Severi lattice (cf., e.g., Corollaries 3.5 and 3.6 of ref.~\cite{Huybrechts_2016}). So, for applications with $\rho\ge 12$, we may also think of a lattice-polarised K3 surface $X$ together with a specific transcendental lattice $\T(X)$ with Gram matrix $\Sigma$.

\subsubsection{The moduli space of K3 surfaces}

Let us now consider a family $X$ of K3 surfaces of Picard rank $\rho$ and depending on $m$ independent moduli $z$. We denote the $m$-dimensional moduli space by $\cM$. If we work with a basis that respects the decomposition in eq.~\eqref{eq:H2_NS_T}, then the vector of periods $\uPi(\uz)$ in eq.~\eqref{eq:uPi_def} has $22-\rho$ non-vanishing entries. The non-vanishing entries correspond to cycles in the transcendental lattice $\T$,\footnote{From now on, we will drop the dependence of the transcendental lattice on $X$, and we simply write $\T$ instead of $\T(X)$.} and, since the moduli space of CY varieties is unobstructed~\cite{MR915841,MR1027500}, we have
\beq
\dim \T = 2+m = 22-\rho\,,
\eeq
and the signature of the transcendental lattice is $(2,m) = (2,20-\rho)$. 
We will only consider families with a MUM-point, and $\Pi_0(\uz)$ denotes the unique period that is holomorphic at the MUM-point, while $\Pi_i(\uz)$, with $1\le i\le m$, diverges as a single power of a logarithm as we approach the MUM-point, cf.~eq.~\eqref{eq:MUM-basis}. Consequently, $\Pi_{m+1}(\uz)$ is the period that diverges double-logarithmically close to the MUM-point. In this basis the Gram matrix is given by
\beq
\Sigma = \left(\begin{smallmatrix} 0&0&1\\ 0&S&0\\ 1&0&0\end{smallmatrix}\right)\,,
\eeq
where $S$ is a symmetric matrix. Said differently, in this basis the transcendental lattice takes the form
\beq
\T = H \oplus \Lambda\,,
\eeq
where $H$ is the hyperbolic lattice and $\Lambda$ is an even lattice of signature $(1,m-1)$ and Gram matrix $S$. We could have chosen a different set of cycles that give rise to the same Gram matrix $\Sigma$ (or equivalently, we could have chosen a different basis for the transcendental lattice). If $\Pi'(z)$ is the period vector in this other basis, then the two vectors of periods are related by an orthogonal transformation,
\beq
\Pi'(z) = M\,\Pi(z)\,,\qquad M\in \OO(\T)\,.
\eeq

\paragraph{The period domain and the period map.} From the first Hodge-Riemann bilinear relation in eq.~\eqref{eq:Hodge-Riemann-1} it follows that the double-logarithmic period $\Pi_{m+1}(\uz)$ can be computed from the holomorphic period and the single-logarithmic periods:
\beq\label{eq:Pi_LogLog_general}
\Pi_{m+1}(\uz) = \frac{1}{\,\Pi_0(\uz)}\,S\!\left[\uPi^{(1)}(\uz)\right] = \Pi_0(\uz)\,S[\ut(\uz)]\,,
\eeq
where $t(z) = \Big(t_m(z),\ldots,t_1(z)\Big)^T$, the $t_k(z)$ are defined in eq.~\eqref{eq:t_def}, and we introduced the shorthand for a symmetric matrix $S$ acting on a vector $t$,
\beq
S[t] := -\frac{1}{2}\,t^TSt\,.
\eeq
An easy computation shows that the second Hodge-Riemann bilinear relation in eq.~\eqref{eq:Hodge-Riemann-2} can be cast in the form
\beq\label{eq:HRK3}
y^TSy > 0\,, \textrm{~~~with~~~} y := \Imag t\,.
\eeq

It is convenient to interpret the vector of periods as a point in complex projective space $\CC\PP^{m+1}$ with homogeneous coordinates
\beq
\Big[\Pi_{m+1}(\uz):\Pi^{(1)}(\uz):\Pi_{0}(\uz)\Big]^T = \Big[S[t(\uz)]:t(\uz):1\Big]^T\,.
\eeq
As a consequence of the Hodge-Riemann bilinear relations~\eqref{eq:Hodge-Riemann-1} and~\eqref{eq:Hodge-Riemann-2}, the period vector defines a point on the quadric
\beq\label{eq:per_dom}
D = \Big\{ \pi\in\CC\PP^{m+1}: \pi^T\Sigma\pi = 0\textrm{~~and~~} \pi^{\dagger}\Sigma\pi>0\Big\}\,.
\eeq
The quadric $D$ is called the \emph{period domain} and the map $\Pi:\cM\to D$ is called the \emph{period map}. Note that the period map is surjective for K3 surfaces, i.e., for every point $\pi \in D$ there is $z\in\cM$ such that $\uPi(z)=\pi$.

\paragraph{Orthogonal transformations as conformal transformations.} The orthogonal group $\OO(\T)$ acts linearly on $\CC\PP^{m+1}$.
It is easy to see that the period domain is invariant under the action of the orthogonal group. A point in the period domain is entirely determined by the mirror map $t(z)$, and the linear action of $\OO(\T)$ on $\CC\PP^{m+1}$ induces a non-linear action on $t(z)$. This non-linear action is well known from conformal field theories, where this construction is known as the \emph{embedding space formalism}~\cite{Mack:1969rr,Dirac:1936fq,Boulware:1970ty,Weinberg:2010fx,Simmons-Duffin:2012juh}. We thus conclude that $\OO(\T)$ acts on the $t(z)$  via conformal transformations in $m$-dimensional Minkowski space. We can describe this action explicitly in terms of generalised M\"obius transformations. Consider $g\in \OO(\T)$. We write
\beq
g = \left(\begin{smallmatrix}
\alpha & p^T & \beta\\ u & M  & v \\ \gamma & q^T & \delta\end{smallmatrix}\right)\,, \qquad \alpha,\beta,\gamma,\delta\in \ZZ,\quad p,q,u,v\in \ZZ^m\,,\quad M\in\ZZ^{m\times m}\,,
\eeq
and we define, for $t\in D$,
\beq\label{eq:generalised_Moebius}
g\cdot t := \frac{S[t]\,u + Mt + v}{S[t]\,\gamma + q^Tt + \delta}\,.
\eeq
One can check that this defines an action of $\OO(\T)$ on $D$, i.e., we have $g\cdot t \in D$ and $g_1\cdot(g_2\cdot t) = (g_1g_2)\cdot t$.

\paragraph{The moduli space of lattice-polarised K3 surfaces.} We now describe the moduli space of lattice-polarised K3 surfaces, which is the analogue of the moduli space of elliptic curves $\cM_{\textrm{ell}}$ defined in eq.~\eqref{eq:ell_moduli}. Mathematical details can be found, e.g., in refs.~\cite{Huybrechts_2016} and~\cite{SB_1982-1983__25__251_0}.

We start by noting that the period domain has two connected components, which are exchanged by complex conjugation,
\beq
D = D_+\cup D_-\,,\qquad D_+\cap D_-=\emptyset\,,\qquad D_-=\overline{D}_+\,.
\eeq
We define $\OO^+(\T)$ to be the subgroup of the orthogonal group $\OO(\T)$ that fixes $D_+$,
\beq
\OO^+(\T) := \big\{g\in \OO(\T): g\cdot D_+ = D_+\big\}\,.
\eeq
$\OO^+(\T)$ is clearly a subgroup of index two of $\OO(\T)$. We also define
\beq
\widetilde{\OO}{}^+(\T) := \OO^+(\T) \cap \widetilde{\OO}(\T)\,,
\eeq
where $\widetilde{\OO}(\T)$ is the discriminant kernel defined in eq.~\eqref{eq:disc_kern}.

Consider a lattice $\N$ of signature $(1,\rho-1)$ and its orthogonal complement $\T=\N^{\bot}$.
$\T$ has signature $(2,m)$ and the action of $\OO(\T)$ on $D$ is properly discontinuous (\cite{Huybrechts_2016}, Chapter 6, Remark 1.10). If $\Gamma\subseteq \OO(\T)$ is a torsion free subgroup, then it follows from the Bailey-Borel theorem that the quotient $\mfaktor{\Gamma}{D}$ is a smooth quasi-projective variety (\cite{Huybrechts_2016}, Chapter 6, Theorem 1.13). Such a subgroup always exists (\cite{Huybrechts_2016}, Chapter 6, Proposition 1.11). If $\Gamma$ is not torsion free, then $\mfaktor{\Gamma}{D}$ is still a quasi-projective variety (but not necessarily smooth). We can then take the quotient of the period domain by the discriminant kernel, and we get the moduli space of (pseudo-ample) $\N$-polarised K3 surfaces~\cite{SB_1982-1983__25__251_0},
\beq\label{eq:moduli_space_K3}
\cM_{\textrm{K3}}(\N) \simeq \mfaktor{\widetilde{\OO}(\T)}{D} \simeq \mfaktor{\widetilde{\OO}{}^+(\T)}{D_+}\,.
\eeq
$\cM_{\textrm{K3}}(\N)$ is the analogue for $\N$-polarised K3 surface of the moduli space $\cM_{\textrm{ell}}$ of elliptic curves in eq.~\eqref{eq:ell_moduli}. Said differently, (by the surjectivity of the period map), every point in $\cM_{\textrm{K3}}(\N)$ corresponds to an $\N$-polarised K3 surface. Let us comment on the
appearance of the discriminant kernel $\widetilde{\OO}(\T)$ rather than the orthogonal group $\OO(\T)$. Both the orthogonal group and its discriminant kernel are infinite groups, and their quotient is finite (because $\widetilde{\OO}(\T)$ has finite index in $\OO(\T)$). In other words, $\widetilde{\OO}(\T)$ is almost `as large' as $\OO(\T)$. The elements in the quotient are those that do not act trivially on the discriminant lattice $A_{\T}$. By taking the quotient by $\widetilde{\OO}(\T)$ in eq.~\eqref{eq:moduli_space_K3}, we consider K3 surfaces different if their periods are the same up to an orthogonal transformation, but this transformation acts non-trivially on the discriminant lattices.

\subsubsection{The monodromy group and automorphic properties of families of K3 surfaces} 
Just like in the case of elliptic curves, the moduli space $\cM_{\textrm{K3}}(\N)$ is too small to coincide with the moduli space of our family. Nevertheless, we know from the surjectivity of the period map that every point in $D_+$ corresponds to a member from our family. Just like in the case of elliptic curves, we want to identify points in $D_+$ that corresponds to a monodromy transformation, and just like in the construction of the moduli space $\cM_{\textrm{K3}}(\N)$, we only want to identify points that represent periods that are related by a monodromy transformation that acts trivially on the discriminant lattice. We then define
\beq\bsp
\widetilde{G}_{\!M} &\,= G_{\!M}\cap \widetilde{\OO}(\T)\,,\\
\widetilde{G}_{\!M}^+ &\,= G_{\!M}\cap \widetilde{\OO}{}^+(\T)\,.
\esp\eeq
The monodromy group $G_{\!M}$ of a family of lattice-polarised K3 surfaces always has finite index in $\OO(\T)$ (cf.,~e.g.,~the discussion in ref.~\cite{Huybrechts_2016}, section 4), and so $\widetilde{G}_{\!M}$ and $\widetilde{G}_{\!M}^+$ have finite index in $\widetilde{\OO}(\T)$ and $\widetilde{\OO}{}^+(\T)$ (because intersections of finite index-subgroups have themselves finite index, see appendix~\ref{app:groups}). 
By the same argument as before, the quotient space defines a quasi-projective variety, and
we have the isomorphism
\beq\label{eq:moduli_space}
\cM \simeq \mfaktor{\widetilde{G}_{\!M}}{D} = \mfaktor{\widetilde{G}_{\!M}^+}{D_+}\,.
\eeq
Equation~\eqref{eq:moduli_space} is the analogue of eq.~\eqref{eq:cM_quotient_elliptic} for elliptic curves. Just like in the case of elliptic curves, we may ask for the relationship between $\cM$ and $\cM_{\textrm{K3}}(\N)$. Since $\widetilde{G}_{\!M}^+$ has finite index in $\widetilde{\OO}{}^+(\T)$, we immediately see that $\cM$ is a finite cover of $\cM_{\textrm{K3}}(\N)$, and the degree of the covering is the index of $\widetilde{G}_{\!M}^+$ in $\widetilde{\OO}{}^+(\T)$, again in complete analogy with the elliptic case. In table~\ref{tab:summary} we summarize the correspondence of the well-known concepts from the elliptic case in section~\ref{sec:automorphic} to the case of K3 surfaces. We see that all concepts known from families of elliptic curves have a generalisation to families of lattice-polarised K3 surfaces. We stress that this is very specific to K3 surfaces, and does not necessarily generalise to families of higher-dimensional CY varieties.

The period map allows us to assign a point $\big[S[t(z)]:t(z):1\big] \in D$ to every $z\in\cM$. It is easy to see that every point in $D$ is uniquely determined by the vector $t(z)$, and since $D_+$ and $D_-$ are related by complex conjugation, we can focus without loss of generality on $D_+$. Conversely, by surjectivity of the period map, every point in $D_+$ corresponds to a member from our family. The monodromy group $\widetilde{G}_{\!M}^+$ acts on $D_+$ via generalised M\"obius transformations from eq.~\eqref{eq:generalised_Moebius}.
Using the same reasoning as in the elliptic case, we see that the mirror map $z(t)$ must be invariant under generalised M\"obius transformations,
\beq\label{eq:mirror_orthogonal_function}
z(g\cdot t) = z(t)\,,\qquad g\in \widetilde{G}_{\!M}^+\,.
\eeq
Similarly, the holomorphic period $\Pi_0(t)$ must transform as
\beq\label{eq:Pi_orthogonal_form}
\Pi_0(g\cdot t)  = j(g,t)\,\Pi_0(t)\,,
\eeq
with 
\beq\label{eq:automorphic}
j(g,t) := \big(S[t]\,\gamma + q^Tt + \delta\big)\,,\qquad g = \left(\begin{smallmatrix}
\alpha & p^T & \beta\\ u & M  & v \\ \gamma & q^T & \delta\end{smallmatrix}\right) \in \widetilde{G}_{\!M}^+\,.
\eeq
we arrive at the following conclusion:
\begin{quote}\emph{The mirror map $z(t)$ and the holomorphic period $\Pi_0(t)$ are respectively a modular function and a modular form for the group $\widetilde{G}_{\!M}^+$.}
\end{quote}

\begin{table}[!t]
\begin{center}
\begin{tabular}{|c|c|c|}
\hline
& Elliptic curves & K3 surfaces\\
\hline
$\dim \cM$ & 1 & $m=20-\rho$\\
Lattice & $J_2$ & $\T = H\oplus \Lambda(n)$\\
Connected component of the period domain & $\HH $ & $D_+$\\
Symmetry & $\SL(2,\ZZ)$ & $\widetilde{\OO}{}^+(\T)$\\
Moduli space & $\cM_{\textrm{ell}}$ & $\cM_{\textrm{K3}}(\N)$\\
Automorphic properties  & $G_{\!M}^+$ & $\widetilde{G}_{\!M}^+$\\
\hline
\end{tabular}
\end{center}
\caption{\label{tab:summary}Comparison of different concepts encountered for elliptic curves and $\N$-polarised K3 surfaces (with $\T=\N^{\bot}$).}
\end{table}

Modular forms for orthogonal groups are also called \emph{orthogonal modular forms}. There is a substantial body of mathematical literature on orthogonal modular forms \cite{bruinierbook,orthogonal_PhD,WANG2020107332,Wang_2021,Schaps2022FourierCO,Schaps2023}, including algorithms for computer codes~\cite{Assaf:2022aa}. They can often be constructed as certain integrals over ordinary modular forms using a procedure called the \emph{Borcherds lift}~\cite{Borcherds:1998aa}. It would be interesting to study the connection between Feynman integrals related to K3 surfaces and orthogonal modular forms in more detail, and we leave this for future work. In the remainder of this paper we focus on another aspect: it is well known that for small values of $m$, there are isomorphisms between the orthogonal groups $\OO(2,m)$ and other Lie groups. For small values of $m$, we therefore expect that we can express K3 periods in terms of other classes of modular forms. We will describe some of these cases, as well as the relevant modular forms in the next section.



\section{Exceptional isomorphisms and K3 surfaces of large Picard rank}
\label{sec:isom_K3}

In the previous section we have seen that for an $\N$-polarised family of K3 surfaces, the holomorphic period and the mirror map are modular forms for the monodomy group $\widetilde{G}_{\!M}^+\subseteq \widetilde{\OO}{}^+(\T)$ when expressed in terms of the canonical coordinate $t$ on the moduli space. In particular, the automorphic properties are tightly linked to the structure of the orthogonal group. 

Since the signature of the transcendental lattice is always $(2,m)$, we know that $\SO_0(\T) \subset \SO_0(2,m)$. It is well known that for $p+q\le 6$, the Lie groups $\SO_0(p,q)$ are isomorphic other Lie groups. Hence, for $m\le 4$, we expect that the automorphic properties of the periods may be related to those of other groups, potentially leading to other classes of modular forms that may have been studied in the literature. 

The goal of this section is to explore some of these isomorphisms and their consequences for families of K3 surfaces with Picard rank $\rho\ge 16$. We start by giving a very brief review of the exceptional isomorphisms of Lie algebras of small rank, mostly focusing on the case of complex Lie algebras (which is the case typically studied in physics). In later subsections we will indicate how these isomorphisms can be applied to K3 surfaces.

\subsection{Exceptional isomorphisms between Lie algebras and Lie groups}

The classification of simple Lie groups is easiest in the complex case. There are 4 infinite families of simple complex Lie algebras corresponding to the Dynkin diagrams $A_n$, $B_n$, $C_n$ and $D_n$ (as well as six exceptional Lie algebras, which will not play any role here). These Dynkin diagrams represent the root lattices of the Lie algebras $\sll(n+1,\mathbb{C})$, $\so(2n+1,\mathbb{C})$, $\spl(2n,\mathbb{C})$ and $\so(2n,\mathbb{C})$ for $n\ge 1$, which are the Lie algebras of the simple complex Lie groups $\SL(n+1,\mathbb{C})$, $\SO(2n+1,\mathbb{C})$ , $\Sp(2n,\mathbb{C})$ and $\SO(2n,\mathbb{C})$, respectively. The number of nodes of the Dynkin diagram corresponds to the rank of the Lie algebra (i.e., the dimension of its Cartan subalgebra).  

It is then easy to see from the Dynkin diagrams that for low ranks, we obtain the following isomorphisms:
\begin{itemize}
\item $A_1=B_1=C_1$, or equivalently $\sll(2,\mathbb{C}) = \so(3,\mathbb{C})  = \spl(2,\mathbb{C})$. At the level of Lie groups, we find that $\SL(2,\mathbb{C})$ and $\Sp(2,\mathbb{C})$ are isomorphic. These groups are not isomorphic to $\SO(3,\mathbb{C})$, but there is a 2-to-1 map. In later sections, we will be interested in the real form $\SO_0(\Sigma_1,\RR)\simeq \SO_0(2,1)$, which preserves the quadratic form
\beq
\Sigma_1 = \left(\begin{smallmatrix} 0&\phantom{-}0&1\\0&-2&0\\1&\phantom{-}0&0\end{smallmatrix}\right)\,.
\eeq
 We have the exact sequence
\beq\label{eq:SL2_to_SO21}
\mathbb{Z}_2\longrightarrow \SL(2,\mathbb{R})\overset{\phi_3}{\longrightarrow} \SO_0(\Sigma_1,\RR)\longrightarrow0\,.
\eeq
The first map is simply the inclusion, and $\phi_3$ is the symmetric square map,
\beq\label{eq:phi3_def}
\phi_3\!\left(\begin{smallmatrix}a&b\\c&d\end{smallmatrix}\right) = \left(\begin{smallmatrix} a^2 & 2 a b & b^2 \\
 a c & a d+b c & b d \\
 c^2 & 2 c d & d^2
\end{smallmatrix}\right)\,.
\eeq
\item $D_2=A_1\times A_1$, or equivalently $\so(4,\mathbb{C}) =\sll(2,\mathbb{C})\oplus \sll(2,\mathbb{C})$. We will in particular be interested in the real form $\SO_0(\Sigma_2,\RR)\simeq \SO_0(2,2)$, with
\beq\label{eq:Sigma2_def}
\Sigma_2 = \left(\begin{smallmatrix} 0&\phantom{-}0&\phantom{-}0&1\\0&\phantom{-}0&-1&0\\0&-1&\phantom{-}0&0\\1&\phantom{-}0&\phantom{-}0&0\end{smallmatrix}\right)\,.
\eeq
A the level of Lie groups we have an exact sequence
\beq
\ZZ_2\longrightarrow \SL(2,\mathbb{R})\times\SL(2,\mathbb{R}) \overset{\phi_4}{\longrightarrow} \SO_0(\Sigma_2,\mathbb{R})\longrightarrow0\,.
\eeq
The first map is the diagonal inclusion and
\beq\label{eq:phi4_def}
 {\phi}_4(\gamma_1,\gamma_2) = \left(\begin{smallmatrix} a_1 a_2 & a_1 b_2 & a_2 b_1 & b_1 b_2 \\
 a_1 c_2 & a_1 d_2 & b_1 c_2 & b_1 d_2 \\
 a_2 c_1 & b_2 c_1 & a_2 d_1 & b_2 d_1 \\
 c_1 c_2 & c_1 d_2 & c_2 d_1 & d_1 d_2\end{smallmatrix}\right)\,, \qquad \gamma_i= \left(\begin{smallmatrix}a_i & b_i\\ c_i&d_i\end{smallmatrix}\right)\,.
 \eeq
\item $B_2=C_2$, or equivalently $\so(5,\mathbb{C}) = \spl(4,\mathbb{C})$. We will encounter the real form $\SO_0(\Sigma_2,\RR)\simeq\SO_0(2,3)$, with
\beq
\Sigma_3 = \left(\begin{smallmatrix} 0&0&\phantom{-}0&0&1\\0&0&\phantom{-}0&1&0\\0&0&-2&0&0\\0&1&\phantom{-}0&0&0\\1&0&\phantom{-}0&0&0\end{smallmatrix}\right)\,,
\eeq
and we have
\beq\label{eq:Sp4C_SO5C}
\mathbb{Z}_2\longrightarrow \Sp(4,\mathbb{R})\overset{\phi_5}{\longrightarrow} \SO_0(\Sigma_3,\mathbb{R})\longrightarrow0\,.
\eeq
The first map is again the inclusion, and
\beq
\phi_5\!\left(\begin{smallmatrix}
 a_1 & a_2 & b_1 & b_2 \\
 a_3 & a_4 & b_3 & b_4 \\
 c_1 & c_2 & d_1 & d_2 \\
 c_3 & c_4 & d_3 & d_4
 \end{smallmatrix}\right)=
\left(\begin{smallmatrix}
 a_1 a_4-a_2 a_3 & a_3 b_2-a_1 b_4 & -a_3 b_1+a_4 b_2+a_1 b_3-a_2 b_4 \
& a_2 b_3-a_4 b_1 & b_2 b_3-b_1 b_4 \\
 a_2 c_3-a_1 c_4 & a_1 d_4-b_2 c_3 & -a_1 d_3+a_2 d_4+b_1 c_3-b_2 c_4 \
& b_1 c_4-a_2 d_3 & b_1 d_4-b_2 d_3 \\
 a_1 c_2-a_2 c_1 & b_2 c_1-a_1 d_2 & a_1 d_1-a_2 d_2-b_1 c_1+b_2 c_2 \
& a_2 d_1-b_1 c_2 & b_2 d_1-b_1 d_2 \\
 a_3 c_2-a_4 c_1 & b_4 c_1-a_3 d_2 & a_3 d_1-a_4 d_2-b_3 c_1+b_4 c_2 \
& a_4 d_1-b_3 c_2 & b_4 d_1-b_3 d_2 \\
 c_2 c_3-c_1 c_4 & c_1 d_4-c_3 d_2 & c_3 d_1-c_4 d_2-c_1 d_3+c_2 d_4 \
& c_4 d_1-c_2 d_3 & d_1 d_4-d_2 d_3
\end{smallmatrix}\right)\,.
\eeq
\item $D_3=A_3$, or equivalently $\so(6,\mathbb{C}) = \sll(4,\mathbb{C})$. 
We will need the real form $\SO_0(\Sigma_4,\RR)\simeq\SO_0(2,4)$ with
\beq\label{eq:Sigma_4}
\Sigma_4 = \left(\begin{smallmatrix} 0&0&\phantom{-}0&\phantom{-}0&0&1\\0&0&\phantom{-}0&\phantom{-}0&1&0\\0&0&-2&\phantom{-}1&0&0\\0&0&\phantom{-}1&-2&0&0\\0&1&\phantom{-}0&\phantom{-}0&0&0\\1&0&\phantom{-}0&\phantom{-}0&0&0\end{smallmatrix}\right)\,,
\eeq
and we have an exact sequence,
\beq\label{eq:SO(2,6)_sequence}
\mathbb{Z}_2\longrightarrow \SU(2,2,\ord_F)^*\overset{\phi_6}{\longrightarrow} \SO(\Sigma_4,\mathbb{R})\longrightarrow0\,,
\eeq
where $F=\QQ(i\sqrt{3})$ and $\SU(2,2,\ord_F)^*$ is an extension of the group~\cite{hermitian_thesis,HAUFFEWASCHBUSCH202122}
\beq\label{eq:SU22_def}
\SU(2,2,\ord_F) = \big\{M\in \SL(4,\ord_F): M^{\dagger}J_4M = J_4\big\}\,.
\eeq
Here $\ord_F$ is the ring of integers of $F$ and will be defined more carefully in section~\ref{sec:Hilbert}.
The map $\phi_6$ is more complicated than in the previous cases, but it can still be constructed explicitly. In the following, we will not need it explicit form, which can be found, e.g., in refs.~\cite{hermitian_thesis,HAUFFEWASCHBUSCH202122}.
\end{itemize}

\subsection{The isomorphism $\SO_0(2,1) \simeq {\SL(2,\mathbb{R})/\ZZ_2}$}
\label{sec:A1=B1=C1}
Consider a one-parameter family of K3 surfaces. Since the transcendental lattice is even and has signature $(2,1)$, it is determined by a single positive integer $n\ge 1$,
\beq
\T_n = H\oplus \langle 2n\rangle\,.
\eeq 
In an appropriate basis, the Gram matrix is
\beq
\Sigma = \left(\begin{smallmatrix} 0&0&1\\0&2n&0\\1&0&0\end{smallmatrix}\right)\,.
\eeq
Since $\SO_0(\T_n)\subseteq \SO_0(2,1,\mathbb{R}) \simeq\SL(2,\mathbb{R})/\ZZ_2$, we expect the orthogonal group of $\T_n$ to be essentially isomorphic to some subgroup of $\SL(2,\mathbb{R})$. We have an exact sequence~\cite{Dolgachev:1996aa},
\beq\label{eq:Dolgachev}
\mathbb{Z}_2\longrightarrow \Gamma_0(n)^+\overset{\phi_{n,3}}{\longrightarrow} \widetilde{\OO}{}^+(\T_n)\to0\,,
\eeq
where $\phi_{n,3} = \varphi_{\Delta_{-n,1}}\circ \phi_3$ is the composition of the map in eq.~\eqref{eq:phi3_def} and the group homomorphism 
\beq\label{eq:varphi_delta_def}
\varphi_{\Delta}(\gamma) := \Delta \gamma\Delta^{-1}\,,
\eeq
where we defined
\beq\label{eq:delta_def}
\Delta_{n,p} := \diag(\underbrace{n,\ldots,n}_{p},1,\ldots,1)\,.
\eeq 
$\Gamma_0(n)^+$ is the subgroup of $\SL(2,\mathbb{R})$ generated by the congruence subgroup
\beq
\Gamma_0(n) = \left\{\begin{mymatrix}a&b\\c&d\end{mymatrix}\in\SL(2,\ZZ):c = 0\!\!\!\mod n\right\}\,,
\eeq
and the Atkin-Lehner involution
\beq\label{eq:F_n_def}
F_n = \begin{mymatrix}0&-\tfrac{1}{\sqrt{n}}\\ \sqrt{n}&0\end{mymatrix}\,.
\eeq
The image of the subgroup $\Gamma_0(n)$ generates the subgroup $\cD(\T_n)$, and so we see that $\widetilde{\OO}{}^+(\T_n)$ is generated by $\cD(\T_n)$ augmented by the images of Atkin-Lehner involutions.

Let us interpret this result in the context of K3 surfaces (cf.,~e.g.,~ref.~\cite{Dolgachev:1996aa}). Let $\gamma = \begin{mymatrix}a&b\\c&d\end{mymatrix}\in \Gamma_0(n)^+$. By direct computation, one sees that the generalised M\"obius transformation and the automorphic factor for the orthogonal group in eqs.~\eqref{eq:generalised_Moebius} and~\eqref{eq:automorphic} take the form
\beq\label{eq:modular_HA1}
\phi_3(\gamma)\cdot t = \frac{at+b}{c t+d}\textrm{~~~and~~~} j(\phi_3(\gamma),t) = (ct+d)^2 = j_{\textrm{ell}}(\gamma,t)^2\,.
\eeq
In other words, the generalised M\"obius transformation and the automorphic factor for the orthogonal group reduce to those for $\SL(2,\RR)$. 
The second Hodge-Riemann bilinear relation in eq.~\eqref{eq:HRK3} gives
\beq\label{eq:modular_2}
2n\,(\Imag t)^2 > 0\,.
\eeq
This implies $\Imag t\neq 0$, and so the period domain is
\beq
D = D_+ \cup \overline{D}_+\textrm{~~~and~~~} D_+ = \HH\,,
\eeq
and we have the isomorphism
\beq
\cM_{\textrm{K3}}(\N) = \mfaktor{\Gamma_0(n)^+}{\HH}\,,
\eeq
with $N = E_8(-1)^{\oplus2}\oplus H\oplus\langle-2n\rangle$.

We know that the monodromy group $\widetilde{G}_{\!M}^+$ of our family is a finite-index subgroup of $\widetilde{\OO}{}^+(\T)$, and so via eq.~\eqref{eq:Dolgachev} we can identify a subgroup $\widetilde{\Gamma}_{\!M}^+\subseteq \Gamma_0(n)^+$ such that, $\phi_{n,3}(\widetilde{\Gamma}_{\!M}^+) = \widetilde{G}_{\!M}^+$. However we do not necessarily have equality between $\widetilde{\Gamma}_{\!M}^+$ and $\Gamma_0(n)^+$.\footnote{As an example, consider ref.~\cite{verrill1996}, where a family with $n=3$ is discussed, with monodromy group $\Gamma_0(6)^+$.} Combining eqs.~\eqref{eq:mirror_orthogonal_function} and~\eqref{eq:Pi_orthogonal_form} with eq.~\eqref{eq:modular_HA1}, we find that, for all $\gamma=\begin{mymatrix}a&b\\c&d\end{mymatrix}\in\widetilde{\Gamma}_{\!M}^+$ we have
\beq\bsp\label{eq:modular_1}
\Pi_0(\phi_{n,3}(\gamma)\cdot t) &\,= \Pi_0\!\left(\tfrac{at+b}{ct+d}\right) = j_{\textrm{ell}}(\gamma,t)^2\,\Pi_0(t)\,,\\
z(\phi_{n,3}(\gamma)\cdot t) &\,= z\!\left(\tfrac{at+b}{ct+d}\right) = z(t)\,.
\esp\eeq
From the previous equation, we see that the holomorphic period and the mirror map have the appropriate transformation properties to define modular forms for $\widetilde{\Gamma}_{\!M}^+$. However, at this point we only know that $\widetilde{\Gamma}_{\!M}^+$ is a finite-index subgroup of $\Gamma_0(n)^+\subseteq \SL(2,\RR)$. We will need to identify a finite-index subgroup $\widetilde{\Gamma}_{\!\cD}^+\subseteq \widetilde{\Gamma}_{\!M}^+\cap \SL(2,\ZZ)$ to conclude that $\Pi_0(t)$ and $z(t)$ are modular forms for $\widetilde{\Gamma}_{\!\cD}^+$. We now argue that such a subgroup always exists. The method used here serves as an example for the subsequent sections.

We start by noting that $\Gamma_0(n)^+$ contains $\Gamma_0(n)$ as a subgroup of index two. Indeed, the element $F_n$ in eq.~\eqref{eq:F_n_def} normalises $\Gamma_0(n)$, i.e., for all $\gamma\in\Gamma_0(n)$, $F_n\gamma F_n^{-1}\in\Gamma_0(n)$ (this can easily be checked by direct computation). It follows that we have
\beq
\Gamma_0(n)^+ = \Gamma_0(n)\cup F_n\Gamma_0(n)\,,
\eeq
and so $\Gamma_0(n)$ has index 2 in $\Gamma_0(n)^+$. Next we note that the monodromy group $G_{\!M}$ must be infinite. This can easily be seen in our case from the existence of a MUM-point (e.g., because the monodromy group of $\log z$ is $\pi_1(\CC^{\times})\simeq \ZZ$). As a consequence, $\widetilde{G}_{\!M}^+$ and $\widetilde{\Gamma}_{\!M}^+$ are also infinite (because $\widetilde{G}_{\!M}^+$ has finite index in ${G}_{\!M}$, and $\widetilde{\Gamma}_{\!M}^+$ is essentially isomorphic to $\widetilde{G}_{\!M}^+$). It follows that $\widetilde{\Gamma}^+_{\!\cD}:=\widetilde{\Gamma}_{\!M}^+\cap \Gamma_0(n)$ is non-trivial (and in fact itself infinite, see the derivation in appendix~\ref{app:groups}). Moreover, since $\widetilde{\Gamma}^+_{\!\cD}$ is the intersection of two subgroups of finite index, it has itself finite index in $\SL(2,\ZZ)$. Hence, we have identified a finite-index subgroup $\widetilde{\Gamma}^+_{\!\cD}$ of $\SL(2,\ZZ)$ such that eq.~\eqref{eq:modular_1} holds for all $\gamma\in\widetilde{\Gamma}^+_{\!\cD}$, and so $\Pi_0$ and $z$ define ordinary modular forms for $\widetilde{\Gamma}^+_{\!\cD}$. 

Let us make some comments at this point. First, while the previous argument shows the existence of a finite-index subgroup of $\SL(2,\ZZ)$ contained in $\widetilde{\Gamma}_{\!M}^+$, we do by no means claim that it is the largest group with that property! This is also irrelevant for our goal: we merely wanted to conclude that the holomorphic period and the mirror map admit a modular parametrisation for \emph{some} finite-index subgroup. A detailed study of those modular properties is a priori more complicated. Second, the key to our construction of $\widetilde{\Gamma}_{\!\cD}^+$ was the fact that we could identify the subgroup $\Gamma_0(n)$ that has finite index in both $\Gamma_0(n)^+$ and $\SL(2,\ZZ)$. The fact that this subgroup is $\Gamma_0(n)$ is not essential, and we could have started from any finite-index subgroup of $\Gamma\subseteq\Gamma_0(n)$. We now illustrate how we can find such a subgroup in other cases by using the result from section~\ref{sec:congruence}.
Take $\Gamma$ to be the subgroup of $\SL(2,\RR)$ such that 
\beq
\phi_{n,3}(\Gamma_2(n)) = \varphi_{\Delta_{-n,1}}\big(\cD(\T_1(n))\big)\,,
\eeq
where the map $\varphi_{\Delta}$ was defined in eq.~\eqref{eq:varphi_delta_def}.
We know that such a $\Gamma_2(n)$ exists from eq.~\eqref{eq:SL2_to_SO21}, and an easy computation shows that
\beq\label{eq:Gamma_2n_def}
\Gamma_2(n) := \big\{\begin{mymatrix}a&b\\c&d\end{mymatrix}\in\SL(2,\ZZ): a,d=1\!\!\!\!\mod n, b=0\!\!\!\!\mod n, c=0\!\!\!\!\mod n^2\big\}\,.
\eeq
Note that $\Gamma(n^2)\subseteq \Gamma_2(n)$, and so $\Gamma_2(n)$ is a congruence subgroup.
We could then equally-well have defined $\widetilde{\Gamma}^+_{\!\cD} :=\widetilde{\Gamma}^+_{\!M}\cap \Gamma_2(n)$, and all conclusions would have remained the same.

\subsection{The isomorphism $\SO_0(2,2) \simeq (\SL(2,\mathbb{R})\times \SL(2,\mathbb{R}))/\ZZ_2$}
\label{sec:D2=A1xA1}

We now turn to the discussion of two-parameter families of K3 surfaces whose transcendental lattice admits a Gram matrix of the form
\beq
\Sigma = \begin{mymatrix}
0 & 0 & 1\\
0&S&0\\
1&0&0\end{mymatrix}\,,
\eeq
where $S$ is a symmetric $2\times 2$ matrix that defines an even lattice of rank two and signature $(1,1)$, i.e., $S$ is a symmetric matrix that admits one positive and one negative eigenvalue and the entries on the diagonal are even. Unlike in the case of one-parameter families, there is more than one inequivalent choice for $S$ (even up to rescaling). In the following we discuss two cases which will turn out to be sufficient to discuss the three-loop banana integrals in section~\ref{sec:bananas}.

\subsubsection{Case 1: products of modular forms}
\label{sec:ExE}

We start by considering a family of K3 surfaces with  transcendental lattice
\beq\label{eq:T=H+H}
\T_n = H \oplus H(n)\,,
\eeq
which corresponds to the choice $S=\begin{mymatrix}0&n\\n&0\end{mymatrix}$. Note that for $n=-1$ we recover the Gram matrix $\Sigma_2$ in eq.~\eqref{eq:Sigma2_def}, and in general we have
\beq
 \Delta_{-n,1}^T\Sigma\Delta_{-n,1} = -n\Sigma_2\,.
 \eeq
From eq.~\eqref{eq:phi4_def}, we know that there is a subgroup $\widetilde{\Gamma}_{\!M}^+\subseteq \SL(2,\RR)\times \SL(2,\RR)$ such that $\varphi_{n,4}(\widetilde{\Gamma}_{\!M}^+)=\widetilde{G}^+_{\!M}\subseteq\widetilde{\OO}{}^+(\T_n)$, where $\phi_{n,4} = \varphi_{\Delta_{-n,1}}\circ \phi_4$ is the composition of the maps $\varphi_{\Delta_{-n,1}}$ and $\phi_4$ defined in eqs.~\eqref{eq:varphi_delta_def} and~\eqref{eq:phi4_def}. 

The Hodge-Riemann bilinear relation in eq.~\eqref{eq:HRK3} implies that
\beq\label{eq:modular_modular_2}
2n\,(\Imag t_1)\,(\Imag t_2) > 0\,.
\eeq
Hence, $\Imag t_1$ and $\Imag t_2$ are non zero and have the same sign, and we find
\beq
D = D_+ \cup \overline{D}_+\textrm{~~~and~~~}D_+ = \HH\times\HH\,.
\eeq
and so we can identify our moduli space with the quotient
\beq
\cM \simeq \mfaktor{\widetilde{\Gamma}_{\!M}^+}{(\HH\times\HH)}\,.
\eeq
An easy calculation shows that
 the generalised M\"obius transformation and the automorphic factor for the orthogonal group in eqs.~\eqref{eq:generalised_Moebius} and eq.~\eqref{eq:automorphic} now take the form (with $t=(t_2,t_1)$ and $(\gamma_2,\gamma_1)\in\widetilde{\Gamma}_{\!M}^+$))
\beq\bsp\label{eq:automorphic_SL2xSL2}
\phi_{n,4}(\gamma_2,\gamma_1)\cdot t &\,= \left(\tfrac{a_2t_2+b_2}{c_2 t_2+d_2},\tfrac{a_1t_1+b_2}{c_1 t_1+d_1}\right)^{\!T}\,,\\
 j(\phi_{n,4}(\gamma_2,\gamma_1),t) &\,= (c_2t_2+d_2)(c_1t_1+d_1)=j_{\textrm{ell}}(\gamma_2,t_2)j_{\textrm{ell}}(\gamma_1,t_1)\,.
\esp\eeq
We would like to conclude from the previous relations that the holomorphic period and the mirror map are modular forms in two variables for some subgroup $\Gamma
_2\times\Gamma_1\subseteq\widetilde{\Gamma}_{\!M}^+$ (see appendix~\ref{app:modular} for details). 
At this point we need to make a comment. We know that $\widetilde{\Gamma}_{\!M}^+$ is a subgroup of $\SL(2,\RR)\times \SL(2,\RR)$. Clearly, any direct product $\Gamma_2\times\Gamma_1$, with $\Gamma_1,\Gamma_2\subseteq\SL(2,\RR)$, is a a subgroup of $\SL(2,\RR)\times \SL(2,\RR)$, but not all subgroups have this form. For example, if $\Gamma_1$ is a subgroup of $\SL(2,\RR)$ and $\sigma:\Gamma_1\to \SL(2,\RR)$ is an embedding, then $\Gamma=\{(\gamma,\sigma(\gamma)): \gamma\in\Gamma_1\}$ is a subgroup of $\SL(2,\RR)\times \SL(2,\RR)$ (isormorphic to $\Gamma_1$). A complete description of the subgroups in terms of those of $\SL(2,\RR)$ is given by Goursat's lemma (see appendix~\ref{app:groups}). In the following we argue that in the case we are considering, we can always identify a subgroup that is a direct product.

We start by discussing the case $n=1$, which was analysed in detail in ref.~\cite{doranclingher1}, and it was shown that there is an isomorphism~\cite{doranclingher1,Hosono:2002yb}
\beq
\widetilde{\OO}{}^+(\T_1) \simeq\SL(2,\ZZ)\times \SL(2,\ZZ)\rtimes \ZZ_2\,,
\eeq
where the action of $\ZZ_2$ simply exchanges $t_1$ and $t_2$. In appendix~\ref{sec:congruence} we show that $\varphi_{\Delta_{-n,1}}\!\big(\cD(\T_1(n))\big)$ is a congruence subgroup of $\cD(\T_n)$, and so we have the following inclusion of finite-index subgroups:
\beq
\varphi_{\Delta_{-n,1}}\!\big(\cD(\T_1(n))\big) \subseteq \cD(\T_n) \subseteq \widetilde{\OO}{}^+(\T_n) \,.
\eeq
By direct calculation, we see that $\phi_{n,4}(\Gamma_2(n)\times \Gamma_2(n))\subseteq \cD(\T_1(n))$, where $\Gamma_2(n)$ was defined in eq.~\eqref{eq:Gamma_2n_def}. We define
\beq
\widetilde{\Gamma}_{\!\cD}^+ := \widetilde{\Gamma}_{\!M}^+\cap(\Gamma_2(n)\times \Gamma_2(n))\,,
\eeq
We see that $ \widetilde{\Gamma}_{\!M}^+$ contains a finite-index subgroup of $\SL(2,\ZZ)\times\SL(2,\ZZ)$. However, this subgroup may not take the form of a direct product. In appendix~\ref{app:groups} we show that, as a consequence of Goursat's lemma, $\widetilde{\Gamma}_{\!\cD}^+$ always contains a finite-index subgroup of the form $\widetilde{\Gamma}_{\!\cD,2}^+\times \widetilde{\Gamma}_{\!\cD,1}^+$, where $\widetilde{\Gamma}_{\!\cD,i}^+$ are finite-index subgroups of $\SL(2,\ZZ)$. This fact combined with eq.~\eqref{eq:automorphic_SL2xSL2} immediately shows that the holomorphic period and the mirror map are respectively a modular form in two variables of weights $(1,1)$ and a modular function for the subgroup $\widetilde{\Gamma}_{\!\cD,2}^+\times \widetilde{\Gamma}_{\!\cD,1}^+$. In appendix~\ref{app:modular} we show that any such modular form or function necessarily factorises into product of two ordinary, single-variable, modular forms for $\widetilde{\Gamma}_{\!\cD,1}^+$ and $\widetilde{\Gamma}_{\!\cD,2}^+$. 
We can summarise this by saying that if $\T_n = H\oplus H(n)$, then the period and the mirror map can be expressed in terms of ordinary modular forms. This agrees with the analysis for $n=1$ in ref.~\cite{doranclingher,doranclingher1}.

\subsubsection{Case 2: Hilbert modular forms}
\label{sec:Hilbert}
We now consider the situation of a transcendental lattice
\beq
\T_n = H \oplus\langle 2n\rangle\oplus \langle -2dn\rangle\,,
\eeq
which corresponds to $S = \begin{mymatrix} 2n & 0\\0&-2dn\end{mymatrix}$, where $d$ and $n$ are positive integers, and we assume $d$ squarefree. Note that this lattice is not equivalent to the one in eq.~\eqref{eq:T=H+H}, because it is not possible to rotate the Gram matrices into each via a unimodular transformation. Instead, we have
\beq
(R^{-1})^T \begin{mymatrix} 2n & 0\\0&-2dn\end{mymatrix}R^{-1} =  \begin{mymatrix} 0 & n\\n&0\end{mymatrix}\,,\textrm{~~~with~~~} R =  \begin{mymatrix} 1 &-\sqrt{d}\\1&\phantom{-}\sqrt{d}\end{mymatrix}\,.
\eeq
The appearance of $\sqrt{d}$ can be understood by noting that the discriminant of $\T_n$ is $4dn^2$, which is not a perfect square if $d$ is squarefree, while the discriminant of the lattice in eq.~\eqref{eq:T=H+H} is a perfect square.
We now discuss the orthogonal group for this lattice, and the ensuing modular properties for the periods and the mirror map. We proceed using exactly the same steps as in the previous case. We will therefore not discuss all steps in detail, but only focus on the main differences.

The orthogonal group for the lattice $\T_n$ is connected to the so-called Hilbert modular group. For a pedagogical introduction, see ref.~\cite{Bruinier2008}. Let $F:=\QQ(\sqrt{d})$, and consider its subring of integers,
\beq
\ord_F = \left\{\begin{array}{ll}\ZZ +\tfrac{1+\sqrt{d}}{2}\ZZ\,,& \textrm{if } d=1\!\!\!\!\mod 4\,,\\
\ZZ +\sqrt{d}\ZZ\,,& \textrm{if } d=2,3\!\!\!\!\mod 4\,. 
\end{array}\right.
\eeq
The Hilbert modular group is the group $\SL(2,\ord_F)$. For $d\neq 1\!\!\!\mod 4$ (which will be sufficient to understand the three-loop banana integrals), we can describe the Hilbert modular group as the group of matrices with unit determinant of the form
\beq
\gamma=\begin{mymatrix}
a_1&b_1\\c_1&d_1\end{mymatrix} +\sqrt{d} \begin{mymatrix}a_2&b_2\\c_2&d_2\end{mymatrix} \,,\quad \begin{mymatrix}a_i&b_i\\c_i&d_i\end{mymatrix} \in \ZZ^{2\times2}\,.
\eeq
Clearly $\SL(2,\ord_F)$ is a subgroup of $\SL(2,\RR)$. In the following it will be important that $\SL(2,\ord_F)$ can be embedded into  $\SL(2,\RR)$ in two different ways. The first embedding is simply the identity, whereas the second one is given by conjugation:
\beq
\sigma: \SL(2,\ord_F) \to \SL(2,\RR); \quad\gamma \mapsto \overline{\gamma}\,,
\eeq
where the conjugate $\overline{\gamma}$ is obtained from $\gamma$ by changing $\sqrt{d}\to -\sqrt{d}$. We can therefore identify $\SL(2,\ord_F)$ with a subgroup of $\SL(2,\RR)\times\SL(2,\RR)$:
\beq
\SL(2,\ord_F)\simeq\big\{(\gamma,
\overline{\gamma}): \gamma\in \SL(2,\ord_F)\big\}\subset \SL(2,\RR)\times\SL(2,\RR)\,.
\eeq
Under $\phi_{4}$, we can then identify (the double cover of) $\cD(\T_1)$ with $\SL(2,\ord_F)$ (cf.,~e.g., ref.~\cite{Bruinier2008,hauffe-waschbusch_hilbert_2022}). The full group $\widetilde{\OO}{}^+(\T_1)$ is obtained by adding the $\phi_{4}$ images of appropriate Atkin-Lehner involutions~\cite{hauffe-waschbusch_hilbert_2022}. The latter do not play any role in the following, so we will not define them explicitly. We define $\psi_{n,4}(\gamma) := \phi_{n,4}(\gamma,\overline\gamma)$.

Let us define $\tau=Rt$, or more explicitly
\beq\bsp
\tau_2 &\,= t_2-t_1\sqrt{d}\,,\\
\tau_1 &\,= t_2+t_1\sqrt{d}\,.
\esp\eeq
The generalised M\"obius transformation becomes
\beq
\psi_{n,4}(\gamma)\cdot t = \big(\gamma\cdot\tau_2,\overline{\gamma}\cdot \tau_1\big)\,,
\eeq
where the action of $\SL(2,\ord_F)$ on $\tau_i$ is simply by ordinary M\"orbius transformation. The automorphic factor for the orthogonal group reduces to
\beq
j(\psi_{n,4}(\gamma),t) = j_{\textrm{ell}}(\gamma,\tau_2)j_{\textrm{ell}}(\overline{\gamma},\tau_1)\,.
\eeq
The Hodge-Riemann bilinear relation in eq.~\eqref{eq:HRK3} again implies that
\beq\label{eq:modular_modular_2}
2n\,(\Imag \tau_1)\,(\Imag \tau_2) > 0\,,
\eeq
and so by the same argument as in the previous case, the period domain $D_+$ can be identified as $\HH\times\HH$, provided we work in the variables $\tau$. We can of course interpret the holomorphic period and the mirror map as functions of $\tau$, and we define
\beq\bsp
\widetilde{\Pi}_0(\tau) &\,=  \Pi_0(Rt)\,,\\
\tilde{z}(\tau) &\,=  z(Rt)\,.
\esp\eeq
We use again the result that $\cD(\T_1(n))$ is a congruence subgroup of $\cD(\T_n)$ (See appendix~\ref{sec:congruence}). An explicit description of $\cD(\T_1(n))$ as the image under $\psi_{n,4}$ of a congruence subgroup of the Hilbert modular group can be found in ref.~\cite[Corollary 2]{hauffe-waschbusch_hilbert_2022}. More specifically, in ref.~\cite{hauffe-waschbusch_hilbert_2022} it is shown that
\beq
\cD(\T_1(n)) = \psi_{n,4}\big(\Gamma_{\ord_F,n}\big)\,,
\eeq
where $\Gamma_{\ord_F,n}$ is the congruence subgroup,
\beq
\Gamma_{\ord_F,n} = \big\{\gamma\in\SL(2,\ord_F): \gamma = \varepsilon\,\mathds{1}\!\!\!\!\mod n\ord_F, \,\,\varepsilon\in\ZZ,\,\,\varepsilon^2=1\!\!\!\!\mod n\big\}\,.
\eeq

Let $\widetilde{\Gamma}_{\!M}^+$ be the subgroup of $\SL(2,\ord_F)$ such that $\psi_{n,4}\big(\widetilde{\Gamma}_{\!M}^+\big) = \widetilde{G}_{\!M}^+$. Since $\Gamma_{\ord_F,n}$ is a congruence subgroup, it has finite index in $\SL(2,\ord_F)$. We define
\beq
\widetilde{\Gamma}_{\!\cD}^+ := \widetilde{\Gamma}_{\!M}^+\cap \Gamma_{\ord_F,n}\,.
\eeq
Since $\widetilde{\Gamma}_{\!M}+$ is infinite and has finite index, $\widetilde{\Gamma}_{\!\cD}^+$ is non trivial, and it has finite index in both $\widetilde{\Gamma}_{\!M}^+$ and $\SL(2,\ord_F)$, because it is the intersection of finite-index subgroups (see appendix~\ref{app:groups}). In fact, $\widetilde{\Gamma}_{\!\cD}^+$ is a congruence subgroup of $\SL(2,\ord_F)$, because the congruence subgroup problem has a positive answer for $\SL(2,\ord_F)$ (see appendix~\ref{app:groups}). Then, for all $\gamma\in\widetilde{\Gamma}_{\!\cD}^+$, we have
\beq\bsp
\Pi_0(\psi_{n,4}(\gamma)\cdot t) &\,= \widetilde{\Pi}_0(\gamma\cdot\tau_2,\overline{\gamma}\cdot\tau_1)\\
&\, = j_{\textrm{ell}}(\gamma,\tau_2)j_{\textrm{ell}}(\overline\gamma,\tau_1)\widetilde{\Pi}_0(\tau)\\
&\,=j_{\textrm{ell}}(\gamma,\tau_2)j_{\textrm{ell}}(\overline\gamma,\tau_1)\Pi_0(t)\,,\\
z(\psi_{n,4}(\gamma)\cdot t) &\,=\tilde{z}(\gamma\cdot\tau_2,\overline{\gamma}\cdot\tau_1) = \tilde{z}(\tau) = z(t)\,.
\esp\eeq
Let us interpret this result. If $\Gamma$ is some finite-index subgroup of the Hilbert modular group, then a (holomorphic or meromorphic) \emph{Hilbert modular form of weight $(k_2,k_1)$} for $\Gamma$ is a (holomorphic or meromorphic) function $\tilde f:\HH\times\HH\to \CC$ such that for all $\gamma\in\Gamma\subseteq\SL(2,\ord_F)$
\beq
\tilde{f}(\gamma\cdot \tau_2,\overline{\gamma}\cdot\tau_1) = j_{\textrm{ell}}(\gamma,\tau_2)^{k_2}j_{\textrm{ell}}(\overline\gamma,\tau_1)^{k_1}\tilde{f}(\tau_2,\tau_1)\,.
\eeq
See, e.g.,~refs.~\cite{Bruinier2008,freitag} for an introduction to Hilbert modular forms. We then conclude that $\widetilde{\Pi}_0(\tau)$ is a holomorphic Hilbert modular form of weight $(1,1)$, and $\tilde{z}(\tau)$ is a meromorphic Hilbert modular form of weight $(0,0)$ (also known as a Hilbert modular function) for the congruence subgroup $\widetilde{\Gamma}_{\!\cD}^+$. In other words, if $\T_n=H\oplus\langle 2n\rangle\oplus\langle-2dn\rangle$, then the holomorphic period and the mirror map admit a parametrisation in terms of Hilbert modular forms.

\subsection{The isomorphism $\SO_0(2,3) \simeq \Sp(4,\mathbb{R})/\ZZ_2$}
\label{sec:B2=C2}

We now consider families of K3 surfaces of Picard rank 17. We can pick a basis of cycles such that the Gram matrix of the intersection pairing takes the form
\beq
\Sigma = \begin{mymatrix}
0&0&1\\0&S&0\\1&0&0
\end{mymatrix}\,,
\eeq
where $S$ is a symmetric integral $3\times3$ matrix of full rank with signature $(1,2)$ and with even numbers on the diagonal. Just like in section~\ref{sec:D2=A1xA1}, the form of $S$ is not unique, and we restrict ourselves to one example which will be sufficient to understand the three-loop banana integrals. We consider a transcendental lattice of the form
\beq
\T_n = H \oplus H \oplus \langle-2n\rangle\,,
\eeq
where $n>0$ is an integer. This corresponds to the choice 
\beq
S = \begin{mymatrix}
0&0&1\\0&-2n&0\\1&0&0
\end{mymatrix}\,.
\eeq

We proceed in exactly the same manner as in the previous examples, so we will be as brief as possible. We again start by analysing the case $n=1$. In refs.~\cite{Clingher2010LatticePK,HAUFFEWASCHBUSCH202122} it is shown that (a double cover of) $\cD(\T_1)$ is the Siegel modular group $\Sp(4,\ZZ)$. An explicit description of $\cD(\T_1(n))$ as a congruence subgroup of $\Sp(4,\ZZ)$ is given in ref.~\cite[Theorem 1]{HAUFFEWASCHBUSCH202122}:
\beq
\cD(\T_1(n)) = \phi_{n,5}\big(\Gamma_{\textrm{Sp},n}\big)\,,
\eeq
where we defined $\phi_{n,5}=\varphi_{\Delta_{n,2}}\circ \phi_5$ to be the composition of the maps in eqs.~\eqref{eq:varphi_delta_def} and~\eqref{eq:Sp4C_SO5C}, and $\Gamma_{\textrm{Sp},n}$ is the congruence subgroup
\beq
\Gamma_{\textrm{Sp},n} := \big\{\gamma\in\Sp(4,\ZZ):\gamma = \varepsilon\,\mathds{1}\!\!\!\!\mod n, \,\,\varepsilon\in\ZZ,\,\,\varepsilon^2=1\!\!\!\!\mod n\big\}\,.
\eeq
We again define 
\beq
\widetilde{\Gamma}_{\!\cD}^+:=\widetilde{\Gamma}_{\!M}^+\cap\Gamma_{\textrm{Sp},n}\,.
\eeq
$\widetilde{\Gamma}_{\!\cD}^+$ is non trivial, because
$\widetilde{\Gamma}_{\!M}^+$ is infinite and $\Gamma_{\textrm{Sp},n}$ has finite index, and it is a finite-index subgroup of $\Sp(4,\ZZ)$ and $\widetilde{\Gamma}_{\!M}^+$, because it is the intersection of finite-index subgroups. Moreover, since the congruence subgroup problem has a positive answer for the $\Sp(4,\ZZ)$, $\widetilde{\Gamma}_{\!\cD}^+$ is a congruence subgroup of $\Sp(4,\ZZ)$.

Let us define the symmetric matrix
\beq
\Omega_n(t) := \begin{mymatrix}\tfrac{1}{n}t_3 & t_2\\ \phantom{\tfrac{1}{n}}t_2&t_1\end{mymatrix}\,.
\eeq
The Hodge-Riemann bilinear relation in eq.~\eqref{eq:HRK3} then implies
\beq\label{eq:modular_siegel_2}
2n\,\det(\Imag\Omega_n(t)) > 0\,.
\eeq
It follows that $\det(\Imag\Omega_n(t))>0$, and so $\Imag\Omega_n(t)$ must be either positive or negative definite. We can then identify the connected component $D_+$ of the period domain with the Siegel upper half-space:
\beq
D_+ = \HH_2 = \big\{\omega\in\CC^{2\times2}: \omega^T=\omega\textrm{ and $\Imag\omega$ is positive definite}\}\,.
\eeq

Consider now an element $\gamma=\begin{mymatrix}A&B\\C&D\end{mymatrix}\in\widetilde{\Gamma}_{\!\cD}^+$. The generalised M\"obius transformation and the automorphic factor for the orthogonal group reduce to the standard action and automorphic factor of $\Sp(4,\ZZ)$ on $\HH_2$:
\beq\bsp
\phi_{n,5}(\gamma)\cdot t &\,= (A\Omega_n(t)+B)(C\Omega_n(t)+D)^{-1}\,,\\
j(\phi_{n,5}(\gamma), t) &\, = \det(C\Omega_n(t)+D)\,.
\esp\eeq
Putting everything together, and defining the following two functions on $\HH_2$,
\beq\bsp
\widetilde{\Pi}_0(\Omega) &\,= \Pi_0(\Omega_{22},\Omega_{12},n\Omega_{11})\,,\\
\tilde{z}(\Omega) &\,= z(\Omega_{22},\Omega_{12},n\Omega_{11})\,,
\esp\eeq
we see that the holomorphic period and the mirror map transform as
\beq\bsp\label{eq:Siegel_trafo}
\Pi_0(\phi_{n,5}(\gamma)\cdot t) &\,= \widetilde{\Pi}_0\big((A\Omega_n(t)+B)(C\Omega_n(t)+D)^{-1}\big)\\
&\, = \det(C\Omega_n(t)+D)\widetilde{\Pi}_0(\Omega_n(t))\\
&\, = \det(C\Omega_n(t)+D)\Pi_0(t)\,,\\
z(\phi_{n,5}(\gamma)\cdot t) &\,= \tilde{z}_0\big((A\Omega_n(t)+B)(C\Omega_n(t)+D)^{-1}\big) = \tilde{z}_0(\Omega_n(t)) =z(t)\,.
\esp\eeq
 In other words, we see that the holomorphic period and the mirror map, when seen as functions on the Siegel upper half-space $\HH_2$, transform as classical Siegel modular forms of weight 1 and 0 for the congruence subgroup $\widetilde{\Gamma}_{\!\cD}^+$ respectively. This is in agreement with the result of ref.~\cite{Clingher2010LatticePK}. 
 
 Let us conclude with a comment. Consider the a family of K3 surfaces with transcendental lattice 
 \beq
 \T_n = H\oplus H(n)\oplus \langle-2n\rangle\,.
 \eeq
 we can repeat exactly the same steps as before (up to replacing $\varphi_{\Delta_{n,2}}$ by $\varphi_{\Delta_{n,1}}$), and we see that exactly the same conclusion holds, and also in this case the holomorphic period and the mirror map can be expressed in terms of Siegel modular forms.
 
 
 \subsection{The isomorphism $\SO_0(2,4)\simeq \SU(2,2,\RR)/\ZZ_2$}
 \label{sec:D3=A3}
Finally, let us discuss an exceptional isomorphism for $m=4$, which will be the one encountered for the three-loop banana integrals. Consider a family of K3 surfaces with transcendental lattice
\beq
\T_n = H\oplus H \oplus A_2(-n)\,,
\eeq
where $ A_2(-n)$ is the rescaled lattice generated by the simple roots of the Dynkin diagram $A_2$. The Gram matrix of $\T_n$ is
\beq
\Sigma = \begin{mymatrix}
\ph0&\ph0&\ph0&\ph0&\ph1\\
\ph0&\ph0&\ph0&\ph1&\ph0\\
\ph0&\ph0&\ph S&\ph0&\ph0\\
\ph0&\ph1&\ph0&\ph0&\ph0\\
\ph1&\ph0&\ph0&\ph0&\ph0\\
\end{mymatrix}
\eeq
with $S=-n\Sigma_{A_2}$, and $\Sigma_{A_2} = \begin{mymatrix}\ph 2 & -1\\ -1&\ph2\end{mymatrix}$ is the Cartan matrix of $A_2$. 

We proceed just like in the previous cases. We start by discussing the case $n=1$, in which case we recover the Gram matrix in eq.~\eqref{eq:Sigma_4}. The explicit expression for the map $\phi_6$ in eq.~\eqref{eq:SO(2,6)_sequence} can be found in refs.~\cite{hermitian_thesis,HAUFFEWASCHBUSCH202122}, where it was also shown that $\cD(T_1)$ is the image under $\phi_6$ of the group $\SU(2,2,\ord_F)$ defined in eq.~\eqref{eq:SU22_def}. If we consider the composition $\phi_{n,6} = \varphi_{\Delta_{n,2}}\circ\phi_6$, then $\cD(T_1(n)) = \phi_{n,6}(\Gamma_{\SU,n})$ is a congruence subgroup of $\cD(\T_1)$, with $\Gamma_{\SU,n}$ a congruence subgroup of $\SU(2,2,\ord_F)$~\cite{HAUFFEWASCHBUSCH202122}:
\beq
\Gamma_{\SU,n}:= \big\{\gamma\in\SU(2,2,\ord_F):\gamma = \varepsilon\,\mathds{1}\!\!\!\!\mod n\ord_F, \,\,\varepsilon\in\ZZ,\,\,\varepsilon^2=1\!\!\!\!\mod n\big\}\,.
\eeq
Just like in the previous cases, using this subgroup we can again construct a non-trivial finite-index subgroup of $\SU(2,2,\ord_F)$:
\beq
\widetilde{\Gamma}_{\!\cD}^+ := \widetilde{\Gamma}_{\!M}^+\cap\Gamma_{\SU,n}\,.
\eeq
Note that every finite-index subgroup of $\SU(2,2,\ord_F)$ is a congruence subgroup, so that $\widetilde{\Gamma}_{\!\cD}^+$ is a congruence subgroup.

Let us define
\beq
\Omega_n(t) := \begin{mymatrix}\tfrac{1}{n}t_4 & t_2+\omega t_3 \\ t_2+\overline{\omega} t_3 &t_1\end{mymatrix}\,,
\eeq
where $\omega:=-\tfrac{1}{2}+i\tfrac{\sqrt{3}}{2}$, and $\overline{\omega}$ is its complex conjugate. The Hodge-Riemann bilinear relation in eq.~\eqref{eq:HRK3} implies
\beq\label{eq:modular_hermitian_2}
2n\,\det(\him(\Omega_n(t))) > 0\,,
\eeq
where we defined
\beq
\him(\Omega) := \tfrac{1}{2i}\left(\Omega- \Omega^\dagger\right)\,.
\eeq
It follows that $\det(\him(\Omega_n(t))>0$, and so $\him(\Omega_n(t))$ must be either positive or negative definite. We then see that the period domain $D_+$ can be identified with the hermitian half-space of degree 2,
\beq
D_+=\cH_2 = \big\{\omega\in\CC^{2\times2}: \him(\omega) \textrm{ positive definite}\big\}\,.
\eeq
Then, if $\gamma=\begin{mymatrix}A&B\\C&D\end{mymatrix}\in\widetilde{\Gamma}_{\!\cD}^+$, the generalised M\"obius transformation and the automorphic factor for the orthogonal group reduce to:
\beq\bsp
\phi_{n,6}(\gamma)\cdot t &\,= (A\Omega_n(t)+B)(C\Omega_n(t)+D)^{-1}\,,\\
j(\phi_{n,6}(\gamma), t) &\, = \det(C\Omega_n(t)+D)\,.
\esp\eeq
In order to interpret these results, we define the following two functions on $\cH_2$,
\beq\bsp
\widetilde{\Pi}_0(\Omega) &\,= \Pi_0\big(\widetilde{\Omega}\big)\,,\\
\tilde{z}(\Omega) &\,= z\big(\widetilde{\Omega}\big)\,,
\esp\eeq
where we defined 
\beq
\widetilde{\Omega} := \left(\Omega_{22},-\tfrac{\overline\omega}{\omega-\overline{\omega}}\Omega_{12}+\tfrac{{\omega}}{\omega-\overline{\omega}}\Omega_{21},\tfrac{1}{\omega-\overline{\omega}}\Omega_{12}-\tfrac{1}{\omega-\overline{\omega}}\Omega_{21},n\Omega_{11}\right)\,.
\eeq
The holomorphic period and the mirror map then transform as
\beq\bsp
\Pi_0(\phi_{n,6}(\gamma)\cdot t) &\,= \widetilde{\Pi}_0\big((A\Omega_n(t)+B)(C\Omega_n(t)+D)^{-1}\big)\\
&\, = \det(C\Omega_n(t)+D)\widetilde{\Pi}_0(\Omega_n(t))\\
&\, = \det(C\Omega_n(t)+D)\Pi_0(t)\,,\\
z(\phi_{n,6}(\gamma)\cdot t) &\,= \tilde{z}_0\big((A\Omega_n(t)+B)(C\Omega_n(t)+D)^{-1}\big) = \tilde{z}_0(\Omega_n(t)) =z(t)\,.
\esp\eeq
We see that the previous equation is identical to the transformation properties of a Siegel modular form in eq.~\eqref{eq:Siegel_trafo}. However, it would be wrong to conclude that also in this case we obtain Siegel modular forms, because the functions are defined on the hermitian half-space $\cH_2$ rather than the Siegel half-space $\HH_2$. If $\Gamma$ is a finite-index subgroup of $\SU(2,2,\ord_F)$, then a (holomorphic, meromorphic) function $\tilde{f} :\cH_2 \to\CC$ is called a \emph{hermitian modular form} if ~\cite{c9e0d978-3042-3cf9-a67c-48eb598b7003}
\beq
\tilde{f}\big((A\Omega+B)(C\Omega+D)^{-1}\big) = \det(C\Omega+D)^k\,\tilde{f}(\Omega)\,,
\textrm{~~~for all }\gamma=\begin{mymatrix}A&B\\C&D\end{mymatrix}\in\Gamma\,.
\eeq
We then see that the holomorphic period and the mirror map, when seen as functions on the hermitian half-space $\HH_2$, transform as hermitian modular forms (of weight 1 and 0) for the congruence subgroup $\widetilde{\Gamma}_{\!\cD}^+$. This result is in agreement with the result for $n=1$ in ref.~\cite{Nagano:2024aa}. Finally, we mention that, using exactly the same arguments, we see that the same conclusions hold for a family of K3 surfaces with the transcendental lattice
\beq
\cT = H\oplus H(n)\oplus A_2(-n)\,.
\eeq



\section{Automorphic properties of three-loop banana integrals}
\label{sec:bananas}

\subsection{Review of banana integrals}
\label{eq:review_banana}

In this section we apply the mathematical concepts from the previous section to the three-loop banana integrals in $D=2$ dimensions (see figure~\ref{fig:banana}),
\beq
I(p^2,m_1^2,m_2^2 ,m_3^2 ,m_4^2 ) = -\frac{1}{\pi^3}\int\prod_{j=1}^3\left(\frac{\rd^D k_j}{k_j^2-m_j^2}\right)\frac{1}{(p-k_1-k_2-k_3)^2-m_4^2}\,.
\eeq
Note that the only non-trivial functional dependence is in the ratios $z_i=m_i^2/p^2$, and we can put $p^2=1$ without loss of generality.
These integrals have been studied extensively in the literature. Most of the studies, however, have focused on the equal-mass case $m_1=m_2=m_3=m_4$~\cite{Bloch:2014qca,MR3780269,Primo:2017ipr,Broedel:2019kmn,Broedel:2021zij,Pogel:2022yat,Mishnyakov:2023wpd,Cacciatori:2023tzp,Mishnyakov:2023sly,Mishnyakov:2024rmb,delaCruz:2024xit}, and there are only very few studies of banana integrals depending on different masses~\cite{Klemm:2019dbm,Bonisch:2020qmm,Bonisch:2021yfw,Kreimer:2022fxm}.

\begin{figure}[!th]
\begin{center}
\includegraphics[scale=0.25]{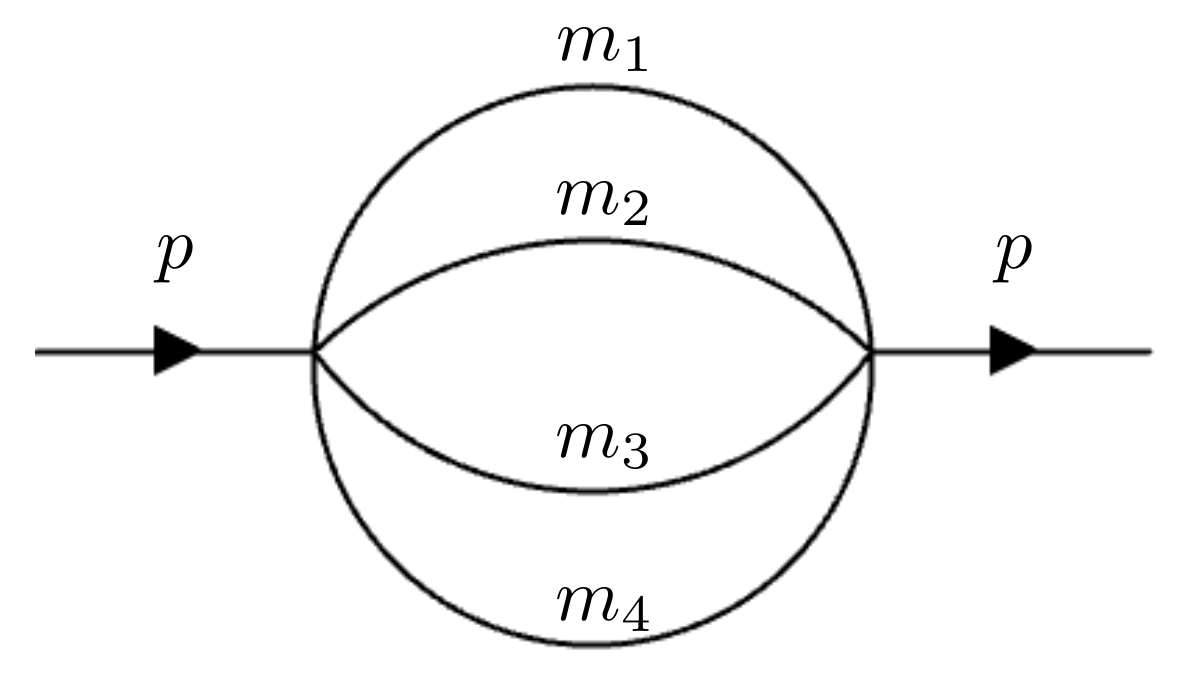}
\caption{\label{fig:banana}The three-loop banana integral.}
\end{center}
\end{figure}

It is known that, independently of the mass configuration, the maximal cuts of the three-loop banana integral in tow dimensions obtained by putting all four propagators on shell compute the periods of some family of K3 surfaces. This can for example be seen by analysing the parametric or Baikov representations of the integral. The resulting family of K3 surfaces is a family of hyperplane sections in a toric ambient space. It was pointed out~\cite{Kerr,Bonisch:2020qmm} that there is another family of K3 surfaces, realised as a complete intersection in weighted projective space, that has the same periods. The advantage of this model is that the number of free moduli of this family is four, and so it agrees with the number of independent mass parameters. This four-parameter family has a MUM-point at $z_i = m_i^2=0$. Close to the MUM-point the holomorphic and the single-logarithmic periods admit the expansion~\cite{Bonisch:2020qmm},
\beq\bsp
\Pi_0(z) &\,= 4\pi^2\Big[1+(z_1+z_2+z_3+z_4) + (z_1^2+4 z_2 z_1+4 z_3 z_1+4 z_4 z_1+z_2^2+z_3^2+z_4^2\\
&\,+4 z_2 z_3+4 z_2 z_4+4 z_3 z_4)+\ord(z_i^3)\Big]\,,\\
\Pi_{1i}(z)&\,=\Pi_0(z)\frac{\log z_i}{-2\pi i} + 4\pi i(z_1+z_2+z_3+z_4-z_i) + \ord(z_i^2)\,.
\esp\eeq
The concrete expressions will not be required in the following. The double-logarithmic period $\Pi_2(z)$ can then be obtained as the solution to the bilinear relation
\beq\label{eq:banana_HR}
\uPi(z)^T\Sigma_{4}\uPi(z) = 0\,,
\eeq
where 
\beq\label{eq:P_vec_banana}
\uPi(z) = \big(\Pi_2(z),\Pi_{14}(z),\Pi_{13}(z),\Pi_{12}(z),\Pi_{11}(z),\Pi_{0}(z)\big)^T\,, 
\eeq
and the intersection form is given by
\beq\label{eq:Sigma_ban}
\Sigma_{4} = \begin{mymatrix}
0&0&0&0&0&1\\
0&0&1&1&1&0\\
0&1&0&1&1&0\\
0&1&1&0&1&0\\
0&1&1&1&0&0\\
1&0&0&0&0&0
\end{mymatrix}\,.
\eeq
This Gram matrix determines the transcendental lattice $\T=H\oplus \Lambda$ of discriminant $\det\Sigma=-3$. At first glance, the lattice does not seem to match any of the special case cases discussed in section~\ref{sec:isom_K3}. In the remainder of this section, we show that the different possible mass configurations precisely match the cases discussed in section~\ref{sec:isom_K3}.


\subsection{The equal-mass case $m_1=m_2=m_3=m_4$}
We start by discussing the equal-mass case. The four single-logarithmic periods $\Pi_{1,i}(z)$ become equal, so that we only need to consider the three-dimensional vector of periods 
\beq
\uPi^{(1)}(z) = \big(\Pi_2(z),\Pi_{11}(z),\Pi_0(z)\big)^T_{|z_1=z_2=z_3=z_4}\,.
\eeq
The bilinear relation~\eqref{eq:banana_HR} reduces to
\beq
\uPi^{(1)}(z)^T\Sigma_{1}\uPi^{(1)}(z) = 0\,,\textrm{~~~with~~~}\Sigma_1 = \begin{mymatrix}0&0&1\\0&12&0\\1&0&0\end{mymatrix}\,.
\eeq
We see that the transcendental lattice in the equal-mass case is $\T = H\oplus\langle12\rangle$. This is not really a surprise, because it is well known that the maximal cuts of the equal-mass banana integrals are products of maximal cuts of the sunrise integral~\cite{Bloch:2014qca,MR3780269,Primo:2017ipr,ABE1973348,verrill1996}, and we see that we immediately recover this result from our analysis. In particular, we conclude that the periods and the mirror map admit a modular parametrisation, as expected.

\subsection{The three-equal-mass case $m_2=m_3=m_4$}
Next we analysis the case where three masses are equal, say $m_2=m_3=m_4$. Three single-logarithmic periods become equal, and we need to consider the period vector
\beq
\uPi^{(2a)}(z) = \big(\Pi_2(z),\Pi_{12}(z),\Pi_{11}(z),\Pi_0(z)\big)^T_{|z_2=z_3=z_4}\,,
\eeq
 which satisfies the bilinear relation
 \beq
\uPi^{(2a)}(z)^T\Sigma_{2a}\uPi^{(2a)}(z) = 0\,,\textrm{~~~with~~~}\Sigma_{2a} = \begin{mymatrix}0&\ph0&\ph0&\ph1\\0&\ph6&\ph3&\ph0\\0&\ph3&\ph0&\ph0\\1&\ph0&\ph0&\ph0\end{mymatrix}\,.
\eeq
At first glance, the Gram matrix does not correspond to any of the Gram matrices discussed in section~\ref{sec:isom_K3}. However, we may use the freedom to redefine the basis of periods via a $\GL(4,\ZZ)$ transformation that preserves the Hodge structure and the intersection product. Let us define a new basis of periods $\uPi^{\textrm{new}}(z)$ via
\beq
\uPi^{(2a)}(z) = R_{2a}\uPi^{(2a)\textrm{new}}(z)\,,\textrm{~~~with~~~}R_{2a} = \begin{mymatrix}1&\ph0&\ph0&\ph0\\
0&\ph0&-1&\ph0\\0&-1&\ph1&\ph0\\0&\ph0&\ph0&\ph1\end{mymatrix}\,.
\eeq
Note that $\det R_{2a} = 1$, and so $R_{2a}\in \GL(4,\ZZ)$, and the block-diagonal structure implies that this change of basis respects the Hodge structure.
We then find
\beq
R_{2a}^T\Sigma_{2a}R_{2a} = \begin{mymatrix} 0&\ph0&\ph0&\ph1\\0&\ph0&\ph3&\ph0\\0&\ph3&\ph0&\ph0\\1&\ph0&\ph0&\ph0\end{mymatrix}\,.
\eeq
In this basis it is manifest that the transcendental lattice is $\T=H\oplus H(3)$, and so it precisely matches the type of families of K3 surfaces studied in section~\ref{sec:ExE}. We can therefore immediately conclude that the periods and the mirror map admit a modular parametrisation in terms of ordinary modular forms. The modular expressions for the holomorphic period and mirror map will be presented elsewhere~\cite{inprep}.
 
\subsection{The pairwise-equal-mass case $m_1=m_3$ and $m_2=m_4$}
We now discuss the case where the masses are pairwise equal, say $m_1=m_3$ and $m_2=m_4$. The period vector is 
\beq
\uPi^{(2b)}(z) = \big(\Pi_2(z),\Pi_{12}(z),\Pi_{11}(z),\Pi_0(z)\big)^T_{|z_1=z_3,z_2=z_4}\,.
\eeq
It satisfies the bilinear relation
 \beq
\uPi^{(2b)}(z)^T\Sigma_{2b}\uPi^{(2b)}(z) = 0\,,\textrm{~~~with~~~}\Sigma_{2b} = \begin{mymatrix}0&\phantom{-}0&\phantom{-}0&\phantom{-}1\\0&\ph2&\ph4&\phantom{-}0\\0&\ph4&\ph2&\phantom{-}0\\1&\phantom{-}0&\phantom{-}0&\phantom{-}0\end{mymatrix}\,.
\eeq
We can define a new basis via the $\GL(4,\ZZ)$ transformation
\beq
\uPi^{(2b)}(z) = R_{2b}\uPi^{(2b)\textrm{new}}(z)\,,\textrm{~~~with~~~}R_{2b} = \begin{mymatrix}1&\ph0&\ph0&\ph0\\
0&\ph0&-1&\ph0\\0&-1&\ph2&\ph0\\0&\ph0&\ph0&\ph1\end{mymatrix}\,,
\eeq
and we find
\beq
R_{2b}^T\Sigma_{2b}R_{2b} = \begin{mymatrix} 0&\ph0&\ph0&\ph1\\0&\ph2&\ph0&\ph0\\0&\ph0&-6&\ph0\\1&\ph0&\ph0&\ph0\end{mymatrix}\,.
\eeq
We see that the transcendental lattice is $\T = H\oplus\langle2\rangle\oplus\langle-6\rangle$, which corresponds to the case studied in section~\ref{sec:Hilbert}. In particular, this implies that the holomorphic period and the mirror map are Hilbert modular forms.


\subsection{The two-equal-mass case $m_2=m_3$}
In the case of two equal masses $m_2=m_3$, the vector of periods reduces to
\beq
\uPi^{(3)}(z) = \big(\Pi_2(z),\Pi_{14}(z),\Pi_{12}(z),\Pi_{11}(z),\Pi_0(z)\big)^T_{|z_3=z_2}\,.
\eeq
It satisfies the bilinear relation
\beq
\uPi^{(3)}(z)^T\Sigma_{3}\uPi^{(3)}(z) = 0\,,\textrm{~~~with~~~}\Sigma_{3} = \begin{mymatrix}0&\phantom{-}0&\phantom{-}0&\ph0&\phantom{-}1\\
0&\ph0&\ph2&\ph1&\phantom{-}0\\0&\ph2&\ph2&\ph2&\phantom{-}0\\
0&\ph1&\ph2&\ph0&\phantom{-}0\\1&\phantom{-}0&\phantom{-}0&\phantom{-}0&\ph0\end{mymatrix}\,.
\eeq
We define the new basis 
\beq
\uPi^{(3)}(z) = R_{3}\uPi^{(3)\textrm{new}}(z)\,,\textrm{~~~with~~~}R_{3} = \begin{mymatrix}1&\ph0&\ph0&\ph0&\ph0\\
0&\ph0&-2&-1&\ph0\\
0&\ph0&\ph1&\ph0&\ph0\\
0&-1&-2&\ph0&\ph0\\0&\ph0&\ph0&\ph0&\ph1\end{mymatrix}\,,
\eeq
with $R_3\in\GL(5,\ZZ)$, and we get
\beq
R_{3}^T\Sigma_{3}R_{3} = \begin{mymatrix} 0&\ph0&\ph0&\ph0&\ph1\\
0&\ph0&\ph0&\ph1&\ph0\\0&\ph0&-6&\ph0&\ph0\\
0&\ph1&\ph0&\ph0&\ph0\\1&\ph0&\ph0&\ph0&\ph0\end{mymatrix}\,.
\eeq
We find the transcendental lattice $\T=H\oplus H\oplus \langle-6\rangle$, which was studied in section~\ref{sec:B2=C2}. Hence, the mirror map is a classical Siegel modular function, and holomorphic period is a classical Siegel modular form of weight 1. At genus two every point in the Siegel upper half-space $\HH_2$ corresponds to a curve of genus two, and every curve of genus two is hyperelliptic. As a consequence, there is a family of hyperelliptic curves of genus two such that the determinant of the matrix of $A$-cycle periods equals $\Pi_0^{(3)}$. 

\subsection{The case of four different masses}
Finally, let us return to the general case of four different masses. The basis of periods and the intersection form were given in eqs.~\eqref{eq:P_vec_banana} and~\eqref{eq:Sigma_ban}. We define the matrix
\beq
R_{4} = \begin{mymatrix}1&\ph0&\ph0&\ph0&\ph0&\ph0\\
0&\ph0&\ph1&-1&\ph0&\ph0\\
0&1&-1&\ph0&\ph0&\ph0\\
0&\ph0&-1&\ph0&\ph1&\ph0\\
0&\ph0&\ph0&\ph1&\ph0&\ph0\\
0&\ph0&\ph0&\ph0&\ph0&\ph1\end{mymatrix}\,,
\eeq
with $\det R_4=1$. We can then define the new basis
\beq
\uPi(z) = R_{4}\uPi^{\textrm{new}}(z)\,,
\eeq
and we obtain the Gram matrix
\beq
R_{4}^T\Sigma_{4}R_{4} = \begin{mymatrix} 0&\ph0&\ph0&\ph0&\ph0&\ph1\\
 0&\ph0&\ph0&\ph0&\ph1&\ph0\\
0&\ph0&\ph-2&\ph1&\ph0&\ph0\\
0&\ph0&\ph1&-2&\ph0&\ph0\\
 0&\ph1&\ph0&\ph0&\ph0&\ph0\\
 1&\ph0&\ph0&\ph0&\ph0&\ph0\end{mymatrix}\,.
\eeq
This is the Gram matrix of the lattice $\T=H\oplus H\oplus A_2(-1)$ studied in section~\ref{sec:D3=A3}, and so the holomorphic period and the mirror map are hermitian modular forms.


\section{Conclusions}
\label{sec:conclusions}

The goal of this paper was to initiate the study of Feynman integrals associated to families of K3 surfaces depending on $m$ parameters. While the case $m=1$ is by now relatively well understood, there has not been any systematic approach in physics to the cases $m\ge 2$. An important first step in understanding Feynman integrals is understanding their cuts, which in this case correspond to the periods of the K3 surface. In a first part of this paper we have reviewed the mathematical background relevant to understanding the period geometry of families of K3 surfaces, highlighting in particular the parallels to the case of families of elliptic curves, which is by now relatively well understood in physics. In particular, the modular group $\SL(2,\ZZ)$ and modular forms in the elliptic case get replaced in the K3 case by the orthogonal group of the transcendental lattice and the corresponding orthogonal modular forms. Orthogonal modular forms are an active area of research in mathematics (cf.,~e.g.,~refs.~\cite{bruinierbook,orthogonal_PhD,WANG2020107332,Wang_2021,Schaps2022FourierCO,Schaps2023}, including algorithms for computer codes~\cite{Assaf:2022aa}). For the future, it would be interesting to study in how far the mathematics of orthogonal modular forms can be leveraged to compute Feynman integrals in the same way that ordinary modular forms have become an important tool for multi-loop integrals.

In a second part of this paper we have studied examples where one can use the transcendental lattice to identify other classes of modular form that arise from K3 periods. The key observation is the well-known fact that for $m\le 4$, the orthogonal groups can be $\OO(2,m)$ are isomorphic to other real Lie groups. In this way we can relate the automorphic properties of the periods and the mirror map to those of other groups. We have worked out various examples which allow us to uncover ordinary modular forms as well as Hilbert, Siegel and hermitian modular forms. We summarise these correspondences in table~\ref{tab:summary}. We recover in this way the well-known result that the periods and the mirror map of one-parameter families of K3 admit a modular parametrisation, and our results should seen as a generalisation to families depending on more moduli. 

\begin{table}[!t]
\begin{center}\begin{tabular}{|c|c|c|c|}
\hline
$m$&$\T$ &  Modularity & $D_+$ \\
\hline\hline
1&$H\oplus\langle 2n\rangle$ &Elliptic &$\HH$ \\ 
\hline
2&$H\oplus H(n)$ & Elliptic &$\HH\times \HH$\\ 
\hline
2&$H\oplus \langle 2n\rangle\oplus \langle -2dn\rangle$ & Hilbert &$\HH\times \HH$ \\
\hline
\multirow{ 2}{*}{3}&$H\oplus H\oplus \langle -2n\rangle$ &\multirow{ 2}{*}{Siegel} &\multirow{ 2}{*}{$\HH_2$}  \\ 
&$H\oplus H(n)\oplus \langle -2n\rangle$&&\\
\hline
\multirow{ 2}{*}{4}&$H\oplus H\oplus A_2(-n)$ & \multirow{ 2}{*}{Hermitian} &\multirow{ 2}{*}{$\cH_2$} \\
&$H\oplus H(n)\oplus A_2(-n)$ &&\\
\hline
\end{tabular}
\caption{\label{tab:summary} Summary of the different transcendental lattice and their automorphic properties studied in section~\ref{sec:isom_K3}.}
\end{center}
\end{table}

Finally, we have applied these ideas to study the periods that arise from the maximal cuts of three-loop banana integrals depending on any configuration of non-zero masses. Our key observation is that in all cases the K3 periods for banana integrals can be expressed in terms of other classes of modular forms, see table~\ref{tab:banana} for a summary. For the future, it would be interesting to understand if one can identify the explicit expression for these modular forms for banana integrals, and for Feynman integrals attached to K3 surfaces more generally.

\begin{table}[!th]
\begin{center}\begin{tabular}{|c|c|c|c|}
\hline
Masses&$\T$ &  Modularity & $D_+$ \\
\hline\hline
all equal&$H\oplus\langle 12\rangle$ & elliptic &$\HH$ \\
\hline
$m_2=m_3=m_4$&$H\oplus H(3)$& elliptic &$\HH\times \HH$ \\
\hline
$m_1=m_3$&\multirow{2}{*}{$H\oplus \langle 2\rangle\oplus \langle -6\rangle$} & \multirow{2}{*}{Hilbert} &\multirow{2}{*}{$\HH\times \HH$} \\
$m_2=m_4$&&&  \\ 
\hline
$m_2=m_3$&$H\oplus H\oplus \langle -6\rangle$ & Siegel &$\HH_2$ \\ 
\hline
all different&$H\oplus H\oplus A_2(-3)$ &Hermitian &${\cH}_2$\\
\hline
\end{tabular}
\caption{\label{tab:banana} Summary of the transcendental lattices and automorphic properties of the maximal cuts of three-loop banana integrals in the basis of periods defined in section~\ref{eq:review_banana}.}
\end{center}
\end{table}

\subsection*{Acknowledgments}
We are grateful to Albrecht Klemm for discussions an to Sara Maggio for collaboration on related topics.
This work was funded by the European Union (ERC Consolidator Grant LoCoMotive 101043686). Views and opinions expressed are however those of the author(s) only and do not necessarily reflect those of the European Union or the European Research Council. Neither the European Union nor the granting authority can be held responsible for them.

\appendix

\section{Some group theory}
\label{app:groups}
In this appendix we review some standard material from group theory which plays an important role in this paper. We specifically focus on subgroups of finite index. We only consider discrete groups, which is sufficient for our purposes.

\subsection{Cosets and the index of a subgroup}
Consider a group $G$ and a subgroup $H$. The \emph{index} of $H$ in $G$ is defined to be the number of cosets of $H$ in $G$:
\beq
[G:H]  = \left|\faktor{G}{H}\right|\,.
\eeq
If $K\subseteq H$ is a subgroup, then we have
\beq\label{eq:GKH}
[G:K] = [G:H] \, [H:K]\,.
\eeq
This relation has an important consequence. Assume that $K$ has finite index in $G$, $[G:K]<\infty$. Then we see from eq.~\eqref{eq:GKH} that neither of the two factors on the right-hand side may be infinite, and so we arrive at the following conclusion:
\begin{lemma}
\emph{Let $K\subseteq H\subseteq G$ be groups, and assume that $K$ has finite index in $G$. Then $K$ has finite index in $H$, and $H$ has finite index in $G$.}
\end{lemma}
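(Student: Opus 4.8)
The plan is to deduce the statement directly from the multiplicativity of the index recorded in eq.~\eqref{eq:GKH}, namely $[G:K] = [G:H]\,[H:K]$. First I would observe that the index of any subgroup in any group is a positive cardinal: it is at least $1$, since the coset of the identity is always present. Hence neither of the two factors on the right-hand side of eq.~\eqref{eq:GKH} can be zero.

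Next, under the hypothesis $[G:K]<\infty$, I would invoke the elementary fact that a product of two nonzero cardinal numbers is finite only if both of them are finite. Applying this to eq.~\eqref{eq:GKH} forces $[G:H]<\infty$ and $[H:K]<\infty$ simultaneously. The first of these says that $H$ has finite index in $G$, and the second that $K$ has finite index in $H$, which are precisely the two assertions of the lemma.

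The only point that genuinely warrants a comment is the validity of eq.~\eqref{eq:GKH} when some of the indices are a priori allowed to be infinite; but this identity holds as an equality of cardinals and is proved by an explicit choice of coset representatives: if $\{g_i\}$ represents the cosets in $\faktor{G}{H}$ and $\{h_j\}$ represents the cosets in $\faktor{H}{K}$, then $\{g_i h_j\}$ is a complete and irredundant set of representatives for the cosets in $\faktor{G}{K}$, so the number of the latter is the product of the numbers of the former. This verification is entirely routine, and I expect it to be the only place requiring any actual argument; for the present purposes, where only the finite-index case is needed, one may equally well cite eq.~\eqref{eq:GKH} from any standard algebra reference and skip it. Everything else is immediate.
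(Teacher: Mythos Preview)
Your proposal is correct and follows essentially the same approach as the paper: the paper also derives the lemma directly from eq.~\eqref{eq:GKH} by observing that if $[G:K]<\infty$ then neither factor on the right-hand side can be infinite. Your additional remarks on why indices are nonzero and on the validity of eq.~\eqref{eq:GKH} at the level of cardinals are fine supplementary justifications, but the core argument is identical.
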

 Another useful property is the following:
\begin{lemma}\label{eq:finite_index_intersection}
\emph{Let $H, K\subseteq G$ be groups, and assume that both $H$ and $K$ have finite index in $G$. Then also $H\cap K$ has finite index in $G$, and $[H:(H\cap K)] \le [G:K]$.}
\end{lemma}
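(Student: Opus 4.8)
The statement to prove is Lemma~\ref{eq:finite_index_intersection}: if $H,K\subseteq G$ both have finite index, then $H\cap K$ has finite index, with the quantitative bound $[H:(H\cap K)]\le [G:K]$.

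The plan is to prove the quantitative bound $[H:(H\cap K)]\le [G:K]$ directly, and then deduce finite index of $H\cap K$ in $G$ from it. The key is to construct an injective map from the coset space $\faktor{H}{(H\cap K)}$ into $\faktor{G}{K}$. First I would fix the natural candidate: send a left coset $h(H\cap K)$, with $h\in H$, to the left coset $hK\in\faktor{G}{K}$. The first step is to check this map is \emph{well-defined}: if $h(H\cap K) = h'(H\cap K)$, then $h^{-1}h'\in H\cap K\subseteq K$, so $hK = h'K$. The second step is to check \emph{injectivity}: if $hK = h'K$ with $h,h'\in H$, then $h^{-1}h'\in K$; but also $h^{-1}h'\in H$ since $H$ is a subgroup, so $h^{-1}h'\in H\cap K$, giving $h(H\cap K) = h'(H\cap K)$. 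Hence the map is injective, so $[H:(H\cap K)] = |\faktor{H}{(H\cap K)}| \le |\faktor{G}{K}| = [G:K] < \infty$.

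Next I would upgrade this to finite index of $H\cap K$ in $G$. Apply the tower formula eq.~\eqref{eq:GKH} with the chain $H\cap K\subseteq H\subseteq G$:
\beq
[G:(H\cap K)] = [G:H]\,[H:(H\cap K)]\,.
\eeq
The first factor is finite by hypothesis on $H$, and the second factor is finite by the bound just established, so the product is finite. This completes the proof.

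I do not expect any serious obstacle here; the only point requiring a little care is the injectivity step, where one must use that $H$ is closed under inverses and products in order to conclude $h^{-1}h'\in H\cap K$ rather than merely in $K$. It is worth noting that the argument never uses normality of any subgroup, so it applies verbatim to the matrix groups appearing in the main text (such as the congruence subgroups and discriminant kernels), where the relevant intersections $\widetilde{\Gamma}_{\!\cD}^+ = \widetilde{\Gamma}_{\!M}^+\cap \Gamma_{\ord_F,n}$ and similar are exactly of this form.
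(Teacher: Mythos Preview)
Your proof is correct and is the standard textbook argument for this result. The paper in fact does not supply a proof of this lemma at all---it is stated as ``standard material from group theory'' and left unproved---so there is no alternative approach to compare against; your argument via the injective map $h(H\cap K)\mapsto hK$ followed by the tower formula is exactly what one would expect here.
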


Finally, the following result plays an important role in the main text. 
\begin{lemma}
\emph{Let $H, K\subseteq G$ be groups, and assume that $H$ has finite index in $G$. If $H\cap K=\{e\}$, then $K$ is a finite group.}
\end{lemma}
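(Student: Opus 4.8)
The plan is to build an explicit injection from $K$ into the set of cosets $\faktor{G}{H}$, which is finite by hypothesis. First I would consider the coset projection $\pi\colon G\to \faktor{G}{H}$, $g\mapsto gH$, and restrict it to $K$ to obtain a map $\pi|_K\colon K\to \faktor{G}{H}$. The heart of the argument is to check that $\pi|_K$ is injective. Suppose $k_1,k_2\in K$ with $k_1H=k_2H$. Then $k_1^{-1}k_2\in H$; but $k_1^{-1}k_2$ also lies in $K$, since $K$ is a subgroup, so $k_1^{-1}k_2\in H\cap K=\{e\}$ and hence $k_1=k_2$. Injectivity then gives $|K|=\bigl|\pi|_K(K)\bigr|\le \bigl|\faktor{G}{H}\bigr|=[G:H]<\infty$, which is exactly the claim.

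Equivalently, one may phrase this in terms of the left-translation action of $K$ on the finite set $\faktor{G}{H}$: the stabiliser of the trivial coset $eH$ is $\{k\in K : kH=H\}=K\cap H=\{e\}$, so the orbit of $eH$ has cardinality $|K|$, which is bounded by $\bigl|\faktor{G}{H}\bigr|$. Either way, no serious obstacle arises; the only point requiring a modicum of care is that the injectivity step genuinely uses \emph{both} hypotheses — the finiteness of the index, to make the target finite, and the triviality of $H\cap K$ (together with $K$ being closed under products and inverses), to collapse $k_1$ and $k_2$ to a single element. The analogous statement with right cosets $Hg$ holds by the same reasoning.
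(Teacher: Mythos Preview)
Your proof is correct and takes essentially the same approach as the paper: both show that distinct elements of $K$ lie in distinct left $H$-cosets, giving $|K|\le [G:H]<\infty$. Your presentation via the injectivity of $\pi|_K$ (or equivalently the orbit--stabiliser argument) is in fact cleaner and more direct than the paper's element-by-element case analysis.
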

Since we have not seen the proof anywhere in the literature, we include it below. We mostly need the following corollary:
\begin{corollary}
\emph{Let $H, K\subseteq G$ be groups, and assume that $H$ has finite index in $G$. If $K$ is an infinite group, then $H\cap K\neq\{e\}$.}
\end{corollary}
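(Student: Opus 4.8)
The plan is to prove the Lemma stated just above; the Corollary is then simply its contrapositive. The key idea is elementary: exhibit an injection from $K$ into the finite set of left cosets of $H$ in $G$.

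First I would set $n := [G:H] < \infty$ and consider the restriction to $K$ of the canonical projection $G \to \faktor{G}{H}$, that is, the map of sets
\beq
\iota: K \to \faktor{G}{H}\,,\qquad k \mapsto kH\,.
\eeq
It is important to stress that this is merely a map of sets: $H$ need not be normal in $G$, so $\faktor{G}{H}$ carries no group structure and $\iota$ is no homomorphism. The point is only that its target is finite, with exactly $n$ elements.

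Next I would check that $\iota$ is injective, and this is the only place where the hypothesis $H\cap K=\{e\}$ is used. Suppose $k_1,k_2\in K$ satisfy $k_1H=k_2H$. Then $k_2^{-1}k_1\in H$; but also $k_2^{-1}k_1\in K$ since $K$ is a subgroup, so $k_2^{-1}k_1\in H\cap K=\{e\}$, whence $k_1=k_2$. Therefore $|K|=|\iota(K)|\le\big|\faktor{G}{H}\big|=n<\infty$, so $K$ is finite. The Corollary follows at once by contraposition: if $K$ is infinite, then $H\cap K$ cannot be the trivial group.

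There is essentially no hard step here; the only things to be careful about are not to mistake $\iota$ for a group homomorphism and to use left cosets consistently so that the cancellation $k_2^{-1}k_1$ genuinely lands in both $H$ and $K$ (one could equally run the argument with right cosets $Hk$, by symmetry). If one wanted a slicer statement one could note this also gives the bound $|K|\le[G:H]$, in the spirit of Lemma~\ref{eq:finite_index_intersection}.
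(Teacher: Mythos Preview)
Your proof is correct and uses essentially the same idea as the paper: distinct elements of $K$ must lie in distinct left $H$-cosets, giving $|K|\le[G:H]$. Your presentation via the injective map $\iota:K\to\faktor{G}{H}$ is slightly slicker than the paper's element-by-element argument, but the mathematical content is identical.
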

\begin{proof}
Let $r:=[G:H]$, and let $g_1=e,g_2,\ldots,g_r$ be a set of coset representatives of $H$, i.e., the cosets of $H$ are $g_1H=H, g_2H,\ldots$, and we have $g_iH=g_jH$ if and only if $g_i=g_j$. 

Assume that $H\cap K=\{e\}$. If $K$ is the trivial group, then the claim is obvious. Otherwise there is $g\in K$ with $g\neq e$, and there is a unique coset representative $g_i$ with $i>1$ such that $g\in g_iH$. If $|K|=2$, we are done. Else, there is $g'\in K$ with $g'\notin \{e,g\}$, and there is a unique coset representative $g_j$ such that $g'\in g_jH$. If $i=j$, then there is $h,h'\in H$ such that $g = g_ih$ and $g' = g_ih'$. This implies that 
\beq
g'g^{-1} = h'h^{-1} \in  H\cap K = \{e\}\,.
\eeq
This implies $g'g^{-1}=e$, which is impossible for $g'\neq g$. Hence we must have $i\neq j$. We thus see that $g$ and $g'$ must lie in distinct $H$-cosets. 

We can continue this way, and we see that all elements of $K$ must lie in distinct $H$-cosets, which implies
\beq
|K| \le [G:H] < \infty\,.
\eeq
\end{proof}

\subsection{Finite-index subgroups and direct products}
In section~\ref{sec:D2=A1xA1} we needed to understand the finite-index subgroups of the direct product of to copies $\SL(2,\ZZ)$. The subgroups of a direct product are described by \emph{Goursat's lemma}:
\begin{lemma}
\emph{Let $G_1$ and $G_2$ be two groups. Then there is a bijection between:
\begin{enumerate}
\item subgroups of $G_1\times G_2$,
\item quintuples $(H_1,N_1,H_2,N_2,\varphi)$, where $H_i\subseteq G_i$ are subgroups, $N_i$ is a normal subgroup of $H_i$, and $\varphi:\faktor{H_1}{N_1}\to\faktor{H_2}{N_2}$ is an isomormphism.
\end{enumerate}}
\end{lemma}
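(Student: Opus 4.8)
The plan is to prove the lemma by exhibiting the bijection explicitly: I would write down a map from subgroups of $G_1\times G_2$ to quintuples and a map in the opposite direction, and then check that the two are mutually inverse. This is the classical argument for Goursat's lemma.

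First I would attach a quintuple to a subgroup $U\subseteq G_1\times G_2$. Writing $p_i:G_1\times G_2\to G_i$ for the coordinate projections, I set $H_i:=p_i(U)$, which is a subgroup of $G_i$, and I let $N_1:=\{g_1\in G_1:(g_1,e)\in U\}$ and $N_2:=\{g_2\in G_2:(e,g_2)\in U\}$. By construction $N_1\times\{e\}=U\cap(G_1\times\{e\})$ and $\{e\}\times N_2=U\cap(\{e\}\times G_2)$, so $N_i$ is a subgroup of $H_i$; normality of $N_1$ in $H_1$ follows by conjugating $(n_1,e)\in U$ by any element $(h_1,h_2)\in U$ lying above a given $h_1\in H_1$, and similarly for $N_2$. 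I then define $\varphi:\faktor{H_1}{N_1}\to\faktor{H_2}{N_2}$ by $\varphi(g_1N_1):=g_2N_2$ for any $g_2$ with $(g_1,g_2)\in U$, and I check: $\varphi$ is well defined, since $(g_1,g_2),(g_1,g_2')\in U$ forces $(e,g_2^{-1}g_2')\in U$ and hence $g_2N_2=g_2'N_2$; $\varphi$ is a homomorphism, immediately from closure of $U$ under multiplication; $\varphi$ is surjective, because $H_2=p_2(U)$; and $\varphi$ is injective, because $(g_1,g_2)\in U$ with $g_2\in N_2$ gives $(g_1,e)=(g_1,g_2)(e,g_2)^{-1}\in U$, so $g_1\in N_1$.

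In the reverse direction, given a quintuple $(H_1,N_1,H_2,N_2,\varphi)$ I would set
\beq
U:=\big\{(h_1,h_2)\in H_1\times H_2:\varphi(h_1N_1)=h_2N_2\big\}\,,
\eeq
and verify that $U$ is a subgroup of $G_1\times G_2$; this is exactly the assertion that the condition defining $U$ is stable under coordinatewise multiplication and inversion, which uses only that $\varphi$ is a homomorphism and that the $N_i$ are normal, so that the quotients are genuinely groups. It then remains to check that the two assignments are inverse to one another. Going $U\mapsto$ quintuple $\mapsto U'$: the inclusion $U\subseteq U'$ holds by the very definition of $\varphi$, and conversely any $(h_1,h_2)\in U'$ can be written as $(h_1,\tilde h_2)\cdot(e,(\tilde h_2)^{-1}h_2)$ with $(h_1,\tilde h_2)\in U$ and $(\tilde h_2)^{-1}h_2\in N_2$, so $(h_1,h_2)\in U$. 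Going quintuple $\mapsto U\mapsto$ quintuple: one checks $p_i(U)=H_i$ (using surjectivity of $\varphi$), $\{g_1:(g_1,e)\in U\}=N_1$ (using injectivity of $\varphi$), and that the isomorphism read off from $U$ is again $\varphi$.

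All of this is elementary, and I do not expect a genuinely hard step. The only points that need a little care are the normality of $N_i$ inside $H_i$, which is what makes the quotients groups in the first place, and --- in the round trip starting from a quintuple --- confirming that one recovers exactly $N_i$ and not a proper subgroup of it, which is precisely where one uses that $\varphi$ is an isomorphism rather than merely a surjective homomorphism.
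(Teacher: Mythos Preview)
Your proof is correct and is the standard argument for Goursat's lemma. The paper, however, does not actually prove this statement: it is quoted as a classical result (Goursat's lemma) without proof, and is then used as input for the subsequent lemma about finite-index subgroups of direct products. So there is nothing to compare against; your write-up simply supplies the proof the paper omits.
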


Goursat's lemma allows one to write down the subgroups of $G_1\times G_2$ explicitly. Let $Q=(H_1,N_1,H_2,N_2,\varphi)$ be a quintuple as described by the lemma, and let $\{h_{1,i}\}$ and $\{h_{2,j}\}$ be coset representatives of $N_1$ and $N_2$ in $H_1$ and $H_2$ respectively. 
Note that, since $\faktor{H_1}{N_1}\simeq\faktor{H_2}{N_2}$, $N_1$ and $N_2$ have the same number of cosets in $H_1$ and $H_2$ respectively.
We then associate to $Q$ the subgroup
\beq
\Gamma_Q := \big\{ (h_{1,i}n,h_{2,i}n') \in H_1\times H_2: (n,n')\in N_1\times N_2\big\}\subseteq G_1\times G_2\,.
\eeq
In the following we choose the convention that $h_{i,1}=e_i$ is the unit element of $G_i$. We then immediately see that $N_1\times N_2$ is always a subgroup of $\Gamma_Q$.

We can use Goursat's lemma to obtain a very useful property of  finite-index subgroups of $G_1\times G_2$:
\begin{lemma}
\emph{$\Gamma_Q$ has finite index in $G_1\times G_2$, then $N_1\times N_2$ has finite index in $\Gamma_Q$.}
\end{lemma}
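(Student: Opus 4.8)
The plan is to reduce the claim to the two finiteness statements $[G_1:N_1]<\infty$ and $[G_2:N_2]<\infty$, and then invoke the chain lemma stated at the beginning of this appendix. Indeed, granting those two facts, the direct product $N_1\times N_2$ has index $[G_1:N_1]\,[G_2:N_2]<\infty$ in $G_1\times G_2$; and since (as already observed right after Goursat's lemma, using the normalisation $h_{i,1}=e_i$) one has the chain of subgroups
\[
N_1\times N_2\ \subseteq\ \Gamma_Q\ \subseteq\ G_1\times G_2\,,
\]
the first lemma of this appendix, applied with $K=N_1\times N_2$, $H=\Gamma_Q$ and $G=G_1\times G_2$, immediately yields $[\Gamma_Q:N_1\times N_2]<\infty$.

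It therefore remains to prove $[G_i:N_i]<\infty$ for $i=1,2$; by symmetry I treat $i=1$. The crucial point — and the only place where one actually has to work — is the identity
\[
\Gamma_Q\cap\big(G_1\times\{e_2\}\big)\ =\ N_1\times\{e_2\}\,.
\]
I would verify this directly from the explicit description $\Gamma_Q=\{(h_{1,i}n,\,h_{2,i}n'):(n,n')\in N_1\times N_2,\ i=1,\dots,[H_1:N_1]\}$: an element of $\Gamma_Q$ lies in $G_1\times\{e_2\}$ precisely when $h_{2,i}n'=e_2$, which forces $h_{2,i}\in N_2$; since the $h_{2,j}$ are representatives of the cosets of $N_2$ in $H_2$ with $h_{2,1}=e_2$, this means $i=1$, hence also $h_{1,1}=e_1$, so the element has the form $(n,e_2)$ with $n\in N_1$. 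The reverse inclusion is clear. Granting the identity, I use that $\Gamma_Q$ has finite index in $G_1\times G_2$ by hypothesis, together with the elementary fact — the same coset-counting argument underlying Lemma~\ref{eq:finite_index_intersection} — that a finite-index subgroup meets any subgroup $P$ in a subgroup of finite index in $P$. Taking $P=G_1\times\{e_2\}$, which is isomorphic to $G_1$ via $g\mapsto(g,e_2)$ and carries $N_1\times\{e_2\}$ onto $N_1$, I conclude $[G_1:N_1]=[P:P\cap\Gamma_Q]\le[G_1\times G_2:\Gamma_Q]<\infty$, as required.

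For completeness I would note the sharper statement obtainable by the same method: running the argument through the projection $\mathrm{pr}_1\colon G_1\times G_2\to G_1$ rather than the inclusion $G_1\times\{e_2\}\hookrightarrow G_1\times G_2$, one computes $\Ker(\mathrm{pr}_1|_{\Gamma_Q})=\{e_1\}\times N_2$ and $\mathrm{pr}_1(\Gamma_Q)=H_1$, whence $\Gamma_Q/(N_1\times N_2)\cong H_1/N_1$ and therefore $[\Gamma_Q:N_1\times N_2]=[H_1:N_1]$. The only genuine obstacle throughout is the Goursat bookkeeping in the displayed identity; once that is in hand, the result is a formal consequence of the index lemmas already established in this appendix.
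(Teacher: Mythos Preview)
Your proof is correct and takes a genuinely different route from the paper's. The paper argues by case analysis on $r:=[H_1:N_1]=[H_2:N_2]$: for $r<\infty$ it sandwiches $N_1\times N_2\subseteq\Gamma_Q\subseteq H_1\times H_2$, and for $r=\infty$ it derives a contradiction by showing that the elements $(h_{1,i},e_2)$ lie in pairwise distinct $\Gamma_Q$-cosets of $H_1\times H_2$, so $[H_1\times H_2:\Gamma_Q]\ge r=\infty$. You instead prove the stronger statement $[G_i:N_i]<\infty$ directly, via the identity $\Gamma_Q\cap(G_1\times\{e_2\})=N_1\times\{e_2\}$ together with the bound $[P:P\cap\Gamma_Q]\le[G_1\times G_2:\Gamma_Q]$; the sandwich then finishes the job without any case split. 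Both arguments hinge on the same Goursat bookkeeping (forcing $i=1$ from $h_{2,i}n'=e_2$), but yours packages it more cleanly and yields the sharper output $[\Gamma_Q:N_1\times N_2]=[H_1:N_1]$ that you note at the end. One small caveat: the inequality $[P:P\cap\Gamma_Q]\le[G_1\times G_2:\Gamma_Q]$ that you invoke is the well-known strengthening of Lemma~\ref{eq:finite_index_intersection} in which $P$ need not itself have finite index; you flag this correctly, and the one-line injectivity argument $p(P\cap\Gamma_Q)\mapsto p\Gamma_Q$ justifies it.
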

\begin{proof}
Let $r:=[H_1:N_1] = [H_2:N_2]$. If $r=1$, then $N_i=H_i$, and the claim is trivial. Assume therefore that $r>1$.
Since $\Gamma_Q\subseteq H_1\times H_2\subseteq G_1\times G_2$, $\Gamma_Q$ has finite index in $H_1\times H_2$.

Assume $r<\infty$. Then
\beq
[H_1\times H_2: N_1\times N_2] = r^2<\infty\,,
\eeq
and since $N_1\times N_2\subseteq \Gamma_Q\subseteq H_1\times H_2$, the claim follows.

It remains to analyse the case $r=\infty$. We have $(h_{1,i},e_2)\in \Gamma_Q$ if and only if $i=1$ (and $h_{1,1}=e_1$). Indeed, $(h_{1,i},e_2)\in \Gamma_Q$ if and only if
\beq
\varphi(h_{1,i}N_1) = e_2N_2 = h_{2,1}N_2 = \varphi(h_{1,1}N_1)\,.
\eeq
Since $\varphi$ is an isomorphism, this can only happen if $h_{1,i}N_1 = h_{1,1}N_1$, i.e., if $i=1$.

Let $i,j\neq 1$, and let $x\in (h_{1,i},e_2)\Gamma_Q\cap(h_{1,j},e_2)\Gamma_Q$. Then there must be $(h_{1,a}n,h_{2,a}n')$ and $(h_{1,b}m,h_{2,b}m')$ in $\Gamma_Q$, with $(n,n'),(m,m')\in N_1\times N_2$, such that
\beq
x = (h_{1,i}h_{1,a}n,h_{2,a}n') = (h_{1,j}h_{1,b}m,h_{2,b}m')\,.
\eeq
From this we conclude that $h_{2,a}n' = h_{2,b}m'$, and so $a=b$ (because the $h_{2,k}$ are coset representatives). Moreover, since $N_1$ is normal in $H_1$, there is $\tilde{n},\tilde{m}\in N_1$ such that $h_{1,a}n=\tilde{n}h_{1,a}$ and $h_{1,b}m=\tilde{m}h_{1,b}$. Putting everyting together, we get the constraint
\beq
h_{1,i}\tilde{n}h_{1,a} = h_{1,j}\tilde{m}h_{1,a}\,.
\eeq
From this follows that $h_{1,i}=h_{1,j}$ (because the $h_{1,k}$ are coset representatives). Hence,
\beq
(h_{1,i},e_2)\Gamma_Q\cap(h_{1,j},e_2)\Gamma_Q = \emptyset \textrm{~~~if } i\neq j\,.
\eeq
In other words, $(h_{1,i},e_2)\Gamma_Q$ define distinct cosets in $H_1\times H_2$. There are then $r=\infty$ distinct cosets $(h_{1,i},e_2)\Gamma_Q$, which is impossible if $\Gamma_Q$ has finite index in $H_1\times H_2$. Hence, $r=\infty$ is impossible, which finishes the proof.
\end{proof}

\subsection{Congruence subgroups}
In the context of periods for elliptic curves, one typically encounters modular forms for congruence subgroups of $\SL(2,\ZZ)$. The latter are defined as subgroups $\Gamma\subseteq\SL(2,
\ZZ)$ that contain a principal congruence subgroup:
\beq
\Gamma(n) = \big\{\gamma\in\SL(2,\ZZ):\gamma=\mathds{1}\!\!\!\mod n\big\}\,.
\eeq
Typical congruence subgroups encountered are
\beq\bsp
\Gamma_0(n) &\,= \left\{\begin{mymatrix}a&b\\c&d\end{mymatrix}\in\SL(2,\ZZ):c=0\!\!\!\mod n\right\}\,,\\
\Gamma_1(n) &\,= \left\{\begin{mymatrix}a&b\\c&d\end{mymatrix}\in\SL(2,\ZZ):a,d=1\!\!\!\!\mod n\textrm{~and~}c=0\!\!\!\mod n\right\}\,.
\esp\eeq
Our aim here is to indicate how these definitions extend to other groups, in particular those encountered in this paper.

Let $R$ be a commutative ring, and $\pf\subseteq R$ an ideal such that $\left|{R}/{\pf}\right|<\infty$. Let $p:\GL(m,R)\to \GL(m,R/\pf)$ the reduction mod $\pf$ (i.e., we replace each entry in a matrix by its reduction mod $\pf$). We define the \emph{principal congruence subgroup}
\beq
\GL(m,R,\pf) := \Ker(p) = \left\{\gamma\in \GL(m,R) : \gamma=\mathds{1}\!\!\!\mod \pf\right\}\,.
\eeq

Consider a group $G\subseteq \GL(m,R)$. We say that a subgroup $H\subseteq G$ is a congruence subgroup if $H$ contains $G\cap\GL(m,R,\pf)$ for some $\pf$. The principal congruence subgroups of $G$ are precisely the subgroups $G\cap\GL(m,R,\pf)$. Note that this definition captures the previous one for $\SL(2,\ZZ)$. Indeed, in that case we have $m=2$, $R=\ZZ$, and $G=\SL(2,\ZZ)$. All ideals of $\ZZ$ have the form $n\ZZ$ for some integer $n$. The principal congruence subgroups are then 
\beq
\SL(2,\ZZ) \cap \GL(2,\ZZ,n\ZZ) = \Gamma(n)\,.
\eeq

Note that a congruence subgroup necessarily has finite index. To see this, first note that it is sufficient to prove this for the principal congruence subgroups, because every subgroup that contains a subgroup of finite index has itself finite index. We have the usual exact sequence
\beq
0\to H = G\cap \GL(m,R,\pf) \to G \to G\cap \im(p) \to 0\,,
\eeq
where $G\cap \im(p) = G/H$ is a group of $m\times m$ matrices with entries in $R/\pf$.
Hence
\beq
[G:H] = \left|G\cap \im(p)\right| \le \left|\left(\faktor{R}{\pf}\right)^{m\times m}\right|=m^2\,\left|\faktor{R}{\pf}\right|<\infty\,.
\eeq

Since every congruence subgroup has finite index, it is natural to ask if the converse is also true. This is the famous \emph{congruence subgroup problem}. It is known to be false for $\SL(2,\ZZ)$, but it has a positive answer for linear groups $\GL(n,\ZZ)$ of higher rank $n\ge 3$~\cite{bams/1183526018} (see, e.g., ref.~\cite{Raghunathan:2004aa} for a recent review). In particular it holds for $\SO_0(p,q,\ZZ)$ with $p,q\ge 2$ and for $\Sp(2n,\ZZ)$. It also holds for the Hilbert modular group $\SL(2,\ord_F)$ (if $F\neq \QQ$).

\section{Some properties of the discriminant kernel}
\subsection{The index of $\cD(\Lambda)$ in $\widetilde{\OO}(\Lambda)$ and $\SO_0(\Lambda)$}
\label{app:groups_1}

In this section we present the proof of the claim from section~\ref{sec:orthogonal_groups} that the group $\cD(\Lambda)$ defined in eq.~\eqref{eq:cD_def} has finite index in both $\widetilde{\OO}(\Lambda)$ and $\SO_0(\Lambda)$. The proof is a repeated application of Lemma~\ref{eq:finite_index_intersection}.

Assume that $\Lambda$ has signature $(p,q)$.
It is well known that the real Lie group $\OO(p,q)$ has four connected components, and so $[\OO(p,q):\SO_0(p,q)]=4$.
If we apply Lemma~\ref{eq:finite_index_intersection} with $G=\OO(p,q)$, $K=\SO_0(p,q)$ and $H=\OO(\Lambda)$, we find
\beq
[\OO(\Lambda):\SO_0(\Lambda)] = [\OO(\Lambda):(\OO(\Lambda)\cap\SO_0(p,q))] \le [\OO(p,q):\SO_0(p,q)] =4\,,
\eeq
and so $\SO_0(\Lambda)$ has finite index in $\OO(p,q)$.

Next, we apply Lemma~\ref{eq:finite_index_intersection} with $G=\OO(\Lambda)$, $K=\widetilde{\OO}(\Lambda)$ and $H=\SO_0(\Lambda)$. Then
\beq
[\SO_0(\Lambda):\cD(\Lambda)] = [\SO_0(\Lambda):(\SO_0(\Lambda)\cap\widetilde{\OO}(\Lambda))] \le [\OO(\Lambda):\widetilde{\OO}(\Lambda)]<\infty\,,
\eeq
and so $\cD(\Lambda)$ has finite index in $\SO_0(\Lambda)$, as claimed. For the last inequality, see the discussion in section~\ref{sec:orthogonal_groups}.

Finally, we apply Lemma~\ref{eq:finite_index_intersection} with $G=\OO(\Lambda)$, $K=\SO_0(\Lambda)$ and $H=\widetilde{\OO}(\Lambda)$. Then
\beq
[\widetilde{\OO}(\Lambda):\cD(\Lambda)] = [\widetilde{\OO}(\Lambda):(\widetilde{\OO}(\Lambda)\cap\SO_0(\Lambda))] \le [\OO(\Lambda):\SO_0(\Lambda)]\le4\,,
\eeq
and so $\cD(\Lambda)$ has finite index in $\widetilde{\OO}(\Lambda)$, as claimed.


\subsection{Congruence subgroups}
\label{sec:congruence}

Let us consider a lattice $\T_n := H^{\oplus p}\oplus \Lambda(n)$ for some  integers $n,p\ge1$. 
Our goal is to derive a relation between the discriminant kernels of $\cD(\T_n)$ and $\cD(\T_1(n))$. Recall the definition of $\varphi_{\Delta}$ from eq.~\eqref{eq:varphi_delta_def}.
\begin{lemma}\emph{
$\varphi_{\Delta_{n,p}}\big(\cD(\T_1(n))\big)$ is a congruence subgroup of $\cD(\T_n)$ and of $\widetilde{\OO}(\T_n)$.}
\end{lemma}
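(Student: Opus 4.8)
The plan is to produce an explicit principal congruence subgroup of $\cD(\T_n)$ that lies inside $\varphi_{\Delta_{n,p}}\big(\cD(\T_1(n))\big)$, and then to upgrade this from $\cD(\T_n)$ to $\widetilde{\OO}(\T_n)$. Write $\Delta:=\Delta_{n,p}$ and work in a basis adapted to $\T_n=H^{\oplus p}\oplus\Lambda(n)$ with the $p$ hyperbolic planes pairing the first $p$ coordinates with the last $p$ (in reversed order) and $\Lambda(n)$ sitting in between, exactly as in the Gram matrices of section~\ref{sec:K3}. The first thing I would record is the identity $\Delta^{T}\Sigma_{\T_n}\Delta=n\,\Sigma_{\T_1}=\Sigma_{\T_1(n)}$, which is a one-line computation since $\Delta^{T}=\Delta$ is diagonal and scales precisely the $p$ coordinates carrying one half of each hyperbolic plane. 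Consequently $\varphi_{\Delta}$ is a group isomorphism from $\OO(\Sigma_{\T_1(n)},\RR)$ onto $\OO(\Sigma_{\T_n},\RR)$; being a homeomorphism fixing the identity, it restricts to an isomorphism $\SO_0(\Sigma_{\T_1(n)},\RR)\simeq\SO_0(\Sigma_{\T_n},\RR)$ of identity components.

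Next I would check that $\varphi_{\Delta}$ maps $\cD(\T_1(n))$ into $\cD(\T_n)$. This is not automatic, because $\varphi_\Delta$ does not preserve integrality of arbitrary matrices; the point is the special shape $\mathds 1+M\Sigma$ of elements of $\cD$. Using $\Sigma_{\T_1(n)}=\Delta\Sigma_{\T_n}\Delta$ one gets $\varphi_{\Delta}\big(\mathds 1+M\Sigma_{\T_1(n)}\big)=\mathds 1+(\Delta M\Delta)\Sigma_{\T_n}$ with $\Delta M\Delta$ again an integer matrix; since this image also lies in $\SO_0(\Sigma_{\T_n},\RR)$ and has determinant $1$, it lies in $\cD(\T_n)$. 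Run in reverse, the same manipulation gives $\varphi_{\Delta}^{-1}\big(\mathds 1+M'\Sigma_{\T_n}\big)=\mathds 1+(\Delta^{-1}M'\Delta^{-1})\Sigma_{\T_1(n)}$, and since $(\Delta^{-1}M'\Delta^{-1})_{ij}=M'_{ij}/(\Delta_{ii}\Delta_{jj})$ with $\Delta_{ii}\Delta_{jj}\in\{1,n,n^{2}\}$, the preimage lands back in $\cD(\T_1(n))$ as soon as $M'\equiv 0\bmod n^{2}$. So everything reduces to forcing this single divisibility by imposing a congruence condition on $\gamma$.

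For that step I would set $d:=|\det\Sigma_{\T_n}|$ (finite and non-zero, since $\T_n$ is non-degenerate) and $N:=n^{2}d$, and take $\gamma=\mathds 1+M'\Sigma_{\T_n}\in\cD(\T_n)$ with $\gamma\equiv\mathds 1\bmod N$, i.e.\ $M'\Sigma_{\T_n}\equiv 0\bmod N$. Right-multiplying by the integer matrix $\operatorname{adj}(\Sigma_{\T_n})$ and using $\Sigma_{\T_n}\operatorname{adj}(\Sigma_{\T_n})=\det(\Sigma_{\T_n})\mathds 1$ yields $\det(\Sigma_{\T_n})\,M'\equiv 0\bmod n^{2}d$, hence $M'\equiv 0\bmod n^{2}$; by the previous paragraph $\varphi_{\Delta}^{-1}(\gamma)\in\cD(\T_1(n))$, so $\gamma\in\varphi_{\Delta_{n,p}}\big(\cD(\T_1(n))\big)$. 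This shows $\cD(\T_n)\cap\GL(r,\ZZ,N\ZZ)\subseteq\varphi_{\Delta_{n,p}}\big(\cD(\T_1(n))\big)$, where $r=\operatorname{rank}\T_n$; combined with the inclusion $\varphi_{\Delta_{n,p}}\big(\cD(\T_1(n))\big)\subseteq\cD(\T_n)$ from the previous step, this is precisely the statement that $\varphi_{\Delta_{n,p}}\big(\cD(\T_1(n))\big)$ is a congruence subgroup of $\cD(\T_n)$ (in particular of finite index in it).

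To pass to $\widetilde{\OO}(\T_n)$ I would use that $\cD(\T_n)=\widetilde{\OO}(\T_n)\cap\SO_0(\Sigma_{\T_n},\RR)$ has finite index in $\widetilde{\OO}(\T_n)$ (appendix~\ref{app:groups_1}), together with the fact that there is an integer $N_0$ such that every $g\in\OO(\Sigma_{\T_n},\ZZ)$ with $g\equiv\mathds 1\bmod N_0$ has determinant $1$ and trivial real spinor norm, hence lies in $\SO_0(\Sigma_{\T_n},\RR)$; this gives $\widetilde{\OO}(\T_n)\cap\GL(r,\ZZ,N_0\ZZ)\subseteq\cD(\T_n)$. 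Replacing $N$ by $N':=\operatorname{lcm}(N,N_0)$, any $\gamma\in\widetilde{\OO}(\T_n)$ with $\gamma\equiv\mathds 1\bmod N'$ already lies in $\cD(\T_n)$ and is $\equiv\mathds 1\bmod N$, so the argument of the previous paragraph applies unchanged and yields $\gamma\in\varphi_{\Delta_{n,p}}\big(\cD(\T_1(n))\big)$, proving $\widetilde{\OO}(\T_n)\cap\GL(r,\ZZ,N'\ZZ)\subseteq\varphi_{\Delta_{n,p}}\big(\cD(\T_1(n))\big)$. I expect the main obstacle to be exactly this last ingredient — that membership in the identity component $\SO_0$ is a congruence condition on the arithmetic group $\OO(\Sigma_{\T_n},\ZZ)$ (equivalently, that the determinant and the real spinor norm are detected modulo a suitable integer). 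For $\operatorname{rank}\Lambda\ge 2$ it follows from the congruence subgroup property for $\SO(2,\operatorname{rank}\Lambda)$ recalled in appendix~\ref{app:groups}, while in the rank-one case $\T_n=H^{\oplus p}\oplus\langle 2n'\rangle$ one can instead read it off the explicit $\SL(2,\RR)$-description of $\widetilde{\OO}{}^+(\T_n)$ used in section~\ref{sec:A1=B1=C1}.
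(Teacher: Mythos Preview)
Your argument is correct and follows essentially the same route as the paper: the identity $\Delta^{T}\Sigma_{\T_n}\Delta=\Sigma_{\T_1(n)}$, the computation $\varphi_{\Delta}(\mathds 1+M\Sigma_{\T_1(n)})=\mathds 1+(\Delta M\Delta)\Sigma_{\T_n}$, and the cofactor/adjugate trick to produce an explicit level are exactly what the paper does. Two minor differences are worth noting. First, you take the level $N=n^{2}\lvert\det\Sigma_{\T_n}\rvert$, which already suffices, whereas the paper uses the larger $N=n^{2p}\lvert\det\Sigma_{\T_n}\rvert$; both work for the same reason. Second, you go further than the paper's own proof: the paper only explicitly establishes that $\varphi_{\Delta}\big(\cD(\T_1(n))\big)$ is a congruence subgroup of $\cD(\T_n)$ and does not separately treat $\widetilde{\OO}(\T_n)$, whereas you correctly identify that the passage to $\widetilde{\OO}(\T_n)$ requires membership in $\SO_0$ to be a congruence condition and sketch how to obtain it.
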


\begin{proof}
Assume that $\dim T_n = r$. We choose bases of $\T_1(n)$ and $\T_n$ such that the Gram matrices are respectively given by
\beq
\Sigma' = \begin{mymatrix}& &&&&&n\\
&&&&&\iddots&\\
&&&&n&&\\
&&&nS&&&\\
&&n&&&&\\
&\iddots&&&&&\\
n&&&&&&\end{mymatrix}\textrm{~~~and~~~}
\Sigma = \begin{mymatrix}& &&&&&1\\
&&&&&\iddots&\\
&&&&1&&\\
&&&nS&&&\\
&&1&&&&\\
&\iddots&&&&&\\
1&&&&&&\end{mymatrix}
\eeq
We clearly have
\beq
\Delta_{n,p}^T\Sigma\Delta_{n,p}=\Sigma'\,.
\eeq
In the following, in order to simplify the notations, we will write $\Delta$ instead of $\Delta_{n,p}$.
Let $R\in\cD(T_1(n))\subseteq \OO(T_1(n))$. We have
\beq\label{eq:proof_1}
R^T\Sigma'R=\Sigma'\,.
\eeq
A straightforward calculation shows that eq.~\eqref{eq:proof_1} implies
\beq
\varphi_{\Delta}(R)^T\Sigma\varphi_{\Delta}(R)=\Sigma\,.
\eeq
Hence, $\varphi_{\Delta}(R)$ preserves the bilinear form. The entries of this matrix, however, are not integers, but we have $\varphi_{\Delta}(R)\in\SO_0(\T_n,\ZZ[\tfrac{1}{n^p}])$. 
Our goal is now to show that the image $\im\varphi_{\Delta}$ lies in $\cD(T_n) \subseteq \SO_0(\T_n,\ZZ[\tfrac{1}{n^p}])$, and it actually defines a congruence subgroup of $\cD(T_n)$.

Let us start by showing that, for all $R\in \cD(T_1(n))$, $\varphi_{\Delta}(R)$ lies in $\cD(T_n)$.
By definition there is $M\in\ZZ^{r\times r}$ such that $R=\mathds{1}+M\Sigma'$, and so 
\beq\bsp\label{eq:proof_3}
\varphi_{\Delta}(R) &\,= \mathds{1} + \Delta M\Sigma'\Delta^{-1} \\
&\,=\mathds{1} + \Delta M\Delta^T(\Delta^{-1})^T\Sigma'\Delta^{-1}\\
&\, = \mathds{1} + \Delta M\Delta^{T}\Sigma\,,
\esp\eeq
Clearly $ \Delta M\Delta^{T}$ has only integer entries, and so $\varphi_{\Delta}(R)\in\cD(\T_n)$. The previous considerations also provide us with an explicit expression for $\im\varphi_{\Delta}$. We have
\beq
\im\varphi_{\Delta} = \left\{ \mathds{1} + \Delta M\Delta^{T}\Sigma \in \cD(\T_n) : M\in \ZZ^{r\times r}\right\}\,.
\eeq

It remains to show that $\im\varphi_{\Delta}$ is a congruence subgroup, i.e., we need to show that there is an integer  $N\ge 1$ such that $\im\varphi_{\Delta}$ contains the subgroup defined as the kernel of the projection modulo $N$ (see appendix~\ref{app:groups}),
\beq
\cK_N := \Ker\big(\cD(\T_n)\to \cD(\T_n,\ZZ_N)\big)\,,
\eeq
where we defined 
\beq
\cD(\T_n,\ZZ_N) := \left\{\mathds{1}+M\Sigma\in\SO_0(\Sigma,\ZZ_N) : M\in \ZZ_N\right\}\,.
\eeq
Let $d:=\det\Sigma$ and set $N:=|d|n^{2p}$. We have $d\mathds{1} = \Sigma^{\#}\,\Sigma$, where $\Sigma^\#$ is the matrix of cofactors of $\Sigma$. For every $R\in\cK_N$,there is $M\in \ZZ^{r\times r}$ such that (with $s:=\textrm{sign}(d)$)
\beq
R=\mathds{1}+NM =1+\big(sn^{2p}M\Sigma^\#\big) \Sigma\,.
\eeq
It is easy to see that there is $M'\in\ZZ^{r\times r}$ such that $n^{2p}M\Sigma^\#=\Delta^TM'\Delta$, and so we have
\beq
R = \mathds{1}+\Delta^T M'\Delta\Sigma\in\im \varphi_{\Delta}\,.
\eeq
Hence, $\cK_N \subseteq \im\varphi_{\Delta}$, and so $\im\varphi_{\Delta}$ is a congruence subgroup.
\end{proof}


\section{Modular forms in two variables}
\label{app:modular}

There is no commonly accepted definition of modular forms of more than one variable  for subgroups of $\SL(2,\mathbb{Z})$; see, e.g., refs.~\cite{10.1215/ijm/1258138437,bimodular,WangYang,Manschot:2021qqe,Aspman:2021evt}. Here we review the definition that is relevant for this paper.

Let $\Gamma_1$ and $\Gamma_2$ be two subgroups of finite index of $\SL(2,\mathbb{Z})$. A modular form in two variables of weight $(k_1,k_2)$ for $(\Gamma_1,\Gamma_2)$ is a function $f:\mathbb{H}\times\HH\to \mathbb{C}$ holomorphic on $\mathbb{H}\times\HH$ and at the cusps, such that for all $\gamma_i\in\Gamma_i$ (cf.,~e.g.,~ref.~\cite{10.1215/ijm/1258138437})
\beq
f(\gamma_1\cdot \tau_1,\gamma_2\cdot \tau_2) = j(\gamma_1,\tau_1)^{k_1}\,j(\gamma_2,\tau_2)^{k_2}\,f(\tau_1,\tau_2)\,.
\eeq
We denote the $\mathbb{C}$-vector space of holomorphic and meromorphic modular forms in two variables of weight $(k_1,k_2)$ for $(\Gamma_1,\Gamma_2)$ by $\cM_{k_1,k_2}(\Gamma_1,\Gamma_2)$ and $\cM_{k_1,k_2}^{\textrm{mer}}(\Gamma_1,\Gamma_2)$, respectively. 


Let us make some comment about the structure of $\cM_{k_1,k_2}^{\textrm{mer}}(\Gamma_1,\Gamma_2)$. We start by discussing the case of modular functions, $(k_1,k_2)=(0,0)$. Then $\cM_{0,0}^{\textrm{mer}}(\Gamma_1,\Gamma_2)$ is the field of meromorphic functions on $(\Gamma_1\times \Gamma_2)\setminus(\mathbb{H}\times\HH)= Y_{\Gamma_1}\times Y_{\Gamma_2}$, where $Y_{\Gamma_i}=\Gamma_i\setminus \HH$ is the modular curve for $\Gamma_i$. Since $Y_{\Gamma_i}$ is a Riemann surface, its field of meromorphic functions is generated by two generators $(x_i,y_i)$, and so $\cM_{0,0}^{\textrm{mer}}(\Gamma_1,\Gamma_2)$ is a (quotient of a) field of rational functions in four generators. 

Next let us discuss the case of $\cM_{k_1,k_2}^{\textrm{mer}}(\Gamma_1,\Gamma_2)$ with $(k_1,k_2)\neq (0,0)$. Let $f\in\cM_{k_1,k_2}^{\textrm{mer}}(\Gamma_1,\Gamma_2)$. If $g_i\in \cM_{k_i}^{\textrm{mer}}(\Gamma_i)$, $i=1,2$, then $f/(g_1g_2)\in \cM_{0,0}^{\textrm{mer}}(\Gamma_1,\Gamma_2)$, and so 
\beq
\cM_{k_1,k_2}^{\textrm{mer}}(\Gamma_1,\Gamma_2) = \cM_{0,0}^{\textrm{mer}}(\Gamma_1,\Gamma_2)\otimes \cM_{k_1}^{\textrm{mer}}(\Gamma_1) \otimes\cM_{k_2}^{\textrm{mer}}(\Gamma_2) \,.
\eeq
%
The only way this argument could fail is if $ \cM_{k_1}^{\textrm{mer}}(\Gamma_1)$ or $ \cM_{k_2}^{\textrm{mer}}(\Gamma_2)$ were trivial, while $\cM_{k_1,k_2}^{\textrm{mer}}(\Gamma_1,\Gamma_2)$ is not (because then there would be no $g_1$ and $g_2$). As we show now, this can never happen. Indeed, consider a non-zero meromorphic modular form of weight $(k_1,k_2)$ for $(\Gamma_1,\Gamma_2)$, and assume $k_1\neq 0$. Pick $\tau_0\in\mathbb{H}$ such that $\tau_0$ is not an elliptic point for $\Gamma_2$, nor such that $f(\tau_1,\tau_2)$ has no pole or zero for $\tau_2=\tau_0$ (such a $\tau_0$ always exists). Then we have, for all $\gamma_1\in\Gamma_1$,
\beq
f(\gamma_1\cdot \tau_1,\tau_0) = j(\gamma_1,\tau_1)^{k_1}\,f(\tau_1,\tau_0)\,,
\eeq
and so $f(\tau_1,\tau_0)$ is a non-zero meromorphic modular form of weight $k_1$ for $\Gamma_1$. Hence, $ \cM_{k_1}^{\textrm{mer}}(\Gamma_1)$ is not trivial. A similar argument shows that $ \cM_{k_2}^{\textrm{mer}}(\Gamma_2)$ is not trivial.

Finally, we show that for holomorphic modular forms, we simply have
\beq
\cM_{k_1,k_2}(\Gamma_1,\Gamma_2) = \cM_{k_1}(\Gamma_1) \otimes\cM_{k_2}(\Gamma_2) \,.
\eeq
From the previous argument we know that for every $f\in \cM_{k_1,k_2}(\Gamma_1,\Gamma_2)$ there is $g_i\in \cM_{k_i}(\Gamma_i)$ and a rational function $\rho(x_1,y_1,x_2,y_2) = P(x_1,y_1,x_2,y_2)/Q(x_1,y_1,x_2,y_2)$ such that ($P$ and $Q$ are polynomials)
\beq\bsp
f&\,= \rho(x_1,y_1,x_2,y_2)\,g_1\,g_2=\frac{P(x_1,y_1,x_2,y_2)}{Q(x_1,y_1,x_2,y_2)}\,g_1\,g_2\,,
\esp\eeq
where $(x_i,y_i)$ are two generators for $\cM_{k_i}(\Gamma_i)$. We now argue that the polynomial $Q(x_1,y_1,x_2,y_2)$ must take the factorised form 
\beq\label{eq:Q_fac}
Q(x_1,y_1,x_2,y_2) = Q_1(x_1,y_1)\,Q_2(x_2,y_2)\,.
\eeq
Indeed, since $f$ is holomorphic, the poles coming from the zeroes in $Q$ must be canceled by zeroes of $g_1$ and $g_2$. Assume that $Q(x_1,y_1,x_2,y_2) $ contains an irreducible factor $B(x_1,y_1,x_2,y_2)$. The pole coming from $B(x_1,y_1,x_2,y_2)=0$ cannot be cancelled by the factorised expression $g_1\,g_2$. Hence, $Q(x_1,y_1,x_2,y_2) $ must take the factorised form in eq.~\eqref{eq:Q_fac}, and we have
\beq\bsp
f(\tau_1,\tau_2) 
&\,=P(x_1,y_1,x_2,y_2)\,\frac{g_1}{Q(x_1,y_1)}\,\frac{g_2}{Q_2(x_2,y_2)} = P(x_1,y_1,x_2,y_2)\,\tilde{g}_1\,\tilde{g}_2\,,
\esp\eeq
with $\tilde{g}_i:=g_i/Q_i(x_i,y_i)\in \cM_{k_i}(\Gamma_i)$. The degree of $P$ is bounded by the requirement that $f$ is holomorphic, and we then obtain a finite linear combination of products of holomorphic modular forms.

\bibliographystyle{jhep}
\bibliography{twistedbib}

\providecommand{\href}[2]{#2}\begingroup\raggedright\begin{thebibliography}{100}

\bibitem{Mpls1}
A.~B. Goncharov, \emph{Multiple polylogarithms and mixed {Tate} motives},  May,
  2001.
\newblock 10.48550/arXiv.math/0103059.

\bibitem{Mpls2}
A.~B. Goncharov, \emph{Galois symmetries of fundamental groupoids and
  noncommutative geometry},  June, 2004.
\newblock 10.48550/arXiv.math/0208144.

\bibitem{Remiddi:1999ew}
E.~Remiddi and J.~A.~M. Vermaseren, \emph{{Harmonic polylogarithms}},
  \href{https://doi.org/10.1142/S0217751X00000367}{\emph{Int. J. Mod. Phys. A}
  {\bfseries 15} (2000) 725}
  [\href{https://arxiv.org/abs/hep-ph/9905237}{{\ttfamily hep-ph/9905237}}].

\bibitem{Sabry}
A.~Sabry, \emph{{Fourth order spectral functions for the electron propagator}},
  {\emph{Nucl. Phys.} {\bfseries 33} (1962) 401}.

\bibitem{Caffo:1998du}
M.~Caffo, H.~Czyz, S.~Laporta and E.~Remiddi, \emph{{The Master differential
  equations for the two loop sunrise selfmass amplitudes}}, {\emph{Nuovo Cim.}
  {\bfseries A111} (1998) 365}
  [\href{https://arxiv.org/abs/hep-th/9805118}{{\ttfamily hep-th/9805118}}].

\bibitem{Laporta:2004rb}
S.~Laporta and E.~Remiddi, \emph{{Analytic treatment of the two loop equal mass
  sunrise graph}},
  \href{https://doi.org/10.1016/j.nuclphysb.2004.10.044}{\emph{Nucl. Phys. B}
  {\bfseries 704} (2005) 349}
  [\href{https://arxiv.org/abs/hep-ph/0406160}{{\ttfamily hep-ph/0406160}}].

\bibitem{Laporta:2008sx}
S.~Laporta, \emph{{Analytical expressions of 3 and 4-loop sunrise Feynman
  integrals and 4-dimensional lattice integrals}},
  \href{https://doi.org/10.1142/S0217751X08042869}{\emph{Int. J. Mod. Phys. A}
  {\bfseries 23} (2008) 5007}
  [\href{https://arxiv.org/abs/0803.1007}{{\ttfamily 0803.1007}}].

\bibitem{Muller-Stach:2011qkg}
S.~M\"uller-Stach, S.~Weinzierl and R.~Zayadeh, \emph{{A Second-Order
  Differential Equation for the Two-Loop Sunrise Graph with Arbitrary Masses}},
  \href{https://doi.org/10.4310/CNTP.2012.v6.n1.a5}{\emph{Commun. Num. Theor.
  Phys.} {\bfseries 6} (2012) 203}
  [\href{https://arxiv.org/abs/1112.4360}{{\ttfamily 1112.4360}}].

\bibitem{Muller-Stach:2012tgj}
S.~M\"uller-Stach, S.~Weinzierl and R.~Zayadeh, \emph{{Picard-Fuchs equations
  for Feynman integrals}},
  \href{https://doi.org/10.1007/s00220-013-1838-3}{\emph{Commun. Math. Phys.}
  {\bfseries 326} (2014) 237}
  [\href{https://arxiv.org/abs/1212.4389}{{\ttfamily 1212.4389}}].

\bibitem{ell2}
S.~Bloch and P.~Vanhove, \emph{The elliptic dilogarithm for the sunset graph},
  \href{https://doi.org/10.1016/j.jnt.2014.09.032}{\emph{Journal of Number
  Theory} {\bfseries 148} (2015) 328}.

\bibitem{LevinRacinet}
A.~Levin and G.~Racinet, \emph{{Towards multiple elliptic polylogarithms}},
  \href{https://arxiv.org/abs/math/0703237}{{\ttfamily math/0703237}}.

\bibitem{MR1265553}
A.~Be\u{\i}linson and A.~Levin, \emph{The elliptic polylogarithm},  in
  \emph{Motives ({S}eattle, {WA}, 1991)}, vol.~55 of \emph{Proc. Sympos. Pure
  Math.}, pp.~123--190, Amer. Math. Soc., Providence, RI, (1994),
  \href{https://doi.org/10.1007/s00208-018-1645-4}{DOI}.

\bibitem{BrownLevin}
F.~Brown and A.~Levin, \emph{{Multiple Elliptic Polylogarithms}},
  \href{https://arxiv.org/abs/1110.6917}{{\ttfamily 1110.6917}}.

\bibitem{Broedel:2014vla}
J.~Broedel, C.~R. Mafra, N.~Matthes and O.~Schlotterer, \emph{{Elliptic
  multiple zeta values and one-loop superstring amplitudes}},
  \href{https://doi.org/10.1007/JHEP07(2015)112}{\emph{JHEP} {\bfseries 07}
  (2015) 112} [\href{https://arxiv.org/abs/1412.5535}{{\ttfamily 1412.5535}}].

\bibitem{ell15}
J.~Broedel, C.~Duhr, F.~Dulat and L.~Tancredi, \emph{Elliptic polylogarithms
  and iterated integrals on elliptic curves. part i: general formalism},
  \href{https://doi.org/10.1007/jhep05(2018)093}{\emph{Journal of High Energy
  Physics} {\bfseries 2018} (2018) }.

\bibitem{EnriquezZerbini}
B.~Enriquez and F.~Zerbini, \emph{Elliptic hyperlogarithms},
  \href{https://arxiv.org/abs/2307.01833}{{\ttfamily 2307.01833}}.

\bibitem{ManinModular}
Y.~I. Manin, \emph{{Iterated integrals of modular forms and noncommutative
  modular symbols}},  in \emph{Algebraic geometry and number theory}, vol.~253
  of \emph{Progr. Math.}, (Boston), pp.~565--597, Birkh\"auser Boston, 2006,
  \href{https://arxiv.org/abs/math/0502576}{{\ttfamily math/0502576}}.

\bibitem{Brown2014MultipleMV}
F.~Brown, \emph{Multiple modular values and the relative completion of the
  fundamental group of \$m\_\{1,1\}\$}, {\emph{arXiv: Number Theory} (2014) }.

\bibitem{Adams:2017ejb}
L.~Adams and S.~Weinzierl, \emph{{Feynman integrals and iterated integrals of
  modular forms}},
  \href{https://doi.org/10.4310/CNTP.2018.v12.n2.a1}{\emph{Commun. Num. Theor.
  Phys.} {\bfseries 12} (2018) 193}
  [\href{https://arxiv.org/abs/1704.08895}{{\ttfamily 1704.08895}}].

\bibitem{ell14}
J.~Broedel, C.~Duhr, F.~Dulat, B.~Penante and L.~Tancredi, \emph{Elliptic
  symbol calculus: from elliptic polylogarithms to iterated integrals of
  eisenstein series},
  \href{https://doi.org/10.1007/jhep08(2018)014}{\emph{Journal of High Energy
  Physics} {\bfseries 2018} (2018) }.

\bibitem{Bourjaily:2022bwx}
J.~L. Bourjaily et~al., \emph{{Functions Beyond Multiple Polylogarithms for
  Precision Collider Physics}},  in \emph{{Snowmass 2021}}, 3, 2022,
  \href{https://arxiv.org/abs/2203.07088}{{\ttfamily 2203.07088}}.

\bibitem{Schimmrigk:2024xid}
R.~Schimmrigk, \emph{{Special Fano geometry from Feynman integrals}},
  \href{https://arxiv.org/abs/2412.20236}{{\ttfamily 2412.20236}}.

\bibitem{Huang:2013kh}
R.~Huang and Y.~Zhang, \emph{{On Genera of Curves from High-loop Generalized
  Unitarity Cuts}}, \href{https://doi.org/10.1007/JHEP04(2013)080}{\emph{JHEP}
  {\bfseries 04} (2013) 080} [\href{https://arxiv.org/abs/1302.1023}{{\ttfamily
  1302.1023}}].

\bibitem{Hauenstein:2014mda}
J.~D. Hauenstein, R.~Huang, D.~Mehta and Y.~Zhang, \emph{{Global Structure of
  Curves from Generalized Unitarity Cut of Three-loop Diagrams}},
  \href{https://doi.org/10.1007/JHEP02(2015)136}{\emph{JHEP} {\bfseries 02}
  (2015) 136} [\href{https://arxiv.org/abs/1408.3355}{{\ttfamily 1408.3355}}].

\bibitem{Doran:2023yzu}
C.~F. Doran, A.~Harder, P.~Vanhove and E.~Pichon-Pharabod, \emph{{Motivic
  Geometry of two-Loop Feynman Integrals}},
  \href{https://doi.org/10.1093/qmath/haae015}{\emph{Quart. J. Math. Oxford
  Ser.} {\bfseries 75} (2024) 901}
  [\href{https://arxiv.org/abs/2302.14840}{{\ttfamily 2302.14840}}].

\bibitem{Abreu:2024jde}
S.~Abreu, A.~Behring, A.~J. McLeod and B.~Page, \emph{{Four-loop two-mass
  tadpoles and the $\rho$ parameter}},
  \href{https://doi.org/10.22323/1.467.0008}{\emph{PoS} {\bfseries LL2024}
  (2024) 008} [\href{https://arxiv.org/abs/2407.21700}{{\ttfamily
  2407.21700}}].

\bibitem{Enriquez_hyperelliptic}
B.~Enriquez, \emph{Flat connections on configuration spaces and braid groups of
  surfaces}, {\emph{Adv. Math.} {\bfseries 252} (2014) 204}
  [\href{https://arxiv.org/abs/1112.0864}{{\ttfamily 1112.0864}}].

\bibitem{DHoker:2023vax}
E.~D'Hoker, M.~Hidding and O.~Schlotterer, \emph{{Constructing polylogarithms
  on higher-genus Riemann surfaces}},
  \href{https://arxiv.org/abs/2306.08644}{{\ttfamily 2306.08644}}.

\bibitem{DHoker:2024ozn}
E.~D'Hoker and O.~Schlotterer, \emph{{Fay identities for polylogarithms on
  higher-genus Riemann surfaces}},
  \href{https://arxiv.org/abs/2407.11476}{{\ttfamily 2407.11476}}.

\bibitem{Baune:2024biq}
K.~Baune, J.~Broedel, E.~Im, A.~Lisitsyn and F.~Zerbini,
  \emph{{Schottky\textendash{}Kronecker forms and hyperelliptic
  polylogarithms}}, \href{https://doi.org/10.1088/1751-8121/ad8197}{\emph{J.
  Phys. A} {\bfseries 57} (2024) 445202}
  [\href{https://arxiv.org/abs/2406.10051}{{\ttfamily 2406.10051}}].

\bibitem{Baune:2024ber}
K.~Baune, J.~Broedel, E.~Im, A.~Lisitsyn and Y.~Moeckli, \emph{{Higher-genus
  Fay-like identities from meromorphic generating functions}},
  \href{https://arxiv.org/abs/2409.08208}{{\ttfamily 2409.08208}}.

\bibitem{DHoker:2025szl}
E.~D'Hoker, B.~Enriquez, O.~Schlotterer and F.~Zerbini, \emph{{Relating flat
  connections and polylogarithms on higher genus Riemann surfaces}},
  \href{https://arxiv.org/abs/2501.07640}{{\ttfamily 2501.07640}}.

\bibitem{Ichikawa}
T.~Ichikawa, \emph{{Higher genus polylogarithms on families of Riemann
  surfaces}},  \href{https://arxiv.org/abs/2209.05006}{{\ttfamily 2209.05006}}.

\bibitem{Duhr:2024uid}
C.~Duhr, F.~Porkert and S.~F. Stawinski, \emph{{Canonical differential
  equations beyond genus one}},
  \href{https://doi.org/10.1007/JHEP02(2025)014}{\emph{JHEP} {\bfseries 02}
  (2025) 014} [\href{https://arxiv.org/abs/2412.02300}{{\ttfamily
  2412.02300}}].

\bibitem{Brown:2010bw}
F.~Brown and O.~Schnetz, \emph{{A K3 in $\phi^4$}},
  \href{https://doi.org/10.1215/00127094-1644201}{\emph{Duke Math. J.}
  {\bfseries 161} (2012) 1817}
  [\href{https://arxiv.org/abs/1006.4064}{{\ttfamily 1006.4064}}].

\bibitem{Bourjaily:2018ycu}
J.~L. Bourjaily, Y.-H. He, A.~J. Mcleod, M.~Von~Hippel and M.~Wilhelm,
  \emph{{Traintracks through Calabi-Yau Manifolds: Scattering Amplitudes beyond
  Elliptic Polylogarithms}},
  \href{https://doi.org/10.1103/PhysRevLett.121.071603}{\emph{Phys. Rev. Lett.}
  {\bfseries 121} (2018) 071603}
  [\href{https://arxiv.org/abs/1805.09326}{{\ttfamily 1805.09326}}].

\bibitem{Bourjaily:2018yfy}
J.~L. Bourjaily, A.~J. McLeod, M.~von Hippel and M.~Wilhelm, \emph{{Bounded
  Collection of Feynman Integral Calabi-Yau Geometries}},
  \href{https://doi.org/10.1103/PhysRevLett.122.031601}{\emph{Phys. Rev. Lett.}
  {\bfseries 122} (2019) 031601}
  [\href{https://arxiv.org/abs/1810.07689}{{\ttfamily 1810.07689}}].

\bibitem{Bourjaily:2019hmc}
J.~L. Bourjaily, A.~J. McLeod, C.~Vergu, M.~Volk, M.~Von~Hippel and M.~Wilhelm,
  \emph{{Embedding Feynman Integral (Calabi-Yau) Geometries in Weighted
  Projective Space}},
  \href{https://doi.org/10.1007/JHEP01(2020)078}{\emph{JHEP} {\bfseries 01}
  (2020) 078} [\href{https://arxiv.org/abs/1910.01534}{{\ttfamily
  1910.01534}}].

\bibitem{Bloch:2014qca}
S.~Bloch, M.~Kerr and P.~Vanhove, \emph{{A Feynman integral via higher normal
  functions}}, \href{https://doi.org/10.1112/S0010437X15007472}{\emph{Compos.
  Math.} {\bfseries 151} (2015) 2329}
  [\href{https://arxiv.org/abs/1406.2664}{{\ttfamily 1406.2664}}].

\bibitem{MR3780269}
S.~Bloch, M.~Kerr and P.~Vanhove, \emph{Local mirror symmetry and the sunset
  {F}eynman integral},
  \href{https://doi.org/10.4310/ATMP.2017.v21.n6.a1}{\emph{Adv. Theor. Math.
  Phys.} {\bfseries 21} (2017) 1373}
  [\href{https://arxiv.org/abs/1601.08181}{{\ttfamily 1601.08181}}].

\bibitem{Klemm:2019dbm}
A.~Klemm, C.~Nega and R.~Safari, \emph{{The $l$-loop Banana Amplitude from GKZ
  Systems and relative Calabi-Yau Periods}},
  \href{https://doi.org/10.1007/JHEP04(2020)088}{\emph{JHEP} {\bfseries 04}
  (2020) 088} [\href{https://arxiv.org/abs/1912.06201}{{\ttfamily
  1912.06201}}].

\bibitem{Bonisch:2020qmm}
K.~B\"onisch, F.~Fischbach, A.~Klemm, C.~Nega and R.~Safari, \emph{{Analytic
  structure of all loop banana integrals}},
  \href{https://doi.org/10.1007/JHEP05(2021)066}{\emph{JHEP} {\bfseries 05}
  (2021) 066} [\href{https://arxiv.org/abs/2008.10574}{{\ttfamily
  2008.10574}}].

\bibitem{Bonisch:2021yfw}
K.~B\"onisch, C.~Duhr, F.~Fischbach, A.~Klemm and C.~Nega, \emph{{Feynman
  integrals in dimensional regularization and extensions of Calabi-Yau
  motives}}, \href{https://doi.org/10.1007/JHEP09(2022)156}{\emph{JHEP}
  {\bfseries 09} (2022) 156}
  [\href{https://arxiv.org/abs/2108.05310}{{\ttfamily 2108.05310}}].

\bibitem{Driesse:2024feo}
M.~Driesse, G.~U. Jakobsen, A.~Klemm, G.~Mogull, C.~Nega, J.~Plefka et~al.,
  \emph{{High-precision black hole scattering with Calabi-Yau manifolds}},
  \href{https://arxiv.org/abs/2411.11846}{{\ttfamily 2411.11846}}.

\bibitem{Klemm:2024wtd}
A.~Klemm, C.~Nega, B.~Sauer and J.~Plefka, \emph{{Calabi-Yau periods for black
  hole scattering in classical general relativity}},
  \href{https://doi.org/10.1103/PhysRevD.109.124046}{\emph{Phys. Rev. D}
  {\bfseries 109} (2024) 124046}
  [\href{https://arxiv.org/abs/2401.07899}{{\ttfamily 2401.07899}}].

\bibitem{Frellesvig:2023bbf}
H.~Frellesvig, R.~Morales and M.~Wilhelm, \emph{{Calabi-Yau Meets Gravity: A
  Calabi-Yau Threefold at Fifth Post-Minkowskian Order}},
  \href{https://doi.org/10.1103/PhysRevLett.132.201602}{\emph{Phys. Rev. Lett.}
  {\bfseries 132} (2024) 201602}
  [\href{https://arxiv.org/abs/2312.11371}{{\ttfamily 2312.11371}}].

\bibitem{Frellesvig:2024rea}
H.~Frellesvig, R.~Morales, S.~P\"ogel, S.~Weinzierl and M.~Wilhelm,
  \emph{{Calabi-Yau Feynman integrals in gravity: $\varepsilon$-factorized form
  for apparent singularities}},
  \href{https://arxiv.org/abs/2412.12057}{{\ttfamily 2412.12057}}.

\bibitem{Frellesvig:2024zph}
H.~Frellesvig, R.~Morales and M.~Wilhelm, \emph{{Classifying post-Minkowskian
  geometries for gravitational waves via loop-by-loop Baikov}},
  \href{https://doi.org/10.1007/JHEP08(2024)243}{\emph{JHEP} {\bfseries 08}
  (2024) 243} [\href{https://arxiv.org/abs/2405.17255}{{\ttfamily
  2405.17255}}].

\bibitem{Dlapa:2024cje}
C.~Dlapa, G.~K\"alin, Z.~Liu and R.~A. Porto, \emph{{Local in Time Conservative
  Binary Dynamics at Fourth Post-Minkowskian Order}},
  \href{https://doi.org/10.1103/PhysRevLett.132.221401}{\emph{Phys. Rev. Lett.}
  {\bfseries 132} (2024) 221401}
  [\href{https://arxiv.org/abs/2403.04853}{{\ttfamily 2403.04853}}].

\bibitem{Forner:2024ojj}
F.~Forner, C.~Nega and L.~Tancredi, \emph{{On the photon self-energy to three
  loops in QED}},  \href{https://arxiv.org/abs/2411.19042}{{\ttfamily
  2411.19042}}.

\bibitem{Duhr:2022pch}
C.~Duhr, A.~Klemm, F.~Loebbert, C.~Nega and F.~Porkert,
  \emph{{Yangian-Invariant Fishnet Integrals in Two Dimensions as Volumes of
  Calabi-Yau Varieties}},
  \href{https://doi.org/10.1103/PhysRevLett.130.041602}{\emph{Phys. Rev. Lett.}
  {\bfseries 130} (2023) 041602}
  [\href{https://arxiv.org/abs/2209.05291}{{\ttfamily 2209.05291}}].

\bibitem{Duhr:2023eld}
C.~Duhr, A.~Klemm, F.~Loebbert, C.~Nega and F.~Porkert, \emph{{The Basso-Dixon
  formula and Calabi-Yau geometry}},
  \href{https://doi.org/10.1007/JHEP03(2024)177}{\emph{JHEP} {\bfseries 03}
  (2024) 177} [\href{https://arxiv.org/abs/2310.08625}{{\ttfamily
  2310.08625}}].

\bibitem{Duhr:2024hjf}
C.~Duhr, A.~Klemm, F.~Loebbert, C.~Nega and F.~Porkert, \emph{{Geometry from
  integrability: multi-leg fishnet integrals in two dimensions}},
  \href{https://doi.org/10.1007/JHEP07(2024)008}{\emph{JHEP} {\bfseries 07}
  (2024) 008} [\href{https://arxiv.org/abs/2402.19034}{{\ttfamily
  2402.19034}}].

\bibitem{Pogel:2022ken}
S.~P\"ogel, X.~Wang and S.~Weinzierl, \emph{{Taming Calabi-Yau Feynman
  Integrals: The Four-Loop Equal-Mass Banana Integral}},
  \href{https://doi.org/10.1103/PhysRevLett.130.101601}{\emph{Phys. Rev. Lett.}
  {\bfseries 130} (2023) 101601}
  [\href{https://arxiv.org/abs/2211.04292}{{\ttfamily 2211.04292}}].

\bibitem{Pogel:2022vat}
S.~P\"ogel, X.~Wang and S.~Weinzierl, \emph{{Bananas of equal mass: any loop,
  any order in the dimensional regularisation parameter}},
  \href{https://doi.org/10.1007/JHEP04(2023)117}{\emph{JHEP} {\bfseries 04}
  (2023) 117} [\href{https://arxiv.org/abs/2212.08908}{{\ttfamily
  2212.08908}}].

\bibitem{Pogel:2022yat}
S.~P\"ogel, X.~Wang and S.~Weinzierl, \emph{{The three-loop equal-mass banana
  integral in \ensuremath{\varepsilon}-factorised form with meromorphic modular
  forms}}, \href{https://doi.org/10.1007/JHEP09(2022)062}{\emph{JHEP}
  {\bfseries 09} (2022) 062}
  [\href{https://arxiv.org/abs/2207.12893}{{\ttfamily 2207.12893}}].

\bibitem{Gorges:2023zgv}
L.~G\"orges, C.~Nega, L.~Tancredi and F.~J. Wagner, \emph{{On a procedure to
  derive \ensuremath{\epsilon}-factorised differential equations beyond
  polylogarithms}}, \href{https://doi.org/10.1007/JHEP07(2023)206}{\emph{JHEP}
  {\bfseries 07} (2023) 206}
  [\href{https://arxiv.org/abs/2305.14090}{{\ttfamily 2305.14090}}].

\bibitem{Primo:2017ipr}
A.~Primo and L.~Tancredi, \emph{{Maximal cuts and differential equations for
  Feynman integrals. An application to the three-loop massive banana graph}},
  \href{https://doi.org/10.1016/j.nuclphysb.2017.05.018}{\emph{Nucl. Phys. B}
  {\bfseries 921} (2017) 316}
  [\href{https://arxiv.org/abs/1704.05465}{{\ttfamily 1704.05465}}].

\bibitem{Broedel:2019kmn}
J.~Broedel, C.~Duhr, F.~Dulat, R.~Marzucca, B.~Penante and L.~Tancredi,
  \emph{{An analytic solution for the equal-mass banana graph}},
  \href{https://doi.org/10.1007/JHEP09(2019)112}{\emph{JHEP} {\bfseries 09}
  (2019) 112} [\href{https://arxiv.org/abs/1907.03787}{{\ttfamily
  1907.03787}}].

\bibitem{Broedel:2021zij}
J.~Broedel, C.~Duhr and N.~Matthes, \emph{{Meromorphic modular forms and the
  three-loop equal-mass banana integral}},
  \href{https://doi.org/10.1007/JHEP02(2022)184}{\emph{JHEP} {\bfseries 02}
  (2022) 184} [\href{https://arxiv.org/abs/2109.15251}{{\ttfamily
  2109.15251}}].

\bibitem{magnetic1}
D.~J. Broadhurst and W.~Zudilin, \emph{{A magnetic double integral}}, {\emph{J.
  Austral. Math. Soc.} {\bfseries 107} (2019) 9}.

\bibitem{magnetic3}
V.~Pasol and W.~Zudilin, \emph{Magnetic (quasi-)modular forms}, {\emph{Nagoya
  Math. J.} {\bfseries 248} (2022) 849}.

\bibitem{Bonisch:2024nru}
K.~B\"onisch, C.~Duhr and S.~Maggio, \emph{{Some conjectures around magnetic
  modular forms}},  \href{https://arxiv.org/abs/2404.04085}{{\ttfamily
  2404.04085}}.

\bibitem{ChenSymbol}
K.~T. Chen, \emph{{Iterated path integrals}}, {\emph{Bull.\ Amer.\ Math.\ Soc.}
  {\bfseries 83} (1977) 831}.

\bibitem{Henn:2013pwa}
J.~M. Henn, \emph{{Multiloop integrals in dimensional regularization made
  simple}}, \href{https://doi.org/10.1103/PhysRevLett.110.251601}{\emph{Phys.
  Rev. Lett.} {\bfseries 110} (2013) 251601}
  [\href{https://arxiv.org/abs/1304.1806}{{\ttfamily 1304.1806}}].

\bibitem{Dlapa:2022wdu}
C.~Dlapa, J.~M. Henn and F.~J. Wagner, \emph{{An algorithmic approach to
  finding canonical differential equations for elliptic Feynman integrals}},
  \href{https://doi.org/10.1007/JHEP08(2023)120}{\emph{JHEP} {\bfseries 08}
  (2023) 120} [\href{https://arxiv.org/abs/2211.16357}{{\ttfamily
  2211.16357}}].

\bibitem{bruinierbook}
J.~H. Bruinier, \emph{Borcherds Products on $O(2,l)$ and Chern Classes of
  Heegner Divisors}, Lecture Notes in Mathematics. Springer Berlin Heidelberg,
  2002.

\bibitem{orthogonal_PhD}
I.~H. Kl\"ocker, \emph{Modular Forms for the Orthogonal Group O(2,5)}, Ph.D.
  thesis, RWTH Aachen, November, 2005.

\bibitem{WANG2020107332}
H.~Wang and B.~Williams, \emph{On some free algebras of orthogonal modular
  forms},
  \href{https://doi.org/https://doi.org/10.1016/j.aim.2020.107332}{\emph{Advances
  in Mathematics} {\bfseries 373} (2020) 107332}.

\bibitem{Wang_2021}
H.~Wang, \emph{The classification of free algebras of orthogonal modular
  forms}, \href{https://doi.org/10.1112/S0010437X21007429}{\emph{Compositio
  Mathematica} {\bfseries 157} (2021) 2026}.

\bibitem{Schaps2022FourierCO}
F.~Schaps, \emph{Fourier coefficients of eisenstein series for $o^+(2, n +
  2)$},  \href{https://arxiv.org/abs/2212.09341}{{\ttfamily 2212.09341}}.

\bibitem{Schaps2023}
H.~R. Aloys~Krieg, Felix~Schaps, \emph{Eisenstein series for $o(2, n + 2)$},
  \href{https://arxiv.org/abs/2308.10709}{{\ttfamily 2308.10709}}.

\bibitem{Assaf:2022aa}
E.~Assaf, D.~Fretwell, C.~Ingalls, A.~Logan, S.~Secord and J.~Voight,
  \emph{Definite orthogonal modular forms: computations, excursions, and
  discoveries},
  \href{https://doi.org/10.1007/s40993-022-00373-2}{\emph{Research in Number
  Theory} {\bfseries 8} (2022) 70}.

\bibitem{Marzucca:2023gto}
R.~Marzucca, A.~J. McLeod, B.~Page, S.~P\"ogel and S.~Weinzierl, \emph{{Genus
  drop in hyperelliptic Feynman integrals}},
  \href{https://doi.org/10.1103/PhysRevD.109.L031901}{\emph{Phys. Rev. D}
  {\bfseries 109} (2024) L031901}
  [\href{https://arxiv.org/abs/2307.11497}{{\ttfamily 2307.11497}}].

\bibitem{Jockers:2024uan}
H.~Jockers, S.~Kotlewski, P.~Kuusela, A.~J. McLeod, S.~P\"ogel, M.~Sarve
  et~al., \emph{{A Calabi-Yau-to-curve correspondence for Feynman integrals}},
  \href{https://doi.org/10.1007/JHEP01(2025)030}{\emph{JHEP} {\bfseries 01}
  (2025) 030} [\href{https://arxiv.org/abs/2404.05785}{{\ttfamily
  2404.05785}}].

\bibitem{Bezuglov:2021jou}
M.~A. Bezuglov, \emph{{Integral representation for three-loop banana graph}},
  \href{https://doi.org/10.1103/PhysRevD.104.076017}{\emph{Phys. Rev. D}
  {\bfseries 104} (2021) 076017}
  [\href{https://arxiv.org/abs/2104.14681}{{\ttfamily 2104.14681}}].

\bibitem{Kreimer:2022fxm}
D.~Kreimer, \emph{{Bananas: multi-edge graphs and their Feynman integrals}},
  \href{https://doi.org/10.1007/s11005-023-01660-4}{\emph{Lett. Math. Phys.}
  {\bfseries 113} (2023) 38}
  [\href{https://arxiv.org/abs/2202.05490}{{\ttfamily 2202.05490}}].

\bibitem{Duhr:2022dxb}
C.~Duhr, A.~Klemm, C.~Nega and L.~Tancredi, \emph{{The ice cone family and
  iterated integrals for Calabi-Yau varieties}},
  \href{https://doi.org/10.1007/JHEP02(2023)228}{\emph{JHEP} {\bfseries 02}
  (2023) 228} [\href{https://arxiv.org/abs/2212.09550}{{\ttfamily
  2212.09550}}].

\bibitem{Vergu:2020uur}
C.~Vergu and M.~Volk, \emph{{Traintrack Calabi-Yaus from Twistor Geometry}},
  \href{https://doi.org/10.1007/JHEP07(2020)160}{\emph{JHEP} {\bfseries 07}
  (2020) 160} [\href{https://arxiv.org/abs/2005.08771}{{\ttfamily
  2005.08771}}].

\bibitem{McLeod:2023doa}
A.~J. McLeod and M.~von Hippel, \emph{{Traintracks All the Way Down}},
  \href{https://arxiv.org/abs/2306.11780}{{\ttfamily 2306.11780}}.

\bibitem{MR915841}
G.~Tian, \emph{Smoothness of the universal deformation space of compact
  {C}alabi-{Y}au manifolds and its {P}etersson-{W}eil metric},  in
  \emph{Mathematical aspects of string theory ({S}an {D}iego, {C}alif., 1986)},
  vol.~1 of \emph{Adv. Ser. Math. Phys.}, pp.~629--646, World Sci. Publishing,
  Singapore, (1987).

\bibitem{MR1027500}
A.~N. Todorov, \emph{The {W}eil-{P}etersson geometry of the moduli space of
  {${\rm SU}(n\geq 3)$} ({C}alabi-{Y}au) manifolds. {I}}, {\emph{Comm. Math.
  Phys.} {\bfseries 126} (1989) 325}.

\bibitem{Adams:2018yfj}
L.~Adams and S.~Weinzierl, \emph{{The $\varepsilon$-form of the differential
  equations for Feynman integrals in the elliptic case}},
  \href{https://doi.org/10.1016/j.physletb.2018.04.002}{\emph{Phys. Lett. B}
  {\bfseries 781} (2018) 270}
  [\href{https://arxiv.org/abs/1802.05020}{{\ttfamily 1802.05020}}].

\bibitem{Broedel:2018rwm}
J.~Broedel, C.~Duhr, F.~Dulat, B.~Penante and L.~Tancredi, \emph{{From modular
  forms to differential equations for Feynman integrals}},  in \emph{{KMPB
  Conference}: {Elliptic Integrals, Elliptic Functions and Modular Forms in
  Quantum Field Theory}}, pp.~107--131, 2019,
  \href{https://doi.org/10.1007/978-3-030-04480-0_6}{DOI}
  [\href{https://arxiv.org/abs/1807.00842}{{\ttfamily 1807.00842}}].

\bibitem{doran}
C.~Doran, \emph{{Picard--Fuchs Uniformization and Modularity of the Mirror
  Map}}, {\emph{Comm. Math. Phys.} {\bfseries 212} (2000) 625}.

\bibitem{BognerThesis}
M.~Bogner, \emph{{On differential operators of Calabi-Yau type}}, Ph.D. thesis,
  Johannes Gutenberg-Universit\"at Mainz, 2012.

\bibitem{BognerCY}
M.~Bogner, \emph{{Algebraic characterization of differential operators of
  Calabi-Yau type}},  \href{https://arxiv.org/abs/1304.5434}{{\ttfamily
  1304.5434}}.

\bibitem{Huybrechts_2016}
D.~Huybrechts, \emph{Lectures on K3 Surfaces}, Cambridge Studies in Advanced
  Mathematics. Cambridge University Press, 2016.

\bibitem{Mack:1969rr}
G.~Mack and A.~Salam, \emph{{Finite component field representations of the
  conformal group}},
  \href{https://doi.org/10.1016/0003-4916(69)90278-4}{\emph{Annals Phys.}
  {\bfseries 53} (1969) 174}.

\bibitem{Dirac:1936fq}
P.~A.~M. Dirac, \emph{{Wave equations in conformal space}},
  \href{https://doi.org/10.2307/1968455}{\emph{Annals Math.} {\bfseries 37}
  (1936) 429}.

\bibitem{Boulware:1970ty}
D.~G. Boulware, L.~S. Brown and R.~D. Peccei, \emph{{Deep-inelastic
  electroproduction and conformal symmetry}},
  \href{https://doi.org/10.1103/PhysRevD.2.293}{\emph{Phys. Rev. D} {\bfseries
  2} (1970) 293}.

\bibitem{Weinberg:2010fx}
S.~Weinberg, \emph{{Six-dimensional Methods for Four-dimensional Conformal
  Field Theories}},
  \href{https://doi.org/10.1103/PhysRevD.82.045031}{\emph{Phys. Rev. D}
  {\bfseries 82} (2010) 045031}
  [\href{https://arxiv.org/abs/1006.3480}{{\ttfamily 1006.3480}}].

\bibitem{Simmons-Duffin:2012juh}
D.~Simmons-Duffin, \emph{{Projectors, Shadows, and Conformal Blocks}},
  \href{https://doi.org/10.1007/JHEP04(2014)146}{\emph{JHEP} {\bfseries 04}
  (2014) 146} [\href{https://arxiv.org/abs/1204.3894}{{\ttfamily 1204.3894}}].

\bibitem{SB_1982-1983__25__251_0}
I.~Dolgachev, \emph{Integral quadratic forms : applications to algebraic
  geometry},  in \emph{S\'eminaire Bourbaki : volume 1982/83, expos\'es
  597-614}, no.~105-106 in Ast\'erisque, pp.~251--278, Soci\'et\'e
  math\'ematique de France, (1983).

\bibitem{Borcherds:1998aa}
R.~E. Borcherds, \emph{Automorphic forms with singularities on grassmannians},
  \href{https://doi.org/10.1007/s002220050232}{\emph{Inventiones mathematicae}
  {\bfseries 132} (1998) 491}.

\bibitem{hermitian_thesis}
A.~Wernz, \emph{{On Hermitian modular groups and modular forms}}, Ph.D. thesis,
  RWTH Aachen, June, 2019.

\bibitem{HAUFFEWASCHBUSCH202122}
A.~Hauffe-Waschb{\"u}sch and A.~Krieg, \emph{Congruence subgroups and
  orthogonal groups},
  \href{https://doi.org/https://doi.org/10.1016/j.laa.2021.01.011}{\emph{Linear
  Algebra and its Applications} {\bfseries 618} (2021) 22}.

\bibitem{Dolgachev:1996aa}
I.~V. Dolgachev, \emph{Mirror symmetry for lattice polarizedk3 surfaces},
  \href{https://doi.org/10.1007/BF02362332}{\emph{Journal of Mathematical
  Sciences} {\bfseries 81} (1996) 2599}.

\bibitem{verrill1996}
H.~A. Verrill, \emph{Root lattices and pencils of varieties},
  \href{https://doi.org/10.1215/kjm/1250518557}{\emph{J. Math. Kyoto Univ.}
  {\bfseries 36} (1996) 423}.

\bibitem{doranclingher1}
A.~Clingher and C.~Doran, \emph{{Modular invariants for lattice polarized K3
  surfaces}}, \href{https://doi.org/10.1307/mmj/1187646999}{\emph{Michigan
  Mathematical Journal} {\bfseries 55} (2007) 355 }.

\bibitem{Hosono:2002yb}
S.~Hosono, B.~H. Lian, K.~Oguiso and S.-T. Yau, \emph{{Classification of c = 2
  rational conformal field theories via the Gauss product}},
  \href{https://doi.org/10.1007/s00220-003-0927-0}{\emph{Commun. Math. Phys.}
  {\bfseries 241} (2003) 245}
  [\href{https://arxiv.org/abs/hep-th/0211230}{{\ttfamily hep-th/0211230}}].

\bibitem{doranclingher}
C.~Doran, A.~Clingher, J.~Lewis and U.~Whitcher, \emph{Normal forms, k3 surface
  moduli, and modular parametrizations}, {\emph{CRM-AMS proceedings and lecture
  notes} {\bfseries 47} (2009) 81}.

\bibitem{Bruinier2008}
J.~H. Bruinier, \emph{Hilbert modular forms and their applications},  in
  \emph{The 1-2-3 of Modular Forms: Lectures at a Summer School in
  Nordfjordeid, Norway}, K.~Ranestad, ed., (Berlin, Heidelberg), pp.~105--179,
  Springer Berlin Heidelberg, (2008),
  \href{https://doi.org/10.1007/978-3-540-74119-0_2}{DOI}.

\bibitem{hauffe-waschbusch_hilbert_2022}
A.~Hauffe-Waschb{\"u}sch and A.~Krieg, \emph{The {Hilbert} modular group and
  orthogonal groups},
  \href{https://doi.org/10.1007/s40993-022-00346-5}{\emph{Research in Number
  Theory} {\bfseries 8} (2022) 47}.

\bibitem{freitag}
E.~Freitag, \emph{{Hilbert Modular Forms}}. Springer, 1990.

\bibitem{Clingher2010LatticePK}
A.~Clingher and C.~F. Doran, \emph{Lattice polarized k3 surfaces and siegel
  modular forms}, {\emph{Advances in Mathematics} {\bfseries 231} (2010) 172}.

\bibitem{c9e0d978-3042-3cf9-a67c-48eb598b7003}
H.~Braun, \emph{Hermitian modular functions}, {\emph{Annals of Mathematics}
  {\bfseries 50} (1949) 827}.

\bibitem{Nagano:2024aa}
A.~Nagano and H.~Shiga, \emph{Geometric interpretation of hermitian modular
  forms via burkhardt invariants},
  \href{https://doi.org/10.1007/s00031-021-09681-w}{\emph{Transformation
  Groups} {\bfseries 29} (2024) 253}.

\bibitem{Mishnyakov:2023wpd}
V.~Mishnyakov, A.~Morozov and P.~Suprun, \emph{{Position space equations for
  banana Feynman diagrams}},
  \href{https://doi.org/10.1016/j.nuclphysb.2023.116245}{\emph{Nucl. Phys. B}
  {\bfseries 992} (2023) 116245}
  [\href{https://arxiv.org/abs/2303.08851}{{\ttfamily 2303.08851}}].

\bibitem{Cacciatori:2023tzp}
S.~L. Cacciatori, H.~Epstein and U.~Moschella, \emph{{Banana integrals in
  configuration space}},
  \href{https://doi.org/10.1016/j.nuclphysb.2023.116343}{\emph{Nucl. Phys. B}
  {\bfseries 995} (2023) 116343}
  [\href{https://arxiv.org/abs/2304.00624}{{\ttfamily 2304.00624}}].

\bibitem{Mishnyakov:2023sly}
V.~Mishnyakov, A.~Morozov and M.~Reva, \emph{{On factorization hierarchy of
  equations for banana Feynman integrals}},
  \href{https://doi.org/10.1016/j.nuclphysb.2024.116746}{\emph{Nucl. Phys. B}
  {\bfseries 1010} (2025) 116746}
  [\href{https://arxiv.org/abs/2311.13524}{{\ttfamily 2311.13524}}].

\bibitem{Mishnyakov:2024rmb}
V.~Mishnyakov, A.~Morozov, M.~Reva and P.~Suprun, \emph{{From equations in
  coordinate space to Picard-Fuchs and back}},
  \href{https://arxiv.org/abs/2404.03069}{{\ttfamily 2404.03069}}.

\bibitem{delaCruz:2024xit}
L.~de~la Cruz and P.~Vanhove, \emph{{Algorithm for differential equations for
  Feynman integrals in general dimensions}},
  \href{https://doi.org/10.1007/s11005-024-01832-w}{\emph{Lett. Math. Phys.}
  {\bfseries 114} (2024) 89}
  [\href{https://arxiv.org/abs/2401.09908}{{\ttfamily 2401.09908}}].

\bibitem{Kerr}
M.~Kerr, \emph{{private communication}}, .

\bibitem{ABE1973348}
Y.~Abe and S.~Katsura, \emph{Lattice green's function for the simple cubic and
  tetragonal lattices at arbitrary points},
  \href{https://doi.org/https://doi.org/10.1016/0003-4916(73)90073-0}{\emph{Annals
  of Physics} {\bfseries 75} (1973) 348}.

\bibitem{inprep}
C.~Duhr and S.~Maggio, \emph{in preparation.}, .

\bibitem{bams/1183526018}
H.~Bass, M.~Lazard and J.-P. Serre, \emph{{Sous-groupes d'indice fini dans
  $SL\left( {n,Z} \right)$}}, {\emph{Bulletin of the American Mathematical
  Society} {\bfseries 70} (1964) 385 }.

\bibitem{Raghunathan:2004aa}
M.~S. Raghunathan, \emph{The congruence subgroup problem},
  \href{https://doi.org/10.1007/BF02829437}{\emph{Proceedings Mathematical
  Sciences} {\bfseries 114} (2004) 299}.

\bibitem{10.1215/ijm/1258138437}
Y.~Yang and N.~Yui, \emph{{Differential equations satisfied by modular forms
  and $K3$ surfaces}},
  \href{https://doi.org/10.1215/ijm/1258138437}{\emph{Illinois Journal of
  Mathematics} {\bfseries 51} (2007) 667 }.

\bibitem{bimodular}
J.~Stienstra and D.~Zagier, \emph{{Bimodular forms and holomorphic anomaly
  equation}},  in \emph{{Workshop on Modular Forms and String Duality}}, Banff
  International Research Station, 2006.

\bibitem{WangYang}
L.~Wang and Y.~Yang, \emph{{Ramanujan-type $1/\pi$-series from bimodular
  forms}}, {\emph{Ramanujan J.} {\bfseries 59} (2022) 831}.

\bibitem{Manschot:2021qqe}
J.~Manschot and G.~W. Moore, \emph{{Topological correlators of $SU(2) N=2$ SYM
  on four-manifolds}},
  \href{https://doi.org/10.4310/atmp.240914021307}{\emph{Adv. Theor. Math.
  Phys.} {\bfseries 28} (2024) 407}
  [\href{https://arxiv.org/abs/2104.06492}{{\ttfamily 2104.06492}}].

\bibitem{Aspman:2021evt}
J.~Aspman, E.~Furrer and J.~Manschot, \emph{{Four flavors, triality, and
  bimodular forms}},
  \href{https://doi.org/10.1103/PhysRevD.105.025017}{\emph{Phys. Rev. D}
  {\bfseries 105} (2022) 025017}
  [\href{https://arxiv.org/abs/2110.11969}{{\ttfamily 2110.11969}}].

\end{thebibliography}\endgroup

\end{document}